\newtheorem{theorem}{Theorem}
\newtheorem{proposition}{Proposition}
\newtheorem{lemma}{Lemma}
\newtheorem{definition}{Definition}
\begin{document}

\title{Learning-based Privacy-Preserving Graph Publishing Against Sensitive Link Inference Attacks}


\author{Yucheng Wu$^\dag$, Yuncong Yang$^\dag$, Xiao Han, Leye Wang, Junjie Wu
\thanks{\textit{$^\dag$Equal contribution.}}
\thanks{Manuscript received 2 December 2024; revised 6 June 2025 and 7 July 2025; accepted 22 July 2025. Date of publication XX August 2025; date of current version XX August 2025. 
The authors thank the senior editor, associate editor, and anonymous reviewers for their guidance and constructive comments that have tremendously improved the paper.  X. Han and J. Wu are supported in part by the National Key R\&D Program of China (2023YFC3304700). X. Han is supported in part by the National Natural Science Foundation of China under grant No. 72071125 and 72031001. J. Wu is supported in part by the National Natural Science Foundation of China under grant No. 72031001, 72242101, and the Outstanding Young Scientist Program of Beijing Universities (JWZQ20240201002). \textit{(Corresponding author: Xiao Han.)}}
\thanks{Yucheng Wu and Leye Wang are with the Key Lab of High Confidence Software Technologies, Peking University, Ministry of Education, Beijing 100871, China, and the School of Computer Science, Peking University, Beijing 100871, China (e-mail: wuyucheng@stu.pku.edu.cn; leyewang@pku.edu.cn).}
\thanks{Yuncong Yang is with Key Laboratory of Interdisciplinary Research of Computation and Economics (Shanghai University of Finance and Economics), Ministry of Education, Shanghai 200433, China, and the School of Information Management and Engineering, Shanghai University of Finance and Economics, Shanghai 200433, China (e-mail: yycphd@163.sufe.edu.cn).}
\thanks{Xiao Han and Junjie Wu are with the Key Laboratory of Data Intelligence and Management, Beihang University, Ministry of Industry and Information Technology, Beijing 100191, China, and the School of Economics and Management, Beihang University, Beijing 100191, China (e-mail: xh\_bh@buaa.edu.cn; wujj@buaa.edu.cn)}
\thanks{This paper has supplementary downloadable material available at \href{http://ieeexplore.ieee.org.}{http://ieeexplore.ieee.org.}, provided by the author. The material includes proof of theoretical analysis and additional experimental results. Contact wuyucheng@stu.pku.edu.cn for further questions about this work.}
}

\markboth{IEEE Transactions on Information Forensics and Security,~Vol.~xx, No.~x, August~2025}%
{Wu \MakeLowercase{\textit{et al.}}: Learning-based Privacy-Preserving Graph Publishing Against Sensitive Link Inference Attacks}

\IEEEpubid{0000--0000/00\$00.00~\copyright~2021 IEEE}

\maketitle

\begin{abstract}
Publishing graph data is widely desired to enable a variety of structural analyses and downstream tasks. However, it also potentially poses severe privacy leakage, as attackers may leverage the released graph data to launch attacks and precisely infer private information such as the existence of hidden sensitive links in the graph. Prior studies on privacy-preserving graph data publishing relied on heuristic graph modification strategies and it is difficult to determine the graph with the optimal privacy--utility trade-off for publishing. In contrast, we propose the first \textit{privacy-preserving graph structure learning framework against sensitive link inference attacks}, named PPGSL, which can automatically learn a graph with the optimal privacy--utility trade-off. The PPGSL operates by first simulating a powerful surrogate attacker conducting sensitive link attacks on a given graph. It then trains a parameterized graph to defend against the simulated adversarial attacks while maintaining the favorable utility of the original graph. To learn the parameters of both parts of the PPGSL, we introduce a secure iterative training protocol. It can enhance privacy preservation and ensure stable convergence during the training process, as supported by the theoretical proof. Additionally, we incorporate multiple acceleration techniques to improve the efficiency of the PPGSL in handling large-scale graphs. The experimental results confirm that the PPGSL achieves state-of-the-art privacy--utility trade-off performance and effectively thwarts various sensitive link inference attacks.
\end{abstract}

\begin{IEEEkeywords}
Link Inference Attack, Privacy Protection, Graph Publishing, Graph Neural Network, Graph Learning
\end{IEEEkeywords}

\section{Introduction}
\label{sec:intro}
\IEEEPARstart{G}{raph} data are ubiquitous in our daily lives, spanning the realms of social relationships~\cite{han2015alike,fan2019graph}, communication networks~\cite{suarez2022graph}, and traffic networks~\cite{wang2021exploring}, \textit{etc}. Owing to the abundant information in graph data, it is common practice for data holders to publish them for academic and economic benefits. For example, universities and research institutes such as SNAP~\cite{snapnets} and AMiner~\cite{aminer} collect and release substantial volumes of graph data, significantly fostering the development of graph data mining; social media share their data via open APIs (\textit{e.g.}, Facebook~\cite{facebook}) for business. 
However, sharing graph data without adequate protection may result in severe privacy leakage problems, especially when encountering various inference attacks~\cite{jia2017attriinfer,zhang2020towards,duddu2020quantifying}. 
According to privacy laws, including GDPR~\cite{gdpr}, it is imperative to protect private information so that users are unwilling to be exposed to data publications~\cite{yao2019sensitive}. This leads to an urgent need for privacy protection measures when sharing graph data.

This work focuses on one of the most common inference attacks on graphs: \textit{sensitive link inference attacks}~\cite{han2023privacy,he2021stealing,yu2019target}. These attacks use various techniques on released graph data to accurately deduce hidden sensitive links between users, posing significant risks to their private information. Hidden sensitive links, such as private friendships and confidential transactions, are the links that users intentionally conceal to protect their privacy and remain invisible to the public. For example, the Facebook platform allows its users to make their partial friendships private and invisible from others, and these hidden friendships are regarded as sensitive links. Despite the invisibility of sensitive links in the published graph, they can still be inferred because of the pronounced similarity between their connected node pairs (\textit{i.e.}, sensitive node pairs).
Preliminary empirical analyses show that node pairs with hidden links exhibit significantly greater similarity across various metrics than do unlinked node pairs in Table~\ref{tbl:structure_proximity}. This suggests that attackers can easily infer the presence of hidden links by evaluating the similarity between unconnected node pairs.

\begin{table}[t]
  \caption{Structural proximity (including average shortest path length and disconnection ratio), attribute cosine similarity and embedding cosine similarity (where the embedding is produced by GraphSAGE~\cite{hamilton2017inductive}) of different types of node pairs}
  \label{tbl:structure_proximity}
  \begin{adjustbox}{max width=1\linewidth}
  \centering
  \begin{tabular}{ccccccc}
    \toprule
    Dataset &\makecell{Node pair\\ type} &\makecell{Avg. shortest \\path length} &\makecell{Discon. \\ratio} &\makecell{Attribute \\similarity} &\makecell{Embedding\\ similarity} \\
    \midrule
    \multirow{3}{*}{Cora}
    &w/ visible link  &1.000  &0.000  &0.011  &0.900 \\
    &w/ hidden link &3.206  &0.164  &0.011  &0.702 \\
    &w/o link &7.007  &0.254  &0.003  &0.499 \\
    \midrule
    \multirow{3}{*}{LastFMAsia}
    &w/ visible link  &1.000  &0.000  &0.010  &0.978 \\
    &w/ hidden link &2.480  &0.089  &0.010  &0.947 \\
    &w/o link &5.560  &0.124  &0.008  &0.820 \\
    \bottomrule
    \end{tabular}
\end{adjustbox}
\vspace{-1em}
\end{table}

\IEEEpubidadjcol

To defend against sensitive link inference attacks, a straightforward strategy is to reduce the structural proximity
of sensitive node pairs, which applies to both attributed and unattributed graphs. However, this strategy unavoidably introduces disturbance to the original graph, thereby leading to a degradation of graph data utility. Therefore, researchers strive to improve the protection of sensitive link privacy while minimizing the loss of graph utility, \textit{i.e.}, seeking a graph that strikes an optimal privacy--utility trade-off for publishing. 
Concerning the computational infeasibility of enumerating all potential graph structures, they resort to heuristic graph modification solutions. More specifically, they often randomly select a small set of node pairs from the original graph and greedily identify the best link modifications (\emph{e.g.}, adding or deleting links) among these candidate node pairs for a favorable privacy--utility trade-off~\cite{han2023privacy,yu2019target}.
Nevertheless, the main limitation of these heuristic solutions is the lack of guarantees for reaching the optimal privacy--utility trade-off, which makes them prone to becoming stuck in local optima. To overcome the limitations of existing methods, we aim to develop a solution that can derive an optimal graph structure for release within a reasonable computational time.

Recent advances in graph structure learning (GSL) have achieved remarkable success in solving graph-related tasks~\cite{zhu2021survey}. 
GSL treats the adjacency matrix of a graph as a set of continuous parameters, enabling the application of optimization techniques to refine the parameterized matrix and determine the optimal graph structure for specific objectives.
However, most existing GSL methods focus on designing objective functions that enhance robustness, smoothness, and task performance~\cite{jin2020graph,liu2022compact}, with limited attention given to privacy protection.
\textbf{This work proposes and investigates the problem of privacy-preserving graph structure learning, aiming to identify the optimal graph structure that achieves the best privacy--utility trade-off for graph publishing}. To address this problem, we face the following challenges:

\textit{Challenge 1: Differentiable and universal privacy protection objective.} It is crucial to design an appropriate objective to navigate the graph learning process toward protecting sensitive link privacy.
While existing privacy-related studies have proposed some privacy protection objectives, they are typically nondifferentiable and thus unsuitable for graph structure learning~\cite{yu2019target}.
How can we create a differentiable privacy objective that provides a supervisory signal throughout the learning process? In addition, how can we guarantee that the learned graph, guided by this objective, is robust enough to withstand a variety of sensitive link inference attacks?

\textit{Challenge 2: Optimal privacy--utility trade-off.} Solely focusing on privacy protection can severely degrade the utility of the learned graph, thereby rendering graph-based tasks ineffective. How can we design a utility objective function that ensures that the learned graph retains as much utility as possible from the original graph? Furthermore, how
can we balance privacy and utility to jointly learn a graph that maximally preserves privacy while minimizing utility loss?

\textit{Challenge 3: Effective and efficient training protocol.}
The graph learning process is complex and involves multiple objectives and many trainable parameters (especially in large graphs).
While most GSL methods adopt an alternating or end-to-end training protocol for the components corresponding to different objectives~\cite{zhu2021survey}, these protocols may expose privacy during training and lack guarantees of stable convergence.
How can we design an effective and efficient training protocol to avoid suboptimal solutions and unstable convergence while ensuring high efficiency and scalability?

By jointly considering the above challenges, this work makes the following contributions:
\begin{itemize}[leftmargin=5.5mm]
    \item To the best of our knowledge, we are the first to formalize the problem of learning a privacy-preserving graph against sensitive link inference attacks, which aims to learn graphs for publishing with the optimal privacy--utility trade-off. Our research broadens both the realms of privacy-preserving and graph structure learning studies.
    \item We propose a generic \textbf{P}rivacy-\textbf{P}reserving \textbf{G}raph \textbf{S}tructure \textbf{L}earning (\emph{i.e.}, \textbf{PPGSL}) framework, which is designed to automatically learn an optimal graph structure that protects sensitive link privacy.
    Within the PPGSL, we develop an effective and differentiable privacy protection objective derived by establishing a surrogate attack model and conducting inference on sensitive links (tackling Challenge 1).
    We also devise a utility objective that minimizes the distortion between the learned and original graphs, guiding the learning process toward an optimal privacy--utility balance alongside the privacy protection objective (tackling Challenge 2).
    To ensure stable convergence, we design a secure iterative training protocol (\emph{i.e.}, SITP) and introduce speed-up strategies to further increase scalability and efficiency (tackling Challenge 3).
    \item Rigorous theoretical analyses confirm the key benefits of PPGSL with SITP, including stable convergence, optimal privacy--utility trade-off, and universal defense capability.
\end{itemize}

We conduct extensive empirical evaluations on six real-world graph datasets across two common utility tasks (node classification and link prediction). Our results show that the PPGSL achieves a superior privacy--utility trade-off compared with baseline methods and effectively defends against various sensitive link inference attacks.\footnote{The code is available at \href{https://github.com/wuyucheng2002/PPGSL}{https://github.com/wuyucheng2002/PPGSL}, along with detailed parameter settings and experimental results of the PPGSL.}

\section{Problem Formulation}
\subsection{Preliminaries}
Let $G=(V,E)=(A,X)$ be a graph, where $V$ is the set of nodes and $E \subseteq V \times V$ represents the set of publicly observable links on the graph. $N=|V|$ is the total number of nodes. For the unweighted graph, we use $A=\{w_{ij}\} \in \{0,1\}^{N \times N}$ to denote the adjacency matrix of $G$, where the entry $w_{ij}=1$ if the link $\left \langle v_i,v_j \right \rangle \in E$; otherwise, $w_{ij}=0$. For the weighted graph, $A=\{w_{ij}\} \in \mathbb R^{N \times N}$, where $w_{ij}$ is the edge weight of $\left \langle v_i,v_j \right \rangle$. $X \in \mathbb{R}^{N \times D}$ denotes the node feature matrix, where $D$ is the dimension of the node features.
Let $G'=(V,E')=(A',X')$ denote the released graph data\footnote{Note that we do not add or remove nodes on the graph $G$ (\textit{i.e.}, $V'=V$), so we omit $V'$ here.}.

\begin{definition}[Sensitive links]
The set of sensitive links, denoted as $E_s \subseteq V \times V$, exists yet remains unobservable on graph $G$ due to privacy concerns, signifying that $E_s \cap E = \varnothing$.
\end{definition}

\begin{definition}[Sensitive node pair]
A pair of nodes $v_i$ and $v_j$ is called a sensitive node pair if $\left \langle v_i,v_j \right \rangle\in E_s$.
\end{definition}

Furthermore, we define \textit{nonexistent links} as pairs of nodes that are not connected by any link (neither a publicly observable nor sensitive link) in the original graph, denoted by $E_n$. Thus, we have $E_n \cap E_s = \varnothing$ and $E_n \cap E = \varnothing$.

\subsection{Sensitive Link Inference Attacks}
\label{sec:threat_model}
\textbf{Attackers' goal.} Attackers aim to accurately infer the existence of sensitive links $E_s$ within the graph $G$.
Taking Facebook as an example, it shares certain user attributes (\textit{e.g.}, nickname and age) and public friendship data via an open API, typically for commercial purposes. However, this data sharing also poses a threat to the privacy of users who choose to hide their private friendships from a public view. Specifically, malicious data users (\textit{i.e.}, attackers) can retrieve the public social network data via the API and leverage it to conduct inference attacks on the hidden private friendships between the privacy-sensitive users.
Leaks of private friendships not only violate user privacy but can also have more severe consequences. For example, attackers may exploit hidden links to defraud or blackmail users or execute advanced network attacks such as Phishing attacks~\cite{alkhalil2021phishing} and Sybil attacks~\cite{fong2011preventing}.

\textbf{Attackers' knowledge.} Attackers can access the released graph \( G' \) on public platforms but do not have any information on the sensitive links \( E_s \).

\textbf{Attack model.} Attackers can employ various algorithms to establish an attack model (denoted as $\mathcal{M}$), such as employing network embedding similarity to infer link existence~\cite{grover2016node2vec}. Note that the true sensitive links are unavailable to attackers, and the only available information is $G'$. Therefore, they may sample a set of existing/nonexistent links from $G'$ as positive/negative samples to construct a dataset used to train the attack model $\mathcal M$ and then conduct inference on $E_s$ via the well-trained model.
This is reasonable because sensitive pairs are expected to exhibit high similarity, such as connected node pairs (as demonstrated in Sec.~\ref{sec:intro}).
Formally, given $G'$ and the chosen attack model $\mathcal{M}$, the training process aims to minimize the expected inference error on $E_{p}'$ and $E_{n}'$, where $E_{p}' \subset E'$ denotes the existing links sampled from $G'$ and $E_{n}'$ represents the nonexistent links sampled from $G'$:
\begin{equation}
    \mathcal{M}^{*} = \arg\min_{\mathcal{M}}(error (\mathcal{M}(G'), E_{p}'\cup E_{n}'))
\end{equation}
After the training process of the attack model, attackers utilize the well-trained model $\mathcal M^{*}$ to infer the existence of $E_s$.

\subsection{Problem Definition}

Now, we define our \textbf{privacy-preserving graph structure learning problem} as follows: Given an original graph $G$ and a set of sensitive links $E_s$, our objective is to learn a graph $G'$ via a graph structure learning function $\mathcal H_\theta (\cdot)$ with a set of parameters $\theta$ such that $G'=\mathcal H_\theta (G)$. The learned graph~$G'$ should maximize the error of the well-trained attack model $\mathcal M^*$ in inferring the sensitive links $E_s$ to protect privacy; $G'$ should also minimize the data distortion for utility preservation. Mathematically, our problem can be formulated as follows:
\begin{align}
    \textbf{Privacy goal:} \ &\max_{\theta}(error(\mathcal{M}^{*}(\mathcal H_\theta (G)), E_s)),
    \label{eq:privacy_goal}\\
    \mathcal{M}^{*} = &\arg\min_{\mathcal{M}}(error (\mathcal{M}(\mathcal H_\theta (G)), E_{p}'\cup E_{n}'))\\
    \textbf{Utility goal:} \ &\min_\theta (dist(\mathcal H_\theta (G), G)) \label{eq:utility_goal}
\end{align}
where $dist(\cdot,\cdot)$ is a general distance function used to quantify the difference between the topologies of two graphs. 

\section{Proposed Framework: PPGSL}
In this section, we elucidate our proposed privacy-preserving graph structure learning (\emph{i.e.}, PPGSL) framework, which learns a graph with the optimal privacy--utility trade-off against sensitive link inference attacks. In particular, the PPGSL consists of two iterative training modules (\textit{i.e.}, \textit{surrogate attack module} and \textit{privacy-preserving graph learner module}), and its overview is depicted in Fig.~\ref{fig:overview}. The \textit{surrogate attack module} imitates an attacker conducting inference attacks in two phases. First, it constructs a surrogate attack model that is based on a given learned graph; second, it uses the attack model to infer the existence of sensitive links and induces a privacy leakage risk due to the current graph. The privacy leakage information is subsequently delivered to the other graph learner module as a supervisory signal of the learning process.
In addition, the \textit{privacy-preserving graph learner module} parameterizes the graph structure to be trainable and manages the learning process. It updates the graph structure to reduce privacy leakage while controlling the distortion to maintain data utility.
We also propose a \textit{secure iterative training protocol} to iteratively train the two modules, enhancing both the privacy level of the learned graph and the stability of the optimization process.
In addition, several \textit{speed-up strategies} are put forth to improve the scalability and efficiency of the PPGSL.

\begin{figure}[t]
  \centering
  \includegraphics[width=\linewidth]{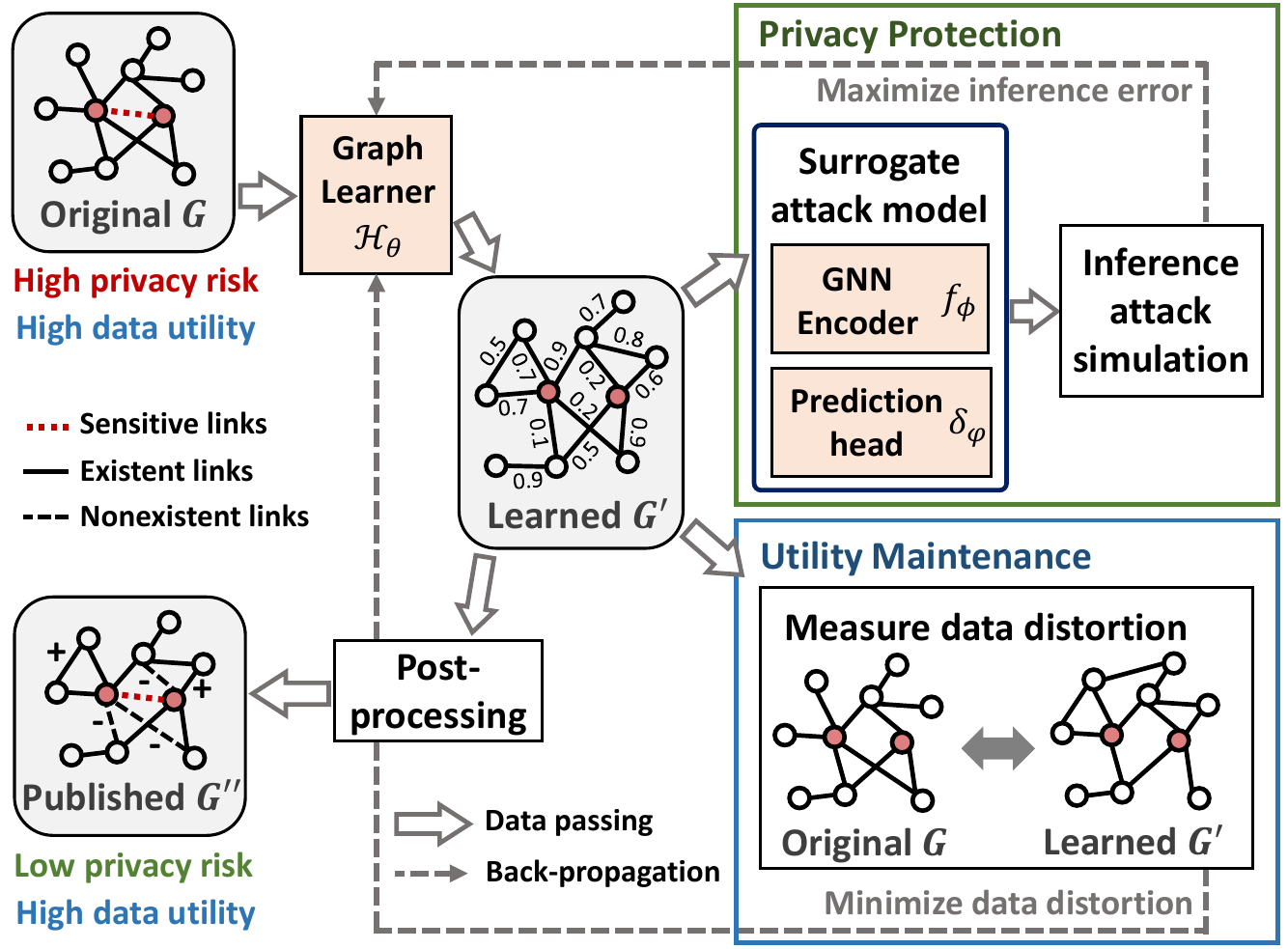}
  \vspace{-1em}
  \caption{Overview of the PPGSL. 
  }
  \vspace{-1em}
  \label{fig:overview}
\end{figure}

\subsection{Surrogate Attack Module}
\label{sec:privacy}
To guide our graph learner in producing a privacy-preserving graph, we need to provide supervisory signals that direct the learned graph structure toward minimizing privacy leakage. To achieve this goal, we first establish a surrogate attack model to simulate inference attacks on sensitive links. The privacy leakage risk induced by this surrogate attack model then serves as a supervisory signal, instructing the graph learner to produce a graph that resists inference attacks.

Previous studies have proposed the use of heuristic metric-based surrogate attack models, such as the resource allocation index~\cite{yu2019target}, for similar purposes. However, these approaches face two significant limitations. First, simple heuristics capture limited information, providing only local structure proximity details, which may result in weak surrogate attack models and imprecise supervisory signals. Second, existing heuristic metric-based surrogate attack models are nondifferentiable, making them unsuitable for providing signals that support gradient back-propagation in our learning-based graph structure optimization process.

To address these issues, we propose a machine learning (ML)-based surrogate attack model. As described in Sec.~\ref{sec:intro}, the relatively high structural proximity between sensitive node pairs significantly contributes to the ease with which attackers infer sensitive links. By leveraging ML techniques, we develop a surrogate attack model that effectively exploits structural proximity information within the graph. Furthermore, this ML-based surrogate attack model ensures that both the training and inference processes are fully differentiable.

\subsubsection{Surrogate Attack Model Architecture}
Our surrogate attack model comprises a graph neural network (GNN) encoder and a prediction head. 
Its forward propagation 
is differentiable, allowing us to utilize its information to direct the optimization of graph structure learning through gradient back-propagation. Moreover, 
GNN encoders are widely used by attackers and present effective attack performance in previous works~\cite{he2021stealing}, making our surrogate attack model highly representative.

In particular, we employ a GNN encoder (denoted as $f_\phi$) to generate informative node embeddings, as GNNs are highly expressive for representing graph data~\cite{xu2018how}. As analyzed in Sec.~\ref{sec:intro}, node pairs with sensitive links exhibit increased structural proximity, a vulnerability often exploited by attackers. The GNN effectively identifies this relationship---nodes with close structural distance will produce embeddings with high similarity as the GNN iteratively aggregates neighborhood information. Generally, any GNN model, such as GCN~\cite{kipf2016semi} or GraphSAGE~\cite{hamilton2017inductive}, can serve as the encoder.

Additionally, the prediction head (denoted as $\delta_\varphi$) determines the existence of a link on the basis of the learned node embeddings of its two ends. It uses a similarity metric such as cosine similarity or inner product~\cite{he2021stealing} or employs multilayer perceptrons (MLPs) that concatenate the embeddings of the target link's two ends as input vectors and output the similarity score of the link. The prediction head further includes a sigmoid function $\sigma (x)=1/(1+e^{-x})\in (0,1)$ to convert the similarity score into the predicted probability of a link's existence.
Specifically, a probability near 0 suggests dissimilar node embeddings and the absence of the target link, whereas a probability close to 1 indicates similar node embeddings and the presence of the target link.
The GNN encoder $f_\phi$ and prediction head $\delta_\varphi$ together form an effective and representative attack method, which substitutes $\mathcal M$ in Eq.~\ref{eq:privacy_goal}.

\subsubsection{Surrogate Attack Model Training Objective}
Essentially, our surrogate attack model is trained to infer the existence of links on the basis of node embedding similarity in the learned graph. Given a learned graph $G'=\mathcal H_\theta(G)$, we first construct the training dataset for the surrogate attack model by sampling an edge set $E_{p}'$ from $G'$ as the positive samples and a set of nonexistent edges $E_{n}'$ as the negative samples. Then, we train the surrogate attack model by making the predicted link existence probability close to the edge weight for positive samples and close to 0 for negative samples. The objective is as follows:
\begin{equation}
\label{eq:gnn_encoder}
\begin{aligned}
    \min_{\phi,\, \varphi} \mathcal{L}_{attack} \, =\, & \mathbb E_{\langle v_i,v_j\rangle \in E_{p}'}\, \textit{CE} \,(\delta_\varphi(z'_i,z'_j),w'_{ij}) \\
    &+ \mathbb E_{\langle v_i,v_j\rangle \in E_{n}'}\, \textit{CE} \,(\delta_\varphi(z'_i,z'_j),0)
\end{aligned}
\end{equation}
where $\textit{CE}\,(\cdot, \cdot)$ is the cross-entropy function. We use $Z'=f_{\phi}(G')$ to denote the learned node representations on graph $G'$, $z'_i$ denotes the representation of node $v_i$, and $w'_{ij}$ is the weight of edge $\langle v_i,v_j\rangle$. Intuitively, using Eq.~\ref{eq:gnn_encoder}, connected nodes are forced to have similar embeddings (the degree of similarity relates to the 
edge weight), whereas unconnected weights are repelled to exhibit orthometric embeddings.

\subsubsection{Inference Attack Simulation}
Once the training of the surrogate attack model is complete, it can be used to infer the existence of sensitive links. By inputting each sensitive link into the model, we obtain a predicted probability indicating its existence. Essentially, the simulation results provide an estimate of the privacy leakage of sensitive links. To precisely estimate the risk, we ensure that the surrogate attack model is adequately trained for convergence.

\subsection{Privacy-Preserving Graph Learner Module}
The privacy-preserving graph learner is a nonlinear learning model that transforms an original graph into a new privacy-preserving graph. It comprises a graph model architecture and learning objectives that address both privacy protection and utility maintenance goals. We will subsequently elaborate on these components.

\subsubsection{Graph Model Architecture}
Like various graph model architectures in recent studies for other graph learning purposes, such as graph denoising~\cite{fatemi2021slaps,liu2022towards}, we apply a straightforward full graph parameterization (FGP) approach to construct the graph model architecture for our privacy-preserving aim. FGP treats each entry of the adjacency matrix of a graph as an independent parameter and allows the learning of any adjacency matrix. We use $\mathcal H$ with parameters $\theta$ to denote the graph model architecture using FGP, \textit{i.e.}, $G'=\mathcal H_\theta(G)$.
Since $\theta$ may encompass nonsymmetrical and negative values, which are impermissible for an adjacency matrix, we refine $\theta$ to derive the learned adjacency matrix with symmetrization and truncation techniques. In particular, symmetrization is executed as $\theta^{s} = ( \theta + \theta^{\top})/2 $. Additionally, we confine the values of $\theta^{s}_{ij}$ between 0 and 1, truncating those falling below 0 or exceeding 1, $\theta^{st}_{ij}=\min\{\max\{\theta^s_{ij},0\},1\}\in [0,1]$.
In summary, $\mathcal H_\theta (G)$ can be determined as follows:
\begin{equation}
    \mathcal H_\theta (G) = \min\left\{\max\left\{\frac{\theta + \theta^{\top}}{2},0\right\},1\right\}
\end{equation}
As a result, the adjacency matrix $A'=\mathcal H_\theta (G)\in [0,1]^{N\times N}$ is continuous,\footnote{Since we only learn the graph structure and do not generate new node attributes, \textit{i.e.}, $G'=(A',X)$, we also use the expression $A'=\mathcal H_\theta(G)$ for brevity.} and we can interpret its term $w'_{ij}$ as the weight of edge $\langle v_i,v_j\rangle$ throughout the training phase. Notably, while we use FGP here for its simplicity and flexibility, the proposed PPGSL framework can accommodate alternative graph model architectures in place of FGP.

\subsubsection{Privacy Protection Objective}
\label{sec:privacy_protection_objective}
As described in Sec.~\ref{sec:privacy}, given a graph $G'$ produced by graph learner, the surrogate attack module trains an attack model and simulates inference attacks on sensitive links. The results of the attack serve as an estimate of the privacy leakage of sensitive links in the produced graph. To guide the graph learner toward reducing privacy leakage, we use the inference results of the surrogate attack model on the current learned graph as a supervisory signal. Specifically, we aim to update the graph structure so that the surrogate attack model is more likely to misclassify sensitive links as nonexistent, thereby achieving the privacy protection goal. This objective is formulated as:
\begin{equation}
\label{eq:embedding_dis}
\begin{aligned}
    \min_{\theta} \mathcal L_{priv} = \mathbb E_{\left \langle v_i,v_j \right \rangle \in E_s} \,\textit{CE} \,(\delta_\varphi(z'_i,z'_j), \, 0)
\end{aligned}
\end{equation}

\subsubsection{Utility Maintenance Objective}
It is also important to maintain the utility level of graph data for downstream applications. As delineated in Eq.~\ref{eq:utility_goal}, we measure the loss of data utility by the distortion between the original graph and the learned graph.
The original graph data are considered to possess the highest level of data utility, and any data distortion will lead to a deterioration in utility, as noted in prior studies~\cite{yang2016privcheck,jia2018attriguard,hsieh2021netfense}. In essence, more pronounced data distortion indicates greater loss in data utility.
Specifically, the data distortion can be ascertained via distance metrics, \textit{e.g.}, the Frobenius norm.
We aspire to guide the learned graph structure toward minimal distortion; hence, the utility maintenance objective function can be instantiated as follows:
\begin{equation}
\label{eq:utility_F}
    \min_\theta \mathcal L_{util} = {\Vert A - A' \Vert}^{2}_{F}
\end{equation}
Optimizing Eq.~\ref{eq:utility_F} guarantees that the learned adjacency matrix $A'$ aligns closely with the original adjacency matrix $A$, thereby preserving the utility of the learned graph data. Since this objective provides a universal gauge and performs well across various downstream tasks, it is particularly suitable for graph publishing scenarios, where downstream tasks are unknown.

\subsubsection{Overall Objective}
\label{sec:overall_objective}
To derive the overall objective for optimizing the graph learner, the privacy protection and utility maintenance objectives can be combined as follows:
\begin{equation}
\label{eq:overall_loss}
    \min_\theta \mathcal L_{learner} = \mathcal L_{priv} + \alpha \mathcal L_{util}
\end{equation}
where $\alpha\in [0, +\infty)$ functions as a hyperparameter to control the trade-off between the privacy protection effect and the data utility level. A smaller $\alpha$ indicates intensified privacy protection, albeit at the expense of reduced data utility.

\subsubsection{Postprocessing} If the original graph $G$ is unweighted, we need to recover the learned adjacency matrix $A'=\mathcal H_\theta(G)$ with continuous values into an unweighted adjacency matrix $\Tilde{A}$ with \textit{discrete} values for publishing. To this end, we employ a postprocessing method using an independent Bernoulli distribution $\mathcal{T}$ with the edge weight $A'_{ij}$ as the probability parameter to discretize the learned adjacency matrix, \textit{i.e.}, $A''_{ij}=\mathcal{T}(A'_{ij})$. Thus, the final published adjacency matrix $A''$ possesses entries of either 0 or 1.

\subsection{Secure Iterative Training Protocol}
\label{sec:training_pro}
Analytically, it is challenging to simultaneously optimize the two parameter groups in our framework's surrogate attack model and privacy-preserving graph learner. To address this, we introduce an innovative secure iterative training protocol (denoted as SITP) that iteratively trains both models. Unlike typical adversarial regularization approaches (\textit{e.g.}, AdvReg~\cite{nasr2018machine}), which alternate training between the two models until they converge simultaneously, SITP focuses on training the privacy-preserving graph learner by using a well-trained surrogate attack model at each step of the gradient descent process to update the graph learner's parameters $\theta$.
In other words, SITP iteratively retrains a surrogate attack model and performs a gradient descent update for graph structure learning until the privacy-preserving graph learner converges.
Formally, considering that the two sets of variables $\phi$ (of $f_\phi$) and $\theta$ (of $\mathcal H_\theta$) in the PPGSL\footnote{$\varphi$ (of $\delta_\varphi$) can be seen as part of $\phi$ (of $f_\phi$), as $\varphi$ and $\phi$ are updated jointly.}, SITP performs the following optimization procedure in the $t$-th iteration:
\begin{enumerate}
\item Reinitialize and optimize \(\phi\): At each iteration, reinitialize \(\phi\) randomly and optimize it to minimize \(\mathcal{L}_{attack} \) with \(\theta\) fixed, yielding
\begin{equation}
    \phi^{(t+1)} = \arg\min_{\phi} \mathcal L_{attack}(\phi, \theta^{(t)})
\end{equation}
\item Update \(\theta\): Fix \(\phi^{(t+1)}\) and perform one step of gradient descent on \(\theta\) to minimize \(\mathcal L_{learner}(\phi^{(t+1)}, \theta)\), yielding
\begin{equation}
   \theta^{(t+1)} = \theta^{(t)} - \eta \nabla_{\theta} \mathcal L_{learner}(\phi^{(t+1)}, \theta^{(t)})
\end{equation}
where \(\eta\) is the learning rate.
\end{enumerate}

Our proposed SITP offers several advantages in the privacy-preserving data publishing scenario:
\begin{itemize}[leftmargin=5.5mm]
    \item \textbf{Enhanced privacy preservation.} In the initial training phase, the surrogate attack model inevitably encodes the privacy information from the original graph, as it is trained to infer sensitive links on the basis of this information. If the attack model is continually trained without reinitialization, this privacy-related prior cannot be eliminated, leading to a significant divergence from the true attack model. Consequently, the graph learner may stray from developing a genuinely privacy-preserving graph.
    Reinitializing the attack model periodically in SITP, however, helps ensure that the learned graph achieves greater privacy preservation.

    \item \textbf{Stable convergence.} The surrogate attack model is optimized to converge in each update step, allowing the graph learner to converge more reliably. Theorem~\ref{thm:convergence} validates the convergence property of the graph learner $\mathcal{H}_\theta$ under our proposed SITP.

    \item \textbf{Efficient training process.} Compared with training the graph learner, training the surrogate attack model is relatively straightforward and converges quickly because of its fewer parameters and simpler training objectives. Therefore, retraining the attack model after each gradient descent step of the graph learner incurs tolerable computational overhead and can even reduce the overall training time compared with typical training protocols.

\end{itemize}

We also empirically compare SITP with typical training protocols in terms of privacy--utility trade-offs, convergence, and training efficiency, as shown in Sec.~\ref{sec:comparison_sitp}.

\subsection{Speed-up Strategies}
\label{sec:speedup}
The scalability and efficiency of the vanilla PPGSL may encounter limitations for two reasons:
\begin{enumerate}
\item[(1)] The space complexity of the graph learner $\mathcal H_\theta$, exacerbated by the FGP method, necessitates $\mathcal O(N^2)$ parameters for optimization, thereby posing a significant scalability issue when the graph is large.
\item[(2)] Each iteration for updating the learned graph $G'$ requires retraining the surrogate attack model until convergence to achieve accurate privacy leakage estimation, which is time-consuming. Thus, we propose two speed-up strategies to increase the scalability and efficiency of the PPGSL as follows.
\end{enumerate}

\textbf{PPGSL-sparse.}
Since graphs are typically sparse in practice (\textit{i.e.}, $|E| \ll N^2$), we can parameterize only the existing links and portions of nonexistent links in the original graph rather than parameterizing the whole adjacency matrix via the FGP method. Specifically, we define a parameter of the \textit{sampling factor} denoted by $k$, and we randomly sample a set of $(k \times |E|)$ nonexistent links. We then parameterize the edge weight of each existent and sampled nonexistent link.
By this strategy, we can substantially reduce the computational complexity of the PPGSL from $\mathcal O(N^2)$ to $\mathcal O((k+1) \times |E|)$. Moreover, a larger sampling size $k$ for nonexistent links leads to enhanced privacy protection (owing to the potential structural noise of adding edges) but with increased computational overhead.

\textbf{PPGSL-skip.}
Considering the subtle variation in the learned graph structure within each update, we can persistently utilize the surrogate attack model derived from the preceding graph structure, thereby reducing the update frequency of the surrogate attack model and reducing the overall model training time.
We define a parameter of \textit{update interval}, symbolized by $\mu$. Specifically, we retrain the surrogate attack model every $\mu$ iterations, as opposed to constant updates at each iteration. Amplifying $\mu$ may curtail time expenditures but yield less precise results.

The pseudocode of our proposed PPGSL framework with two speed-up strategies is shown in Algorithm~\ref{alg:dalia}.

\renewcommand{\algorithmicrequire}{\textbf{Input:}}  
\renewcommand{\algorithmicensure}{\textbf{Output:}} 

\begin{algorithm}[t]
    \small
  \caption{Pseudocode of the PPGSL Framework}
  \label{alg:dalia}
  \begin{algorithmic}[1]
    \Require
    the original graph $G=(A,X)$, the set of sensitive links $E_s$, hyper-parameters $\alpha$, $k$, $\mu$, the training epoch $N_1$ and learning rate $\eta_1$ of the surrogate attack model, the training epoch $N_2$ and learning rate $\eta_2$ of the privacy-preserving graph learner
    \Ensure the learned graph structure $A''$ for publishing
      \State Sample $(k\times|E|)$ nonexistent edges and construct the graph learner $\mathcal H_\theta$
      \State Initialize parameters of the graph learner $\mathcal H_\theta$
      \State Construct the surrogate attack model $f_{\phi}$, $\delta_\varphi$
      \For{$e_2 = 0$; $e_2 < N_2$; $e_2 ++$ }
      \State Generate a graph structure, $A'$ $\leftarrow$ $\mathcal H_\theta (G)$
      \If {$e_2\, \% \,\mu=0$}
      \State Initialize parameters of surrogate attack model $f_{\phi}$, $\delta_\varphi$
      \For{$e_1 = 0$; $e_1<N_1$; $e_1 ++$ }
          \State Sample sets of existent and nonexistent links from $A'$
          \State Calculate $\mathcal L_{attack}$ in Eq.~\ref{eq:gnn_encoder} based on $A'$
          \State Update parameters, $\phi$ $\leftarrow$ $\phi-\eta_1\cdot\frac{\partial \mathcal L_{attack}}{\partial \phi}$, $\varphi$ $\leftarrow$ $\varphi-\eta_1\cdot\frac{\partial \mathcal L_{attack}}{\partial \varphi}$
      \EndFor
      \EndIf
      \State Calculate $\mathcal L_{learner}$ in Eq.~\ref{eq:overall_loss} based on $f_{\phi}$, $\delta_\varphi$, $A'$, and $A$
      \State Update parameters, $\theta$ $\leftarrow$ $\theta-\eta_2\cdot\frac{\partial \mathcal L_{learner}}{\partial \theta}$
      \EndFor
     \State Generate a graph structure, $A'$ $\leftarrow$ $\mathcal H_\theta (G)$
     \State Discretize the graph structure using Bernoulli sampling function $\mathcal T$, $A''$ $\leftarrow$ $\mathcal{T}(A')$
     \State \Return $A''$
  \end{algorithmic}
\end{algorithm}

\section{Theoretical Analyses}
\label{sec:theoretical}

Rigorous theoretical analyses prove the convergence property of the PPGSL's training protocol, the optimality of its privacy--utility trade-off, and its generalizability across various attack methods.
The detailed proofs are provided in Appendix~\ref{appendix:proof}.

\begin{theorem}[\textbf{Convergence of the PPGSL}]
\label{thm:convergence}
In PPGSL training procedures, graph learner $\mathcal{H}_\theta$ converges under SITP.
\end{theorem}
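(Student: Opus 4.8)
The plan is to recast SITP as an (approximate) gradient-descent scheme on a single merit function of $\theta$ and then close the argument with a descent-lemma estimate combined with the monotone convergence theorem. Because the reinitialization step solves the inner problem exactly, $\phi^{(t+1)}=\arg\min_{\phi}\mathcal{L}_{attack}(\phi,\theta^{(t)})$ is a well-defined best-response map $\phi^{*}(\theta^{(t)})$, so I would define the merit function
\begin{equation}
F(\theta)=\mathcal{L}_{learner}\bigl(\phi^{*}(\theta),\theta\bigr),\qquad \phi^{*}(\theta)=\arg\min_{\phi}\mathcal{L}_{attack}(\phi,\theta),
\end{equation}
and aim to show that $F(\theta^{(t)})$ is eventually non-increasing and bounded below, hence convergent. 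Boundedness below is immediate, since $\mathcal{L}_{learner}=\mathcal{L}_{priv}+\alpha\mathcal{L}_{util}$ is the sum of a cross-entropy term and a squared Frobenius norm, both nonnegative.

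First I would establish the outer descent step. Assuming $\mathcal{L}_{learner}(\phi,\cdot)$ has an $L$-Lipschitz gradient in $\theta$ and taking $\eta\le 1/L$, the standard descent lemma applied to $\theta^{(t+1)}=\theta^{(t)}-\eta\nabla_{\theta}\mathcal{L}_{learner}(\phi^{(t+1)},\theta^{(t)})$ yields
\begin{equation}
\mathcal{L}_{learner}(\phi^{(t+1)},\theta^{(t+1)})\le \mathcal{L}_{learner}(\phi^{(t+1)},\theta^{(t)})-\tfrac{\eta}{2}\,\bigl\lVert\nabla_{\theta}\mathcal{L}_{learner}(\phi^{(t+1)},\theta^{(t)})\bigr\rVert^{2}.
\end{equation}
Next I would control the discrepancy introduced when the attack is re-solved at the new iterate, i.e. the gap between $\mathcal{L}_{learner}(\phi^{(t+1)},\theta^{(t+1)})$ and $F(\theta^{(t+1)})=\mathcal{L}_{learner}(\phi^{(t+2)},\theta^{(t+1)})$. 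Since $\mathcal{L}_{util}$ is independent of $\phi$, this gap equals $\mathcal{L}_{priv}(\phi^{(t+2)},\theta^{(t+1)})-\mathcal{L}_{priv}(\phi^{(t+1)},\theta^{(t+1)})$. I would bound it using (i) local Lipschitz continuity of $\phi^{*}$, which follows from the implicit function theorem under a positive-definite Hessian $\nabla_{\phi}^{2}\mathcal{L}_{attack}$ at the minimizer, and (ii) Lipschitz continuity of $\mathcal{L}_{priv}$ in $\phi$; together these give a bound of order $\eta\,\lVert\nabla_{\theta}\mathcal{L}_{learner}(\phi^{(t+1)},\theta^{(t)})\rVert$. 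Combining the two displays and applying Young's inequality, for $\eta$ small enough the quadratic descent term dominates and $F(\theta^{(t)})$ decreases by at least a positive multiple of $\lVert\nabla_{\theta}\mathcal{L}_{learner}\rVert^{2}$ per step.

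Finally I would conclude: a bounded-below, non-increasing sequence converges, so $F(\theta^{(t)})\to F^{\star}$; telescoping the per-step decrease then makes the squared gradients summable, forcing $\nabla_{\theta}\mathcal{L}_{learner}(\phi^{(t+1)},\theta^{(t)})\to 0$ and showing that $\mathcal{H}_{\theta}$ settles at a stationary point. The main obstacle is precisely the coupling term above: because the inner problem minimizes $\mathcal{L}_{attack}$ while the surviving $\phi$-dependent term in $\mathcal{L}_{learner}$ is $\mathcal{L}_{priv}$, the envelope (Danskin) theorem does \emph{not} make the hypergradient $[\nabla_{\theta}\phi^{*}]^{\top}\nabla_{\phi}\mathcal{L}_{priv}$ vanish, so re-solving $\phi$ can adversarially raise $\mathcal{L}_{priv}$. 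Keeping this correction dominated by the descent is the crux; it is what forces both the Lipschitz best-response estimate and the small-step-size requirement, and without additional structure the argument guarantees convergence only to a neighborhood of a stationary point whose radius shrinks with $\eta$ and the coupling constant.
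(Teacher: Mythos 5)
Your proposal follows the same skeleton as the paper's argument: a descent-lemma estimate for the $\theta$-step under an $L$-Lipschitz gradient with $\eta\le 1/L$, an implicit-function-theorem/strong-convexity argument to control how the inner minimizer $\phi^{*}(\theta)$ moves as $\theta$ moves (the paper's Lemma on stable updates), and a bounded-below monotone sequence to conclude. The genuine difference is how the coupling term is treated. The paper, after invoking its stability lemma, simply declares $\mathbb{E}[\mathcal{L}_{learner}(\phi^{(t+1)},\theta^{(t)})]\;\dot{=}\;\mathbb{E}[\mathcal{L}_{learner}(\phi^{(t)},\theta^{(t)})]$, i.e., it treats the change in $\mathcal{L}_{learner}$ caused by re-solving the attack model as exactly zero, and thereby obtains clean per-iteration monotonicity of the expected loss. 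You instead organize the argument around the merit function $F(\theta)=\mathcal{L}_{learner}(\phi^{*}(\theta),\theta)$, bound the re-solve gap explicitly by $O(\eta\,\lVert\nabla_{\theta}\mathcal{L}_{learner}\rVert)$ via Lipschitzness of the best response and of $\mathcal{L}_{priv}$, and absorb it with Young's inequality. This is more rigorous and correctly isolates the crux the paper glosses over: since the inner problem minimizes $\mathcal{L}_{attack}$ while the surviving $\phi$-dependence of the outer loss is through $\mathcal{L}_{priv}$, Danskin's theorem gives no cancellation, so the correction need not vanish and exact monotonicity cannot be claimed without the paper's approximation. The price is an honestly weaker conclusion: descent only while $\lVert\nabla_{\theta}\mathcal{L}_{learner}\rVert$ exceeds a threshold, hence convergence to a neighborhood of stationarity. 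Two small points: that threshold (roughly the product of the best-response and $\mathcal{L}_{priv}$ Lipschitz constants) is independent of $\eta$, so the neighborhood does not shrink as $\eta\to 0$, only the coupling constant controls it; and the paper's theorem asserts only convergence of the expected loss value, whereas your telescoping argument aims at the stronger statement that the gradient vanishes, up to the neighborhood caveat.
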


\begin{proof}[Proof Sketch]
Under the training of the PPGSL with SITP, the following inequality holds from the $t$-th iteration to the $(t+1)$-th iteration (the detailed proof is in Proposition~\ref{co:convergence1}):
\begin{equation}
\mathbb{E}[\mathcal{L}_{learner}(\phi^{(t+1)}, \theta^{(t+1)})] \leq\mathbb{E}[\mathcal{L}_{learner}(\phi^{(t)}, \theta^{(t)})]
\end{equation}
which implies that each full iteration (reinitializing and optimizing \(\phi\), then updating \(\theta\)) results in a nonincreasing expected value of the loss function \(\mathcal{L}_{learner}\).
Since the expected value of \(\mathcal{L}_{learner}\) is nonincreasing at each iteration and is assumed to be lower-bounded, by the Monotone Convergence Theorem, \(\mathbb{E}[\mathcal{L}_{learner}(\phi, \theta)]\) converges to a stable value as the iteration index $t$ increases.
Therefore, graph learner $\mathcal{H}_\theta$ converges.
\end{proof}


\begin{theorem}[\textbf{Empirical Optimal Privacy--utility Trade-off of the PPGSL}]
Minimizing the training objective \(\mathcal{L}_{learner}\) of the PPGSL achieves the empirical optimal privacy--utility trade-off at a specified utility level.
\end{theorem}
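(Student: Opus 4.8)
The plan is to recast the stated optimality as a \emph{scalarization} result from multi-objective optimization: minimizing the scalar objective $\mathcal{L}_{learner}=\mathcal{L}_{priv}+\alpha\mathcal{L}_{util}$ for a fixed weight $\alpha$ recovers a Pareto-optimal pair $(\mathcal{L}_{priv},\mathcal{L}_{util})$, so that the resulting graph attains the smallest achievable privacy leakage once its own utility-distortion level is taken as the ``specified utility level.'' Concretely, I would first define, for the converged graph learner, a privacy functional $P(\theta)=\mathcal{L}_{priv}(\phi^{\star}(\theta),\theta)$ evaluated at the well-trained surrogate attacker $\phi^{\star}(\theta)=\arg\min_{\phi}\mathcal{L}_{attack}(\phi,\theta)$ produced by the inner loop of SITP, together with the utility-distortion functional $U(\theta)=\mathcal{L}_{util}(\theta)=\Vert A-\mathcal{H}_\theta(G)\Vert_F^2$. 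The privacy--utility trade-off problem is then the constrained program ``minimize $P(\theta)$ subject to $U(\theta)\le c$'' for a utility budget $c$, and $\mathcal{L}_{learner}$ from Eq.~\ref{eq:overall_loss} is precisely its penalty (Lagrangian) surrogate with multiplier $\alpha$.

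Second, I would carry out the standard exchange argument. Let $\theta^{\star}$ minimize $\mathcal{L}_{learner}$ and set $c^{\star}=U(\theta^{\star})$. Suppose, for contradiction, that some $\theta'$ satisfies $U(\theta')\le c^{\star}$ yet $P(\theta')<P(\theta^{\star})$. Then
\begin{equation}
\mathcal{L}_{learner}(\theta')=P(\theta')+\alpha U(\theta')<P(\theta^{\star})+\alpha c^{\star}=\mathcal{L}_{learner}(\theta^{\star}),
\end{equation}
contradicting the minimality of $\theta^{\star}$. Hence $\theta^{\star}$ attains the least privacy leakage among all graph structures whose distortion does not exceed the achieved level $c^{\star}$, which is exactly the empirical optimal privacy--utility trade-off at that specified utility level. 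I would then observe that sweeping $\alpha\in[0,+\infty)$ traces out a family of such trade-off points, matching the interpretation of $\alpha$ given after Eq.~\ref{eq:overall_loss}.

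The main obstacle I anticipate is handling the coupling between $P(\theta)$ and the surrogate attacker, since $\mathcal{L}_{priv}$ is defined through $\delta_\varphi,f_\phi$ whose parameters are themselves an argmin of $\mathcal{L}_{attack}$ at the current $\theta$. I would resolve this by invoking the SITP convergence guarantee of Theorem~\ref{thm:convergence}, so that the evaluation is taken at the converged pair $(\phi^{\star},\theta^{\star})$ and $P(\theta)$ becomes a well-defined function of $\theta$ alone, keeping the exchange argument clean; care is also needed to state explicitly that the \emph{empirical} qualifier refers to $P$ and $U$ being computed on the sampled edge sets $E_p',E_n',E_s$ rather than their population counterparts. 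A secondary subtlety is that the GNN encoder and FGP parameterization render the objective nonconvex, so I would claim only the forward direction---every penalized minimizer is trade-off optimal at its own utility level---and deliberately \emph{not} assert the converse recovery of every frontier point, which would require a convexity assumption that does not hold here.
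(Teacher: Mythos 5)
Your proposal is correct, but it proves the statement by a genuinely different route than the paper. The paper (Proposition~\ref{theorem:lagrange_dual}) works in the \emph{converse} direction: it fixes a utility budget $\epsilon$, forms the Lagrangian of the constrained problem $\min_\theta \mathcal{L}_{priv}(\theta)$ s.t. $\mathcal{L}_{util}(\theta)\le\epsilon$, and uses the KKT conditions under convexity of both losses plus Slater's condition to exhibit an $\alpha^*$ (via complementary slackness, either $\alpha^*=-\nabla_\theta\mathcal{L}_{priv}(\theta^*)/\nabla_\theta\mathcal{L}_{util}(\theta^*)$ when the constraint is active, or $\alpha^*=0$ when it is slack) for which the constrained optimum also minimizes $\mathcal{L}_{learner}$. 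Your exchange argument is the \emph{forward} (soundness) direction of scalarization: any minimizer of $\mathcal{L}_{priv}+\alpha\mathcal{L}_{util}$ for a fixed $\alpha\ge 0$ is privacy-optimal among all $\theta$ with $U(\theta)\le U(\theta^\star)$, and this holds with no convexity or regularity assumptions whatsoever --- a real advantage here, since the GNN-based $\mathcal{L}_{priv}$ and the FGP parameterization make the paper's convexity and Slater hypotheses idealized rather than actually satisfied. What you give up is exactly what you flagged: your argument certifies optimality only at the utility level $c^\star=U(\theta^\star)$ that the minimizer happens to attain, so a user cannot pre-specify $\epsilon$ and be guaranteed a matching $\alpha$; the paper's duality argument provides that completeness (every frontier point is reachable by some $\alpha^*$), but only under assumptions the actual model violates. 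Your explicit treatment of the attacker coupling via $\phi^\star(\theta)$ and Theorem~\ref{thm:convergence}, and of the empirical (sampled $E_p',E_n',E_s$) qualifier, is also more careful than the paper, which treats $\mathcal{L}_{priv}(\theta)$ as a clean function of $\theta$ without comment; the one residual soft spot in your write-up is that $\arg\min_\phi\mathcal{L}_{attack}(\phi,\theta)$ may be set-valued in the nonconvex attack training, so $P(\theta)$ is only well defined up to a selection, but this affects the paper's formulation equally.
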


\begin{proof}[Proof Sketch]
Relying on the Lagrangian dual method, we can prove that minimizing \(\mathcal{L}_{learner}\) can be reinterpreted as minimizing privacy loss while adhering to a given utility constraint (the detailed proof is in Proposition~\ref{theorem:lagrange_dual}). This ensures that the empirical optimal privacy--utility trade-off is attained.
\end{proof}

\begin{theorem}[\textbf{Generalized Privacy Protection Performance of the PPGSL}]
The PPGSL provides a lower bound of the privacy protection level on sensitive links in a graph regardless of the sensitive link inference model adopted by attackers.
The inference error probability $p(E_s\neq \mathcal M^*(G'))$ of any inference model $\mathcal M^*$ that attempts to infer \( E_s \) from the published graph \( G' \) is lower-bounded by:
\begin{equation}
p(E_s\neq \mathcal M^*(G'))\ge \frac{H(E_s)-I(G';E_s)-1}{\log|\mathcal{E}_s|}
\end{equation}
where $|\mathcal{E}_s|$ denotes the cardinality of the set of possible values of \( E_s \), $I(\cdot,\cdot)$ represents mutual information, and $H(\cdot)$ denotes information entropy. Furthermore, this lower bound increases during the PPGSL training process.
\end{theorem}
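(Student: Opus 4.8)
The plan is to establish the claimed lower bound via Fano's inequality, treating the sensitive link set $E_s$ as a hidden random variable that an attacker attempts to recover from the published graph $G'$. First I would set up the information-theoretic framing: regard $E_s$ as a random variable taking values in a finite set of cardinality $|\mathcal{E}_s|$, and view $\mathcal{M}^*(G')$ as an estimator $\hat{E}_s$ of $E_s$ computed from the observed $G'$. Since any attacker's inference model $\mathcal{M}^*$ is a (possibly randomized) function of $G'$ alone, the Markov chain $E_s \to G' \to \hat{E}_s$ holds, which is the structural prerequisite for applying Fano-type bounds and also guarantees, by the data-processing inequality, that $I(G'; \hat{E}_s) \le I(G'; E_s)$.

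The central step is Fano's inequality itself. Letting $P_e = p(E_s \neq \mathcal{M}^*(G'))$ denote the inference error probability, Fano's inequality gives
\begin{equation}
H(E_s \mid \hat{E}_s) \le H(P_e) + P_e \log(|\mathcal{E}_s| - 1) \le 1 + P_e \log |\mathcal{E}_s|,
\end{equation}
where $H(P_e) \le 1$ bounds the binary entropy and we relax $\log(|\mathcal{E}_s|-1)$ to $\log|\mathcal{E}_s|$. I would then rewrite the conditional entropy using the identity $H(E_s \mid \hat{E}_s) = H(E_s) - I(E_s; \hat{E}_s)$ and invoke the data-processing inequality along the chain $E_s \to G' \to \hat{E}_s$ to replace $I(E_s; \hat{E}_s)$ by the larger quantity $I(G'; E_s)$, yielding $H(E_s) - I(G'; E_s) \le 1 + P_e \log|\mathcal{E}_s|$. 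Rearranging for $P_e$ produces exactly the stated bound
\begin{equation}
p(E_s \neq \mathcal{M}^*(G')) \ge \frac{H(E_s) - I(G'; E_s) - 1}{\log |\mathcal{E}_s|}.
\end{equation}
Crucially, this bound is independent of the particular model $\mathcal{M}^*$, since Fano's inequality and the data-processing step hold for every estimator factoring through $G'$; this is precisely what delivers the claimed universality across attack methods.

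For the second assertion—that the lower bound increases during training—I would argue monotonicity through the mutual information term $I(G'; E_s)$. With $H(E_s)$ and $|\mathcal{E}_s|$ fixed by the problem instance, the bound is a decreasing function of $I(G'; E_s)$, so the claim reduces to showing that the PPGSL training process decreases $I(G'; E_s)$, i.e., that successive published graphs leak progressively less information about the sensitive links. I would connect this to the privacy protection objective $\mathcal{L}_{priv}$ of Eq.~\ref{eq:embedding_dis}: as training drives the surrogate attack model toward misclassifying sensitive links as nonexistent, the statistical dependence between $G'$ and $E_s$ is attenuated, and I would formalize the monotone decrease of $I(G'; E_s)$ across iterations (or at least across the trajectory $\theta^{(t)}$) to conclude the lower bound rises.

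The main obstacle I anticipate is this second part rather than the Fano argument, which is essentially standard. Establishing a rigorous monotone relationship between the empirical loss $\mathcal{L}_{priv}$ and the information-theoretic quantity $I(G'; E_s)$ is delicate: the loss operates on the surrogate model's output probabilities, whereas $I(G'; E_s)$ is a property of the true joint distribution of the published graph and the sensitive links. Bridging this gap likely requires either a surrogate-to-truth approximation assumption (that the trained attack model faithfully captures the dependence between $G'$ and $E_s$) or a variational bound relating the cross-entropy loss to mutual information. I would therefore isolate a clean sufficient condition under which minimizing $\mathcal{L}_{priv}$ provably reduces $I(G'; E_s)$, and state the monotonicity of the lower bound as holding under that condition, keeping the universal Fano bound itself fully unconditional.
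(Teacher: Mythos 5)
Your Fano argument is correct and is essentially the paper's own route for the bound itself: treat $E_s$ as a random variable, note the Markov chain $E_s \to G' \to \hat E_s$ for any attacker $\hat E_s=\mathcal M^*(G')$, apply Fano's inequality, rewrite $H(E_s\mid \hat E_s)=H(E_s)-I(E_s;\hat E_s)$, and use data processing to replace $I(E_s;\hat E_s)$ by the larger $I(G';E_s)$; this gives the stated inequality uniformly over attack models. One small slip: the data-processing inequality on this chain yields $I(E_s;\hat E_s)\le I(E_s;G')$, not $I(G';\hat E_s)\le I(G';E_s)$ as you first wrote (the latter can fail, since $\hat E_s$ is a function of $G'$); your actual derivation uses the correct inequality, so nothing breaks.

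The genuine gap is the second claim --- that the bound increases during training --- which you correctly reduce to monotone decrease of $I(G';E_s)$ but then defer to an unspecified ``sufficient condition.'' The paper supplies a concrete mechanism, and it is more structured than a direct link between $\mathcal L_{priv}$ and $I(G';E_s)$: everything is routed through the surrogate embedding $Z'=f_\phi(G')$. Lemma~\ref{le:intermediate_goal} shows that minimizing $\mathcal L_{attack}$ maximizes a variational lower bound $I_{\textit{vLB}}$ on $I(G';Z')$ (cross-entropy equals the bound up to the constant $H(w'_{ij})$, tight when the head $q_\varphi$ matches the true posterior), so SITP's retraining of the attack model to convergence at each step pins $I(G';Z')$ at its maximum. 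Lemma~\ref{le:privacy_goal} shows that $\mathcal L_{priv}$ minimizes a CLUB-style variational \emph{upper} bound $I_{\textit{vCLUB}}$ on $I(Z';E_s)$, with the $\min_\theta\max_\varphi$ structure again justified by SITP. Finally, Lemma~\ref{the:mutual_info1} transfers the decrease from the embedding space back to the published graph: using $I(Z';E_s)=I(Z';E_s\mid G')+I(Z';E_s;G')$ and $I(E_s;G')=I(G';E_s\mid Z')+I(Z';E_s;G')$, it argues that with $I(G';Z')$ held at its maximum, a decrease in $I(Z';E_s)$ forces $I(G';E_s)$ to decrease (Proposition~\ref{co:fano} then assembles the pieces). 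So the tool you anticipated --- variational bounds linking cross-entropy to mutual information --- is indeed the paper's tool, but your plan misses that $\mathcal L_{priv}$ controls $I(Z';E_s)$ rather than $I(G';E_s)$ directly, and hence misses the required transfer lemma and the essential role of SITP's retrain-to-convergence in making both variational identifications valid. (Note also that the paper's transfer lemma itself rests on stated assumptions, e.g., that the maximal value of $I(G';Z')$ is unchanged across steps; completing your outline at comparable rigor means reproducing this chain of lemmas, not finding a one-line condition.)
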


\begin{proof}[Proof Sketch]
Motivated by prior studies that confirmed a relationship between the inference success of any algorithm and mutual information measures~\cite{du2012privacy,han2023hyobscure}, we establish a connection between PPGSL and mutual information. Let \( I(\cdot, \cdot) \) denote mutual information; then, the privacy goal of our problem can be reformulated as $\min_{\theta} I(G'; E_s)$.
We aim for the learned graph \( G' \) to contain as little mutual information about sensitive links as possible. In the PPGSL training process, the mutual information $I(G';E_s)$ decreases (the detailed proof is in Proposition~\ref{co:fano}).

According to Fano's inequality, the inference error probability $p(E_s\neq \mathcal M^*(G'))$ of any inference model $\mathcal M^*$ that attempts to infer \( E_s \) from \( G' \) is lower-bounded by
$\frac{H(E_s)-I(G';E_s)-1}{\log|\mathcal{E}_s|}$, where $H(E_s)$ and $\log|\mathcal{E}_s|$ are constants for a given private information variable $E_s$.
Consequently, as the mutual information \( I(G'; E_s) \) decreases, the lower bound on the inference error probability \( p(E_s \neq \mathcal{M}^*(G')) \) increases, where $\mathcal M^*$ is any sensitive link inference model trained on the published graph $G'$. Therefore, the PPGSL inherently increases the inference error probability by maximizing this lower bound, regardless of the inference algorithm used.
\end{proof}

\begin{table}[t]
\footnotesize
  \caption{Statistics of the datasets.}
  \label{tbl:datasetstats}
  \centering
  \begin{tabular}{lcccc}
    \toprule
    \textbf{Dataset} & \textbf{\#Nodes} & \textbf{\#Links} & \textbf{\#Features} & \textbf{\#Labels}  \\
    \midrule
    PolBlogs & 1,490 & 19,025 & 0 & 2 \\
    LastFMAsia & 7,624 & 27,806 & 128 & 18\\
    DeezerEurope & 28,281 & 185,504 & 128 & 2 \\
    Cora & 2,708 & 10,556 & 1,433 & 7 \\
    CiteSeer & 3,327 & 9,104 & 3,703 & 6 \\
    PubMed & 19,717 & 88,648 & 500 & 3 \\
    \bottomrule
  \end{tabular}
\vspace{-1em}
\end{table}

\section{Experiments}
\subsection{Experimental Setup}
\subsubsection{Datasets}
\label{sec:datasets}
We conduct experiments on six commonly used real-life datasets\footnote{All datasets are available from the PyTorch Geometric libraries: \href{https://pytorch-geometric.readthedocs.io/en/latest/modules/datasets.html}{https://pytorch-geometric.readthedocs.io/en/latest/modules/datasets.html}.}, including three social networks: \textit{PolBlogs}, \textit{LastFMAsia} and \textit{DeezerEurope}; and three citation graphs: \textit{Cora}, \textit{CiteSeer} and \textit{PubMed}.
Their statistics are summarized in Table~\ref{tbl:datasetstats}.
The dataset descriptions are as follows:
\begin{itemize}[leftmargin=5.5mm]
    \item \textbf{PolBlogs}~\cite{adamic2005political} is a relationship network of political blogs, where nodes represent blogs and edges signify links extracted from their front pages. Each node is labeled as either liberal or conservative.
    \item \textbf{LastFMAsia}~\cite{rozemberczki2020characteristic} is a social network of users from Asian countries on the music service LastFM. Nodes represent users, edges denote friendships, and node labels are users' home countries, with features based on preferred artists.
    \item \textbf{DeezerEurope}~\cite{rozemberczki2020characteristic} is a social network of European Deezer users. Nodes represent users, and links represent mutual follower relationships. Node labels stand for gender, and features are derived from favorite artists.
    \item \textbf{Cora}, \textbf{Citeseer} and \textbf{PubMed}~\cite{yang2016revisiting} are citation networks, where nodes stand for documents and edges denote citations. Each node has a bag-of-words feature vector and is labeled by the document category.
\end{itemize}

Following prior works~\cite{yu2019target,han2023privacy}, we arbitrarily mask 10\% of existing links within each graph as sensitive links and randomly select an equivalent quantity of nonexistent links as negative samples for evaluation.

\subsubsection{Sensitive Link Inference Attacks and Privacy Metric}
\label{sec:inference_attack}

\begin{table}[t]
\footnotesize
  \caption{Traditional node proximity metrics for link inference attacks. $\mathcal N(x)$: the neighbor set of vertex $x$; \textit{Order}: the maximum hop of neighbors needed to calculate the proximity score.}
  \label{tbl:proximity_metrics}
  \centering
  \begin{tabular}{lcc}
    \toprule
    \textbf{Metrics} & \textbf{Formula} & \textbf{Order}  \\
    \midrule
    Common Neighbor (CN) & $| \mathcal N (x) \cap \mathcal N (y) |$ & first \\
    Adamic-Adar (AA) & $ \sum_{z \in {\mathcal N (x) \cap \mathcal N (y)}} \frac{1}{\log {|\mathcal N (z)|}}$ & second \\
    Resource Allocation (RA) & $\sum_{z \in {\mathcal N (x) \cap \mathcal N (y)}} \frac{1}{|\mathcal N (z)|}$ & second \\
    \bottomrule
  \end{tabular}
\vspace{-1em}
\end{table}

To assess the privacy protection efficacy of the PPGSL, we employ diverse methods to conduct various sensitive link inference attacks on its generated published graph. A lower inference success rate indicates stronger privacy protection. We consider eight distinct attacks that fall into the following two categories:
\begin{itemize}[leftmargin=5.5mm]
    \item \textbf{Structure-based attacks:} Common Neighbor (CN), Adamic-Adar (AA), Resource Allocation (RA), and SEAL~\cite{zhang2018link}.
    These methods calculate node proximity scores as probabilities for the existence of sensitive links. The three traditional metrics are presented in Table~\ref{tbl:proximity_metrics}. SEAL extracts local subgraphs around each target link and trains a link prediction model based on them. We implement SEAL with node attributes disregarded and the hop number set to 2.

    \item \textbf{Embedding-based attacks:} node2vec~\cite{grover2016node2vec} with cosine similarity (N2V+sim), GAE~\cite{kipf2016variational} with cosine similarity (GAE+sim), and combinations with LinearSVC (N2V+ML, GAE+ML). These methods first obtain node embeddings through graph representation learning and then compute the probability of sensitive link existence via similarity metrics (\textit{e.g.}, cosine similarity~\cite{he2021stealing,han2023privacy}) or classifiers (\textit{e.g.}, LinearSVC~\cite{sun2019infograph,you2020graph}). For classifiers, the input features are derived from concatenating the embeddings of two nodes, using published graph links as positive samples and an equal number of unconnected node pairs as negative samples for training.
\end{itemize}

In alignment with previous works~\cite{wu2022linkteller,han2023privacy}, we adopt the AUC (area under the ROC curve) as the privacy metric.
A higher AUC indicates that it is easier for an attacker to infer sensitive links, leading to higher privacy risks and worse privacy protection effect.

\subsubsection{Utility Evaluation Tasks and Metrics}
We leverage two widely used graph-based tasks, \emph{i.e.,} link prediction and node classification, to evaluate the utility of the published graph.

\textbf{Link prediction} attempts to predict the missing or potential links on a graph~\cite{lu2011link,zhang2018link}. We randomly mask 10\% of the real links from the original graph as positive testing links and sample the same number of nonexistent links as negative testing links. The set of testing links is nonoverlapping with the set of sensitive links.
For a published graph, we train a GNN model via unsupervised loss to procure node embeddings~\cite{hamilton2017inductive} and then compute embedding similarity to predict the existence of testing links.
We employ the AUC as the performance metric of link prediction.

\textbf{Node classification} aims to correctly classify the unlabeled nodes on the basis of the proportion of labeled nodes~\cite{he2021stealing, wu2022linkteller}. We randomly split the nodes of the original graph, allotting 30\% of the nodes for training and reserving the remaining 70\% for testing. Given a published graph, we train a two-layer GCN model~\cite{kipf2016semi} in a semisupervised manner~\cite{kipf2016semi} to predict node labels.
The F1 score is used to measure the performance of node classification.

\subsubsection{Baselines} We compare the privacy--utility trade-off performance of PPGSL with that of seven baselines: 

\textbf{Random}~\cite{yu2019target} randomly removes partial links and adds the same number of new links to generate perturbed graphs.

\textbf{DICE}~\cite{zugner_adversarial_2019} generates perturbed graphs by deleting links connected to nodes with sensitive links and adding links between nodes without sensitive links.

\textbf{PrivGraph}~\cite{YZDCCS23} exploits community information to generate synthetic graphs with differential privacy guarantees. 

\textbf{EdgeRand}~\cite{wu2022linkteller} is an edge--DP defense strategy, which randomly flips each entry in the adjacency matrix according to a Bernoulli random variable.

\textbf{LapGraph}~\cite{wu2022linkteller} also guarantees edge--DP. It precalculates the original graph density using a small privacy budget and uses that density to clip the perturbed adjacency matrix.

\textbf{RW-LP}~\cite{mittalPS13} 
replaces real links with fake links between the starting and terminal nodes of random walks.

\textbf{PPNE}~\cite{han2023privacy} samples some candidate node pairs for link perturbation and then iteratively selects the optimal pairs with a high privacy--utility trade-off to yield the perturbed graph.

\subsubsection{Running Environment}
Our experimental platform is a PC with an Intel i5 12600KF CPU (10 cores @ 3.7 GHz), 32 GB of RAM, and an NVIDIA RTX 4070Ti SUPER GPU (16 GB).
The operation system is Ubuntu 22.04 LTS.
We utilize Python 3.11, PyTorch 1.12.1, and PyTorch Geometric 2.3.1.

\subsubsection{PPGSL Implementation}
If not specified, we employ the PPGSL-skip ($\mu=50$) and the PPGSL-sparse ($k=1$) speed-up strategies. We
adjust $\alpha\in [0,0.01]$ to obtain graphs with varying levels of privacy preservation\footnote{Guidance for selecting $\alpha$ is provided in Appendix~\ref{sec:alpha_Selection}.}.
For the GNN encoder $f_{\phi}$, we utilize a 2-layer GCN with hidden dimensions of [128, 64] and set the training epoch at 500. For the prediction head $\delta_\varphi$, we choose cosine similarity\footnote{To demonstrate the robustness of our approach, we also evaluate its performance with various surrogate attack models. The results and analysis are presented in Appendix~\ref{sec:surrogate_robust}.}.
For the graph learner $\mathcal H_\theta$, we fix the training epoch at 500 and choose the Adam optimizer with a learning rate of 0.5 to optimize the parameterized graph structure.
We run all the experiments five times and report the average results.

\subsection{Experimental Results}

\subsubsection{Privacy--utility Trade-off Performance Compared with Baselines}
\label{sub:exp_tradeoff}

\begin{figure*}[t]
    \centering
    \includegraphics[width=\linewidth]{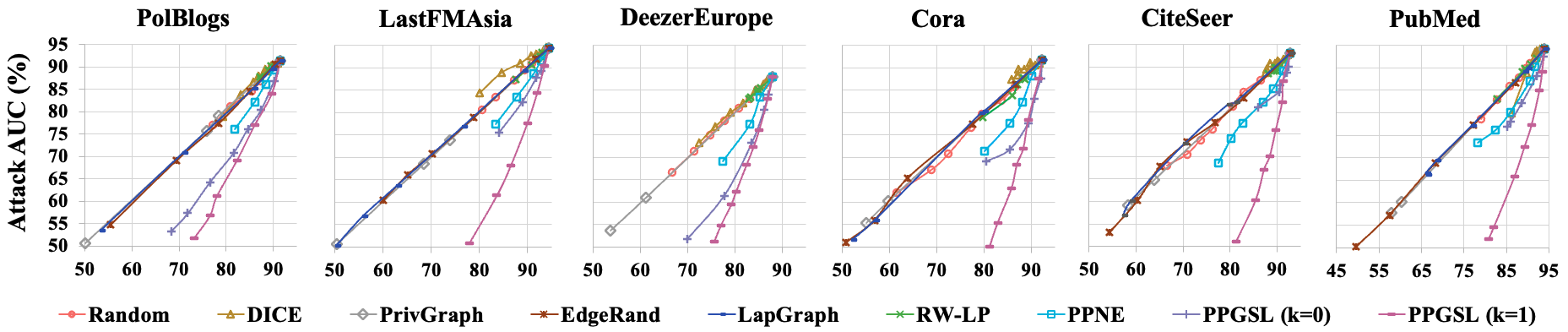}
    \vspace{-1em}
    \caption{Privacy--utility trade-off performance of the PPGSL. X-axis: AUC (\%) of the utility task in terms of link prediction; Y-axis: AUC (\%) of the GAE+sim attack method. Points at the top right of the curves represent evaluations of the original graph.
    The EdgeRand and LapGraph methods result in an out-of-memory error on \textit{DeezerEurope}.}
    \label{fig:result_tradeoff_link}
\end{figure*}

\begin{figure*}[t]
    \centering
    \includegraphics[width=\linewidth]{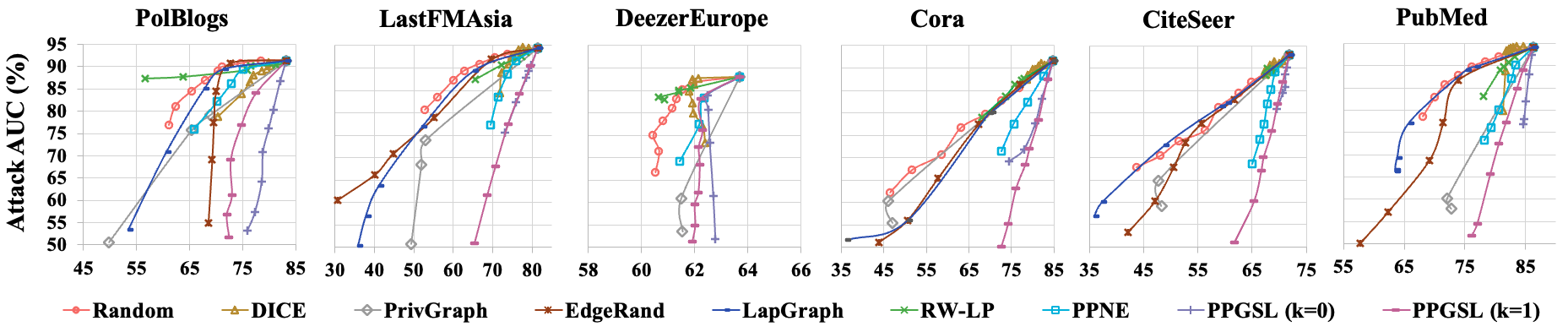}
    \vspace{-1em}
    \caption{Privacy--utility trade-off performance of the PPGSL. X-axis: F1 score (\%) of the utility task in terms of node classification; Y-axis: AUC (\%) of the GAE+sim attack method. Points at the top right of the curves indicate evaluations of the original graph.
    The EdgeRand and LapGraph methods result in an out-of-memory error on \textit{DeezerEurope}.}
    \label{fig:result_tradeoff_node}
    \vspace{-1em}
\end{figure*}

We report the privacy--utility trade-off performance of the PPGSL and five baselines under the utility tasks of link prediction (Fig.~\ref{fig:result_tradeoff_link}) and node classification (Fig.~\ref{fig:result_tradeoff_node}).
In this part, we employ GAE+sim as the attack method because of its widespread use and superior attack performance~\cite{han2023privacy,he2021stealing}.
We plot two variants of the PPGSL where $k=0$ or $k=1$: the PPGSL ($k=0$) signifies the parameterization solely of existing links, confining the structure perturbation to edge deletion; the PPGSL ($k=1$) also parameterizes a subset of nonexistent links, expanding the structure perturbation to encompass both edge deletion and addition.

Herein, a low value on the y-axis means higher privacy protection (lower attack performance), whereas a high value on the x-axis indicates better utility. We see that directly releasing original graphs poses a significant risk of privacy leakage for sensitive links. The points at the top right of the curves denote the attack AUC and utility evaluations on the original graphs, where these attacks achieve extremely high attack AUC scores.

Note that by modifying the hyperparameters to control the privacy--utility trade-off, we can draw a line for each method to show how utility changes with varying levels of privacy protection.
Most baselines exhibit limited privacy protection capability, with the lowest attack AUC failing to reach 50\% (almost equivalent to random guessing, representing optimal privacy protection). In contrast, the PPGSL ($k=1$) easily achieves an attack AUC of approximately 50\%, demonstrating strong privacy-safeguarding ability.

With respect to the performance of the privacy--utility trade-off, the PPGSL and the baselines show the same pattern: with greater perturbation, the privacy attack becomes more difficult, and the utility decreases. More importantly, we observe that the PPGSL ($k=0$) and the PPGSL ($k=1$) usually provide greater privacy protection given the same utility, thereby achieving a better privacy--utility trade-off (\textit{i.e.}, the PPGSL often appears in the lower right corner of the figures). Specifically, for \textit{Cora} in Fig.~\ref{fig:result_tradeoff_link}, when the utility level (AUC of link prediction) is approximately 81\%, the attack AUC of the PPGSL ($k=1$) can be reduced to approximately 50\%.

Moreover, we note that the PPGSL ($k=0$) and the PPGSL ($k=1$) exhibit divergent privacy--utility trade-off performances under different utility tasks. For link prediction (Fig.~\ref{fig:result_tradeoff_link}), the PPGSL ($k=1$) presents the best privacy--utility trade-off across all six datasets; for node classification (Fig.~\ref{fig:result_tradeoff_node}), however, the PPGSL ($k=0$) attains the best privacy--utility trade-off on more datasets than the PPGSL ($k=1$) does. This disparity could be attributed to the fact that edge addition introduces greater disruption in node classification; thus, the PPGSL ($k=1$) incurs greater utility loss when safeguarding privacy, whereas link prediction tasks are less susceptible to edge addition.

\subsubsection{Privacy Protection Effects Against Various Inference Attacks}
\label{sub:exp_privacy}

We conduct experiments to evaluate the PPGSL against eight sensitive link inference attacks, aiming to verify its generalized privacy protection effectiveness.
Figs.~\ref{fig:result_attack_link} and~\ref{fig:result_attack_node} show the results on six datasets under the utility tasks of link prediction and node classification, respectively.
Our results show that the PPGSL effectively defends against various inference attacks, as the PPGSL greatly decreases the attack AUC with tolerable utility loss on downstream tasks. Specifically, the PPGSL reduces the AUC of most attack methods to approximately 50\% on \textit{Cora} while preserving strong performances in link prediction (AUC of 81\%) and node classification (F1 score of 73\%).
In addition, the fluctuation in node classification results on \textit{DeezerEurope} is relatively minor. This can be attributed to the limited influence of graph structure information on this task for \textit{DeezerEurope}, as an MLP model using solely node features achieves a sufficiently high F1 score of 63\%.

\begin{figure*}[t]
    \centering
    \includegraphics[width=\linewidth]{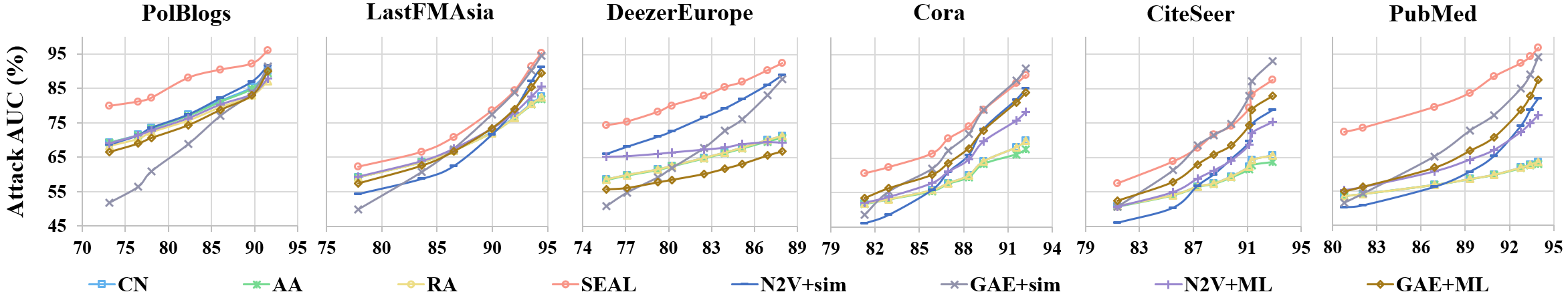}
    \vspace{-1em}
    \caption{Privacy protection effect of PPGSL against various attack methods under the link prediction utility task. X-axis: AUC (\%) of link prediction.}
    \label{fig:result_attack_link}
\end{figure*}

\begin{figure*}[t]
    \centering
    \includegraphics[width=\linewidth]{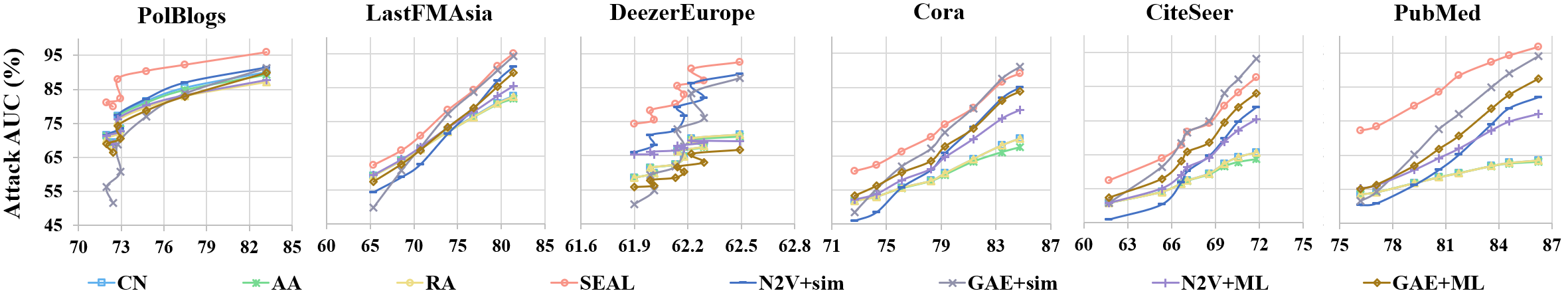}
    \vspace{-1em}
    \caption{Privacy protection effect of PPGSL against various attack methods under the node classification utility task. X-axis: F1 score (\%) of node classification.}
    \label{fig:result_attack_node}
    \vspace{-1em}
\end{figure*}

\subsubsection{Parameter Sensitivity}

\begin{figure}[t]
    \centering
    \includegraphics[width=\linewidth]{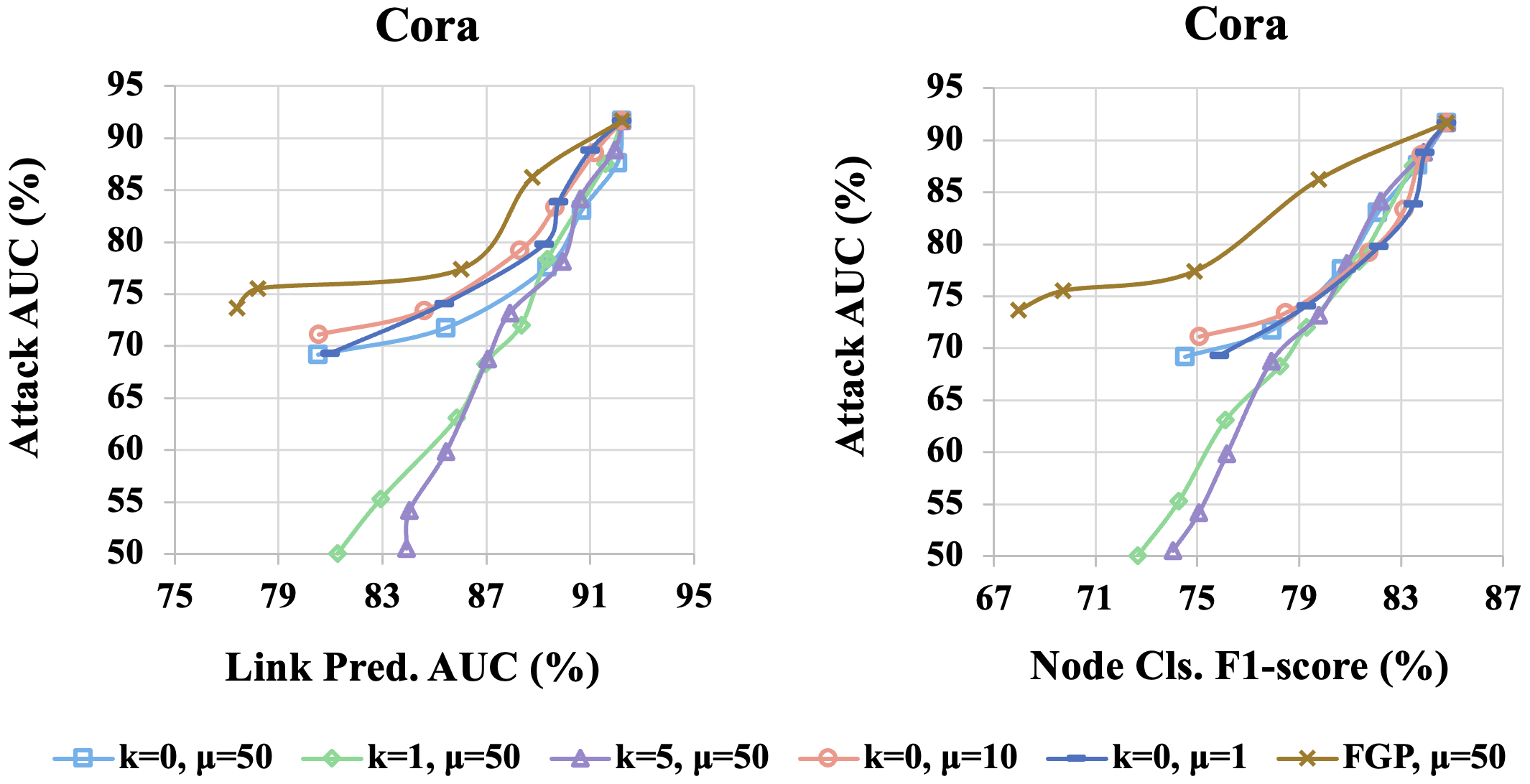}
    \vspace{-1em}
    \caption{Parameter analysis of $k$ and $\mu$ on \textit{Cora}. Left: link prediction utility task; right: node classification utility task.}
    \label{fig:result_parameter_k_mu}
\end{figure}

\begin{figure}[t]
    \centering
    \includegraphics[width=\linewidth]{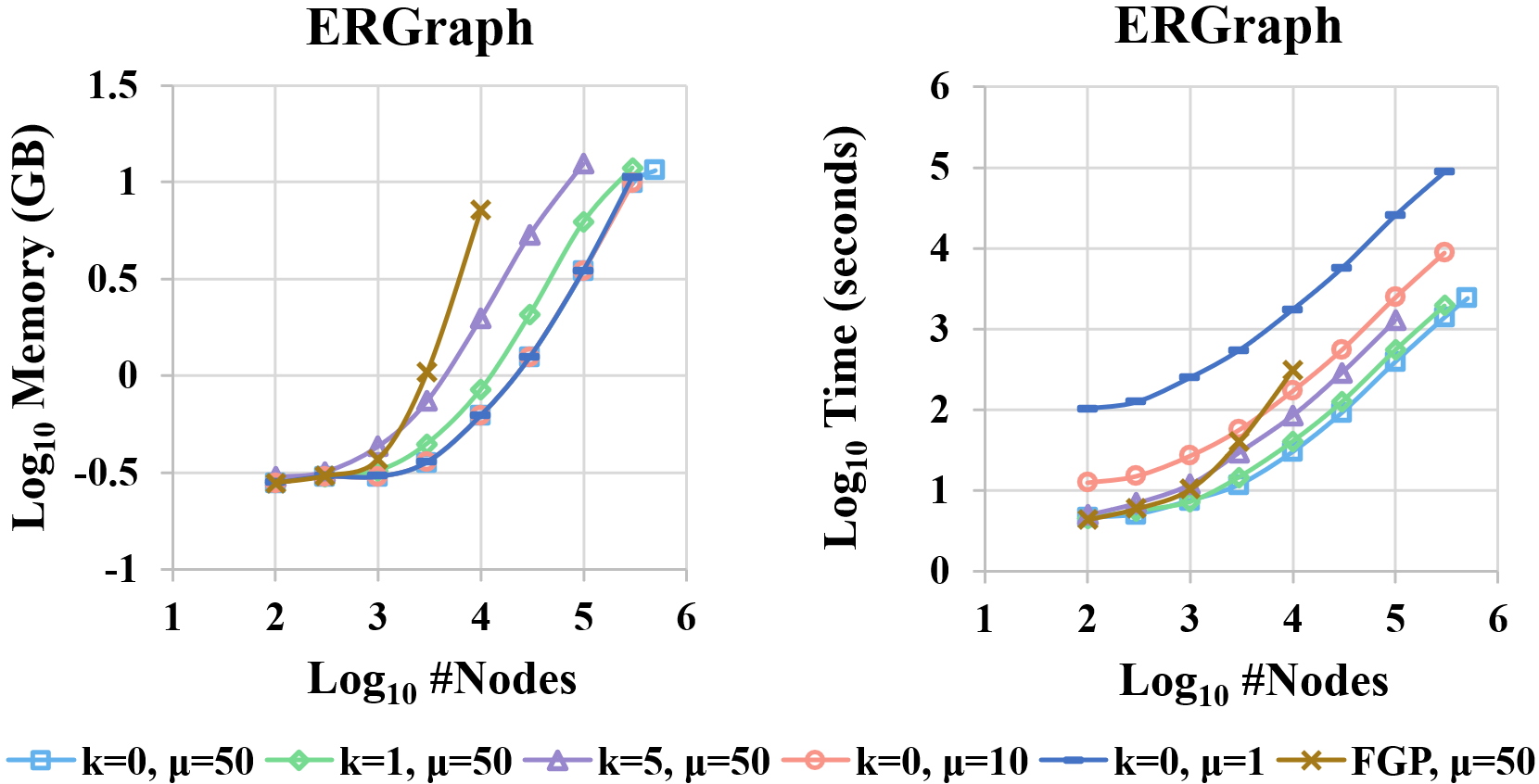}
    \vspace{-1em}
    \caption{Memory usage (left) and time consumption (right) for training the PPGSL on Erdos--Renyi graphs with varying node quantities. The training epoch is set to 500, as the PPGSL typically converges within 500 iterations.}
    \label{fig:result_scala}
    \vspace{-1em}
\end{figure}

Sec.~\ref{sec:speedup} introduces the PPGSL-sparse with a sampling factor $k$ and the PPGSL-skip with an update interval $\mu$ to scale up the PPGSL.
Here, we train the PPGSL with varying $k$ and $\mu$, and the results on \textit{Cora} are shown in Fig.~\ref{fig:result_parameter_k_mu}. The PPGSL-FGP uses the PPGSL's model architecture that parameterizes the full adjacency matrix.
We observe that $k=50$ and $\mu=50/10/1$ yield similar trade-off performances, indicating that $\mu$ has little effect on the performance of the PPGSL. Both the PPGSL ($k=1$) and the PPGSL ($k=5$) demonstrate similar performance, significantly outperforming the PPGSL ($k=0$) and the PPGSL-FGP. This is likely because the PPGSL ($k=0$) removes only edges without adding any, resulting in minimal perturbation and a lower maximum level of privacy protection. In contrast, the PPGSL-FGP imposes no restrictions on the addition of edges, leading to greater perturbation and consequently greater utility loss. The PPGSL ($k=1$) and the PPGSL ($k=5$), however, limit the maximum number of added edges, resulting in moderate perturbation. They achieve a significant level of privacy protection while limiting utility loss, thus offering better privacy--utility trade-off performance. Hence, we recommend setting $k=1$ and $\mu=50$ in most cases (for node classification tasks, $k=0$ can also be tried as discussed in Sec.~\ref{sub:exp_tradeoff}).

\subsubsection{Scalability of the PPGSL}

We present the memory usage and training time of the PPGSL with simulated networks of diverse scales in Fig.~\ref{fig:result_scala}.
Specifically, we generate a series of Erdos--Renyi graphs with node counts ranging from 100 to 500,000
and an average node degree of 10.
In general, the memory usage and training time of the PPGSL ($k=0, \mu=50$) increase approximately linearly with the node count. With $\mu = 50$, we compare the PPGSL-sparse with $k\in\{0,1,5\}$ and the PPGSL-FGP. The PPGSL-sparse drastically reduces memory usage, particularly when $k$ is small.
With $k=0$, we compare the PPGSL-skip with $\mu\in\{1,10,50\}$, and it is apparent that as $\mu$ increases, the time consumption diminishes remarkably, whereas the privacy--utility trade-off performance remains almost unchanged (please see Fig.~\ref{fig:result_parameter_k_mu}).

Notably, for a graph with up to 100,000 nodes, the PPGSL can complete training with our recommended settings (\(k=1\), \(\mu=50\)) in just 9.2 minutes, with an average memory usage of 6.2 GB.
In contrast, for a graph of the same size, the most comparable method, PPNE, requires approximately 16.7 minutes per iteration and converges around the 3,000th iteration~\cite{han2023privacy}. Consequently, the PPGSL achieves a training speedup of \(16.7 \times 3,000 \div 9.2 \approx 5,445.7\) times faster than the PPNE for a complete training process.

\subsubsection{Convergence of the PPGSL}
\label{sec:convergence}

\begin{figure}[t]
\centering
\subfloat[\footnotesize{Training loss of $\mu=1$.}\label{fig:conver_1_loss}]
{
\includegraphics[width=0.48\linewidth]{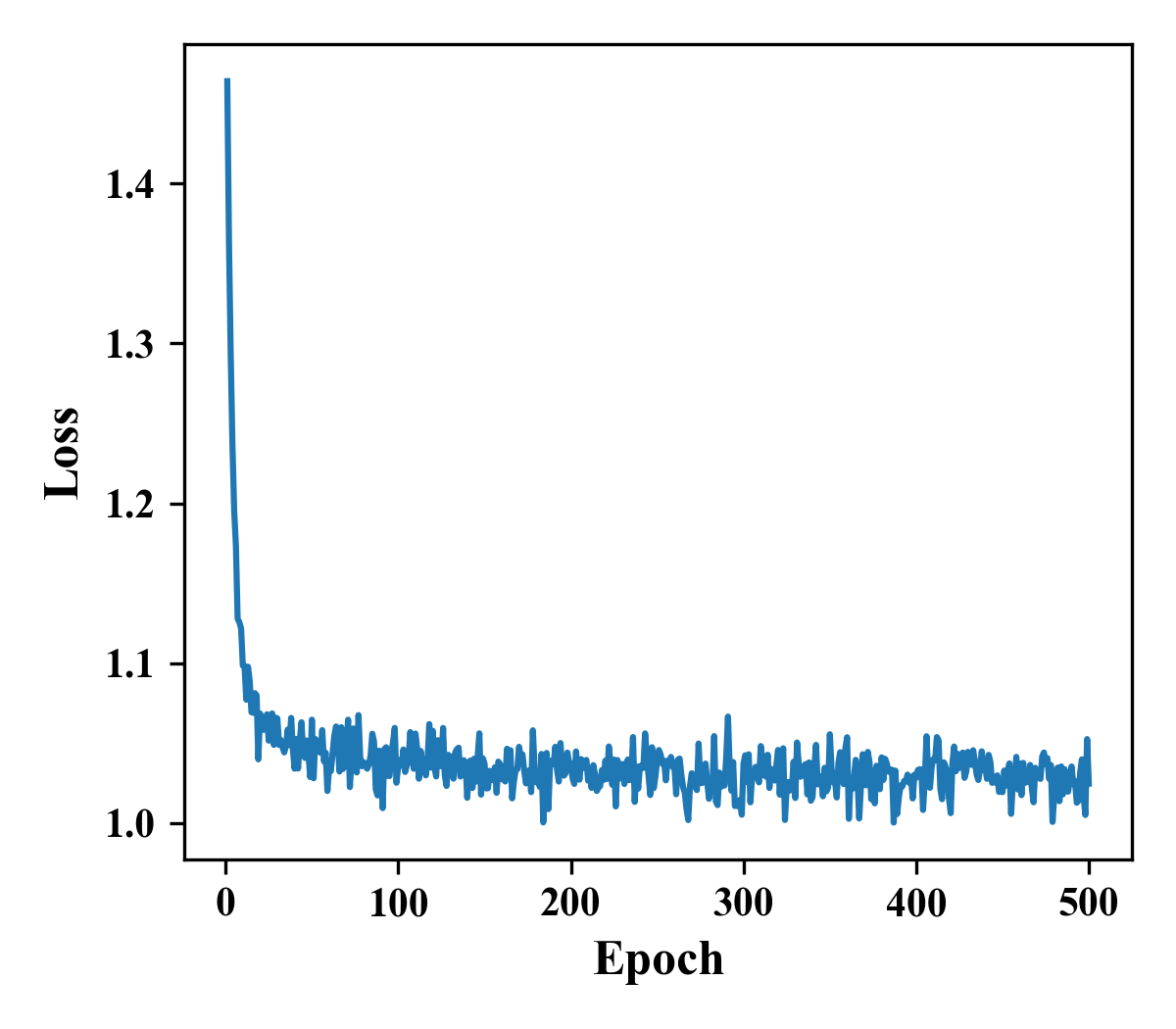}
}
\subfloat[\footnotesize{Evaluation results of $\mu=1$.}\label{fig:conver_1_eva}]
{
\includegraphics[width=0.48\linewidth]{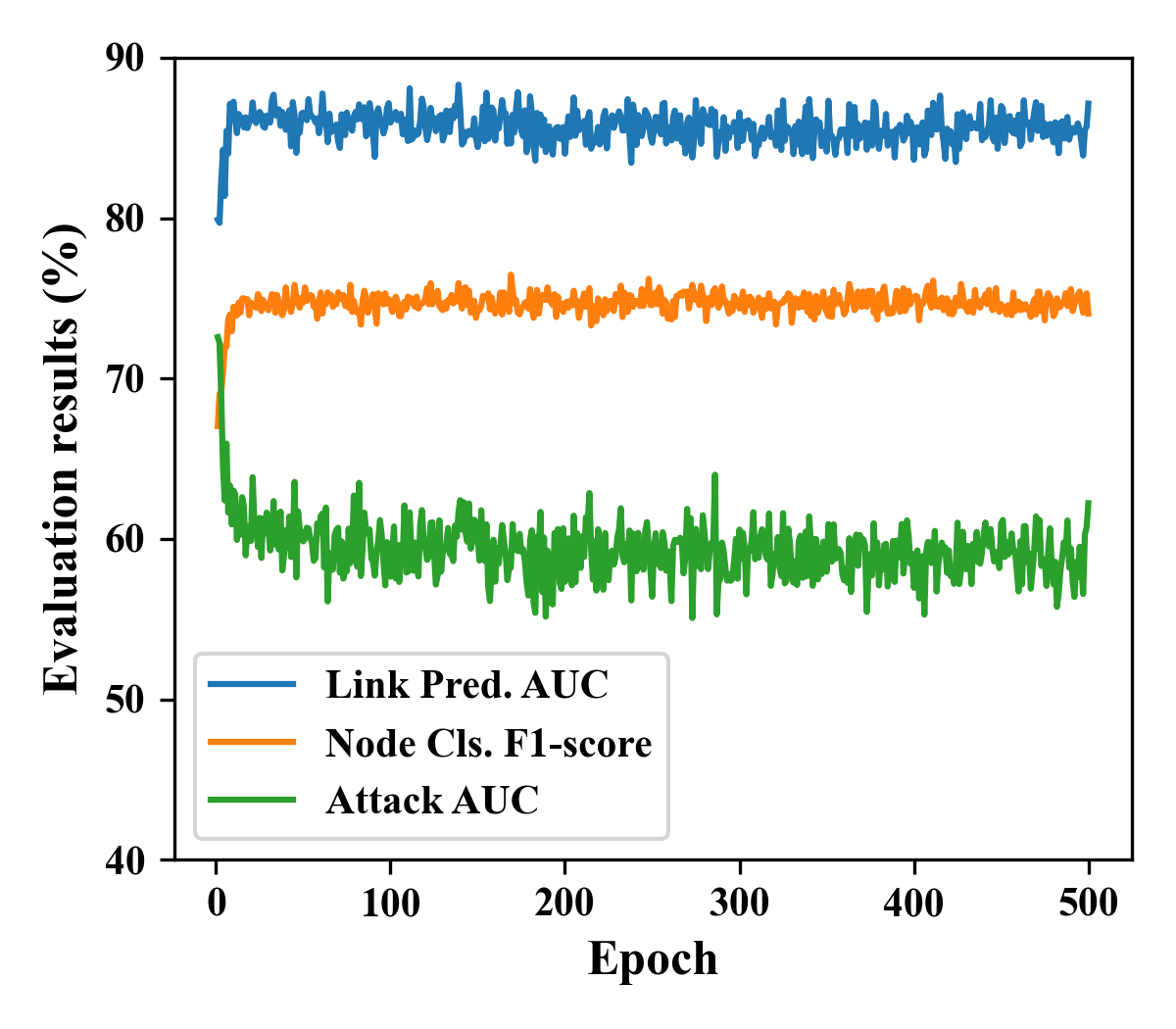}
}
\vspace{-0.8em}

\subfloat[\footnotesize{Training loss of $\mu=10$.}\label{fig:conver_10_loss}]
{
\includegraphics[width=0.48\linewidth]{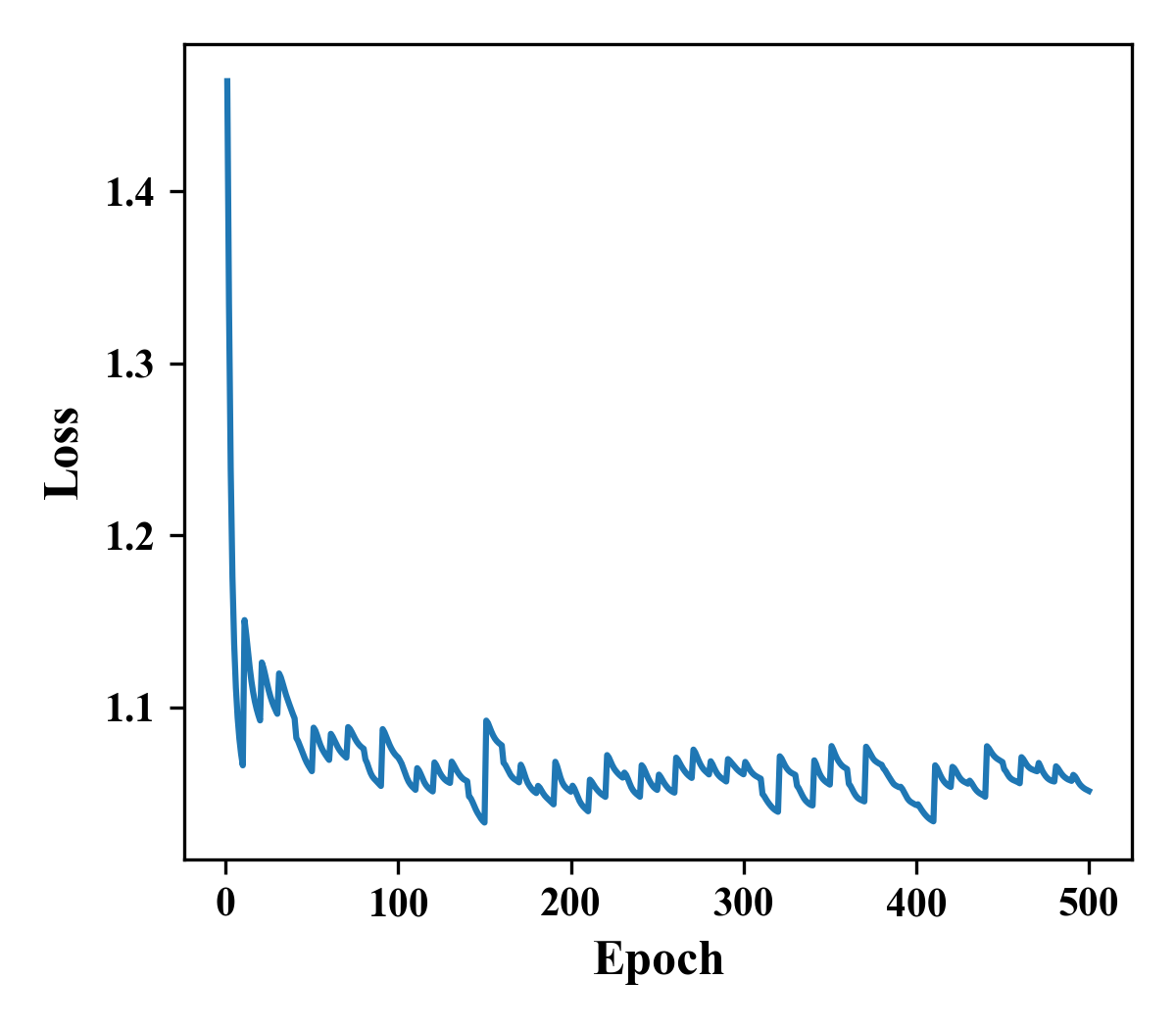}
}
\subfloat[\footnotesize{Evaluation results of $\mu=10$.}\label{fig:conver_10_eva}]
{
\includegraphics[width=0.48\linewidth]{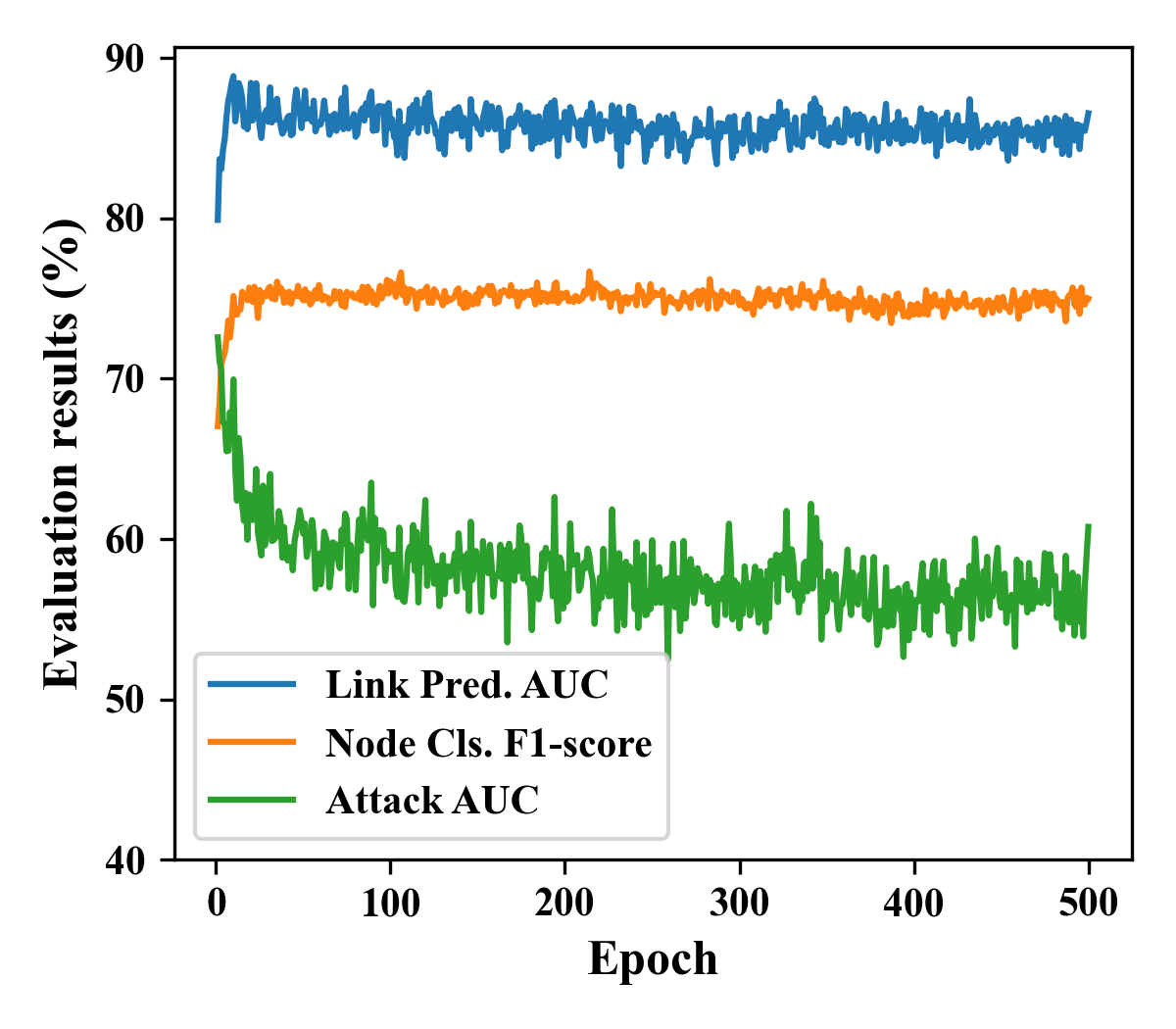}
}
\vspace{-0.8em}

\subfloat[\footnotesize{Training loss of $\mu=50$.}\label{fig:conver_50_loss}]
{
\includegraphics[width=0.48\linewidth]{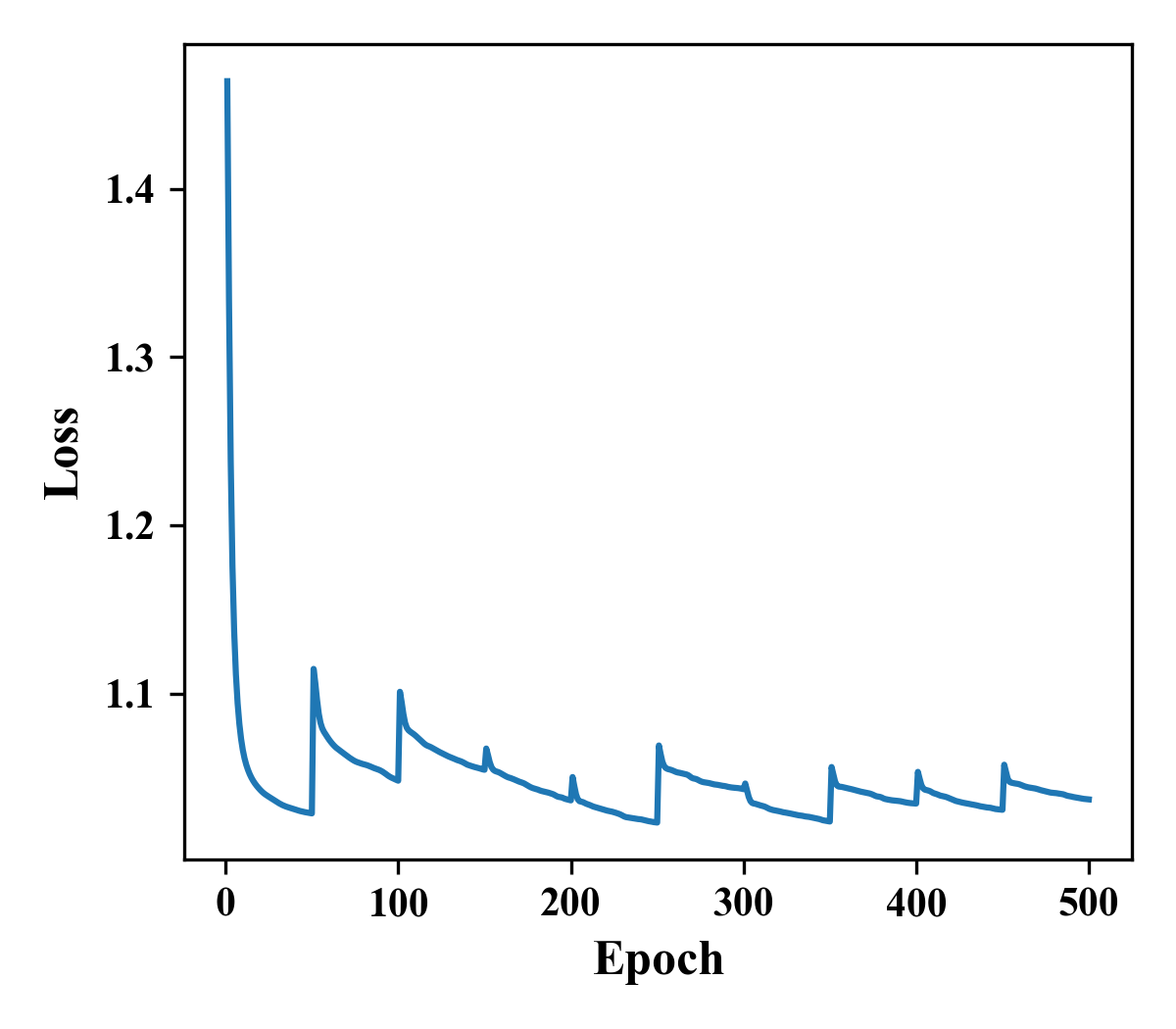}
}
\subfloat[\footnotesize{Evaluation results of $\mu=50$.}\label{fig:conver_50_eva}]
{
\includegraphics[width=0.48\linewidth]{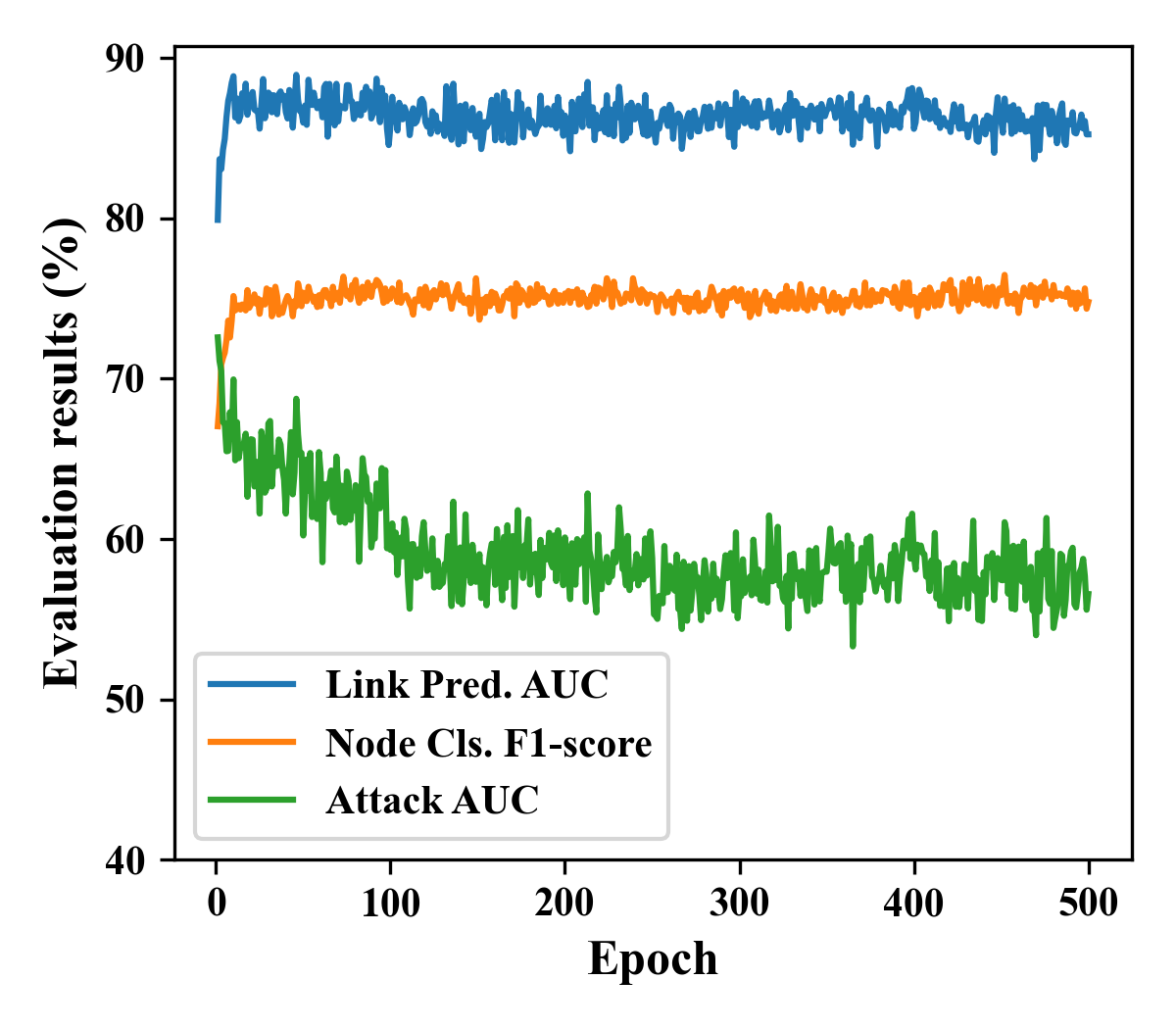}
}
\caption{Variations in loss and privacy/utility evaluation results as the number of epochs increases in the training process of graph learner $\mathcal H_\theta$ on \textit{Cora} under different $\mu$ settings.}
\label{fig:conver}
\vspace{-1em}
\end{figure}

We conduct experiments to verify the convergence of PPGSL.
Fig.~\ref{fig:conver} illustrates the changes in loss and privacy/utility evaluation results during the training of the graph learner $\mathcal H_\theta$ for different values of \(\mu\) (the update interval of the surrogate attack model).
We observe that training loss generally decreases, with minor fluctuations when $\mu\neq 1$. In particular, when $\mu=50$, the loss jumps every 50 epochs (see Fig.~\ref{fig:conver_50_loss}) due to updates of the surrogate attack model. At these points, the surrogate model changes to a stronger version, 
temporarily increasing the privacy loss.

Regarding the privacy/utility evaluation results (Fig.~\ref{fig:conver_1_eva}, \ref{fig:conver_10_eva}, and \ref{fig:conver_50_eva}), we observe that as the number of epochs increases, both privacy protection (as indicated by a general decrease in the attack AUC) and utility maintenance (reflected in the increase in the link prediction AUC and node classification F1 score) improve.
Therefore, the PPGSL demonstrates fast and stable convergence capability, ensuring the effectiveness of the privacy-preserving graph learning process. Additionally, the convergence speed is higher when $\mu$ is smaller.

\subsubsection{Comparison of Different Training Protocols}
\label{sec:comparison_sitp}

\begin{figure}[t]
\centering
\subfloat[\footnotesize{Evaluation results under SITP.}\label{fig:compare_sitp_eva}]
{
\includegraphics[width=0.45\linewidth]{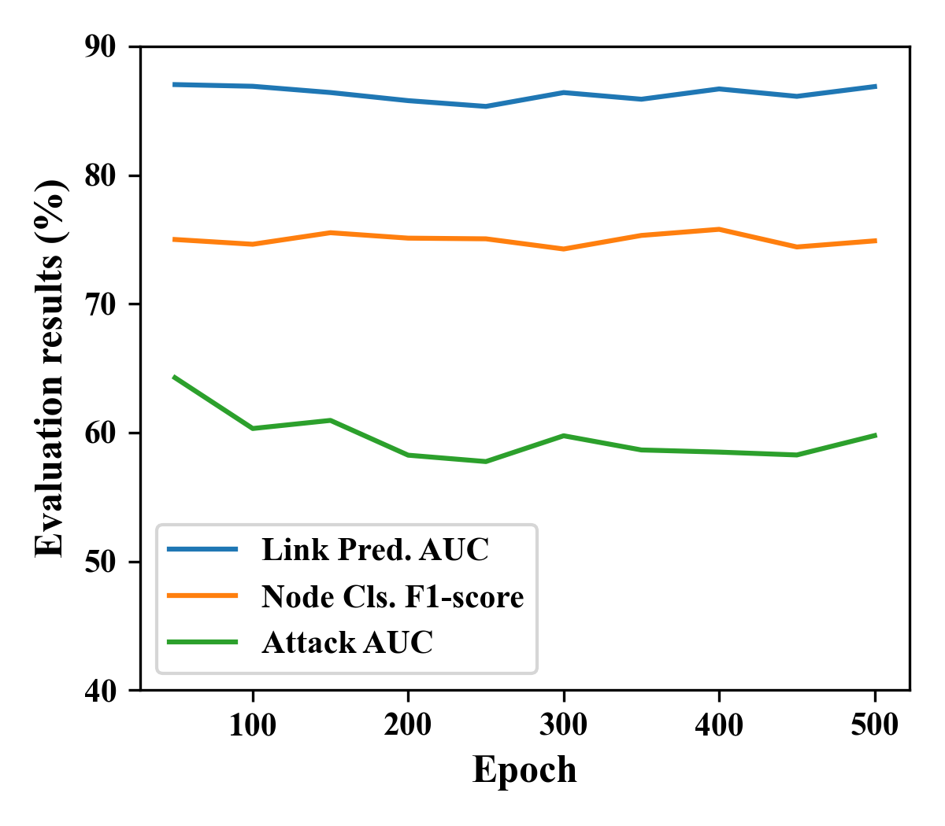}
}
\
\subfloat[\footnotesize{Evaluation results under ADV.}\label{fig:compare_gan_eva}]
{
\includegraphics[width=0.45\linewidth]{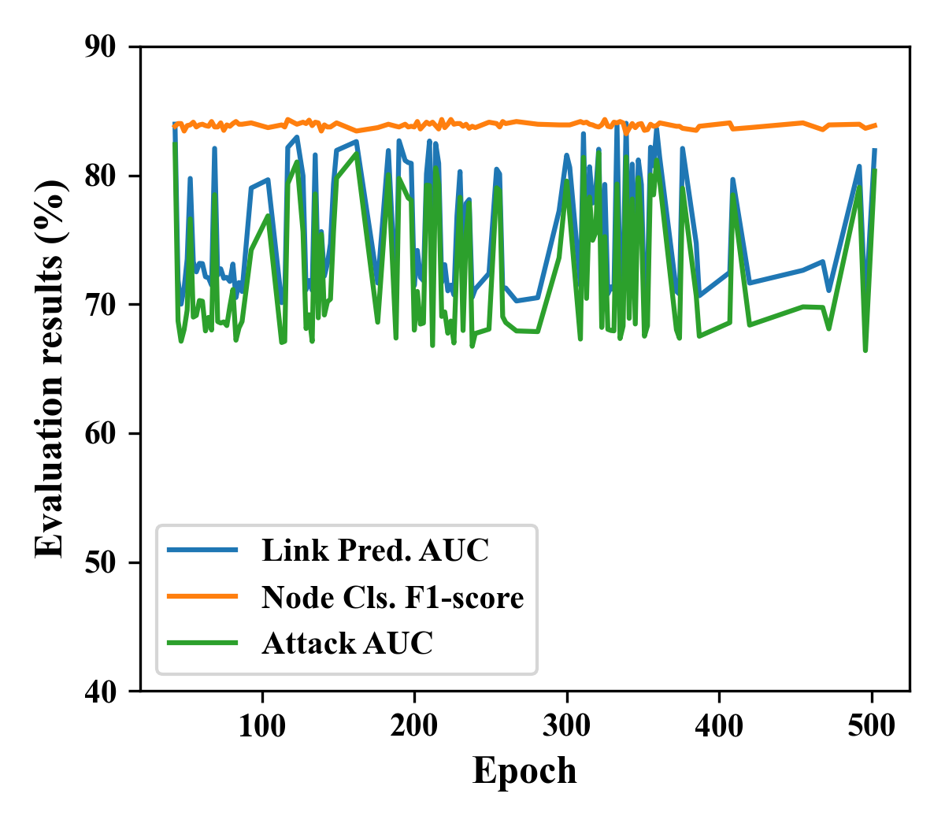}
}
\caption{Comparison of privacy/utility evaluation results across different training protocols.
SITP: our proposed secure iterative training protocol, with a surrogate attack model update interval of \(\mu=50\); ADV: typical adversarial training protocols, such as AdvReg~\cite{nasr2018machine}.}
\label{fig:compare_gan}
\vspace{-1em}
\end{figure}

We compare our proposed SITP (with an update interval of $\mu=50$) against a standard adversarial training baseline, ADV, which follows the protocol of AdvReg~\cite{nasr2018machine}. As shown in Fig.~\ref{fig:compare_gan}, SITP demonstrates a superior privacy--utility trade-off. Compared with the ADV baseline, it simultaneously achieves a higher link prediction AUC (better utility) and a lower attack AUC (stronger privacy). A detailed comparison of convergence and training efficiency is provided in Appendix~\ref{sec:comparison_sitp_appendix}.

\section{Related Works}
\subsection{Privacy-Preserving Graph Data Publishing}
We first review three typical categories of privacy-preserving graph data publishing studies aimed at protecting link privacy: graph anonymization methods, differential privacy (DP) methods, and heuristic-based methods. Graph anonymization methods~\cite{zheleva2008preserving,ying2011link,ying2008randomizing,fard2012limiting,mittalPS13,milani2015neighborhood,liu2016smartwalk} aim to anonymize the links on the original graph by rewiring them via randomization techniques while preserving certain properties of the original graph to ensure utility. However, the primary goal of anonymization is to protect links on the original graph from reidentification rather than safeguarding hidden sensitive links against inference attacks.

DP methods~\cite{YZDCCS23,nguyen2015differentially,chen2014correlated,xiao2014differentially,qin2017generating} are conventionally designed to defend against Bayesian inference attacks without arbitrary prior knowledge. In the context of graph privacy protection, DP methods usually prevent the original graph from being inferred on the basis of the released graph or embedding by introducing random noise with rigorous theoretical privacy guarantees. However, they often need to introduce excessive noise into the original data to defend against various link inference attacks (\textit{e.g.}, GNN attacks~\cite{he2021stealing,jayaraman2019evaluating}), making it challenging to achieve a good privacy--utility trade-off.

Several heuristic-based methods~\cite{han2023privacy,yu2019target} have been proposed to defend against link inference attacks. They often first assess the privacy and utility of potential perturbations in a graph; then, they manually select the perturbation that provides the best privacy--utility trade-off to find the near-optimal graph structure iteratively. However, these methods often result in significant computational costs and produce only local optimal solutions rather than global optimal ones. They barely enumerate all the potential perturbations for privacy and utility assessment. 

Compared with existing methods, the PPGSL is designed with a crafted privacy objective to measure the privacy leakage caused by sensitive link inference attacks. Additionally, the PPGSL automatically optimizes the graph structure to achieve an optimal privacy--utility trade-off via a learning method.

\subsection{Link Inference Attacks}
\label{sec:link_inference}
Link inference attacks exploit inherent patterns within the graph to reidentify or infer private structural information. Depending on the goal, method, and attacker's knowledge, these attacks can be divided into two categories: attacks in graph data publishing and attacks against open GNN APIs.

\textbf{Link inference attacks in graph data publishing}. In this context, attackers have access to the whole published graph data~\cite{yu2019target} or its node embedding~\cite{duddu2020quantifying}. Specifically, if attackers obtain the node embedding of the target graph, they can directly conduct embedding-based attacks by leveraging the similarity information between the embeddings of target node pairs~\cite{duddu2020quantifying,han2023privacy,wang2023link}. Comparatively, given the published graph data, attackers can carry out link inference attacks via Bayesian methods and structure-based approaches~\cite{zhang2018link,zhang2020towards,xian2021towards} or conduct embedding-based attacks after learning node embedding with the published graph~\cite{han2023privacy}.

\textbf{Link inference attacks against open GNN APIs}. This stream of studies typically assumes that there is an open GNN API, which is trained to complete the node classification task, and attackers have black-box access to this API~\cite{he2021stealing,wu2022linkteller,meng2023devil,zhang2023demystifying}. Recently, the LinkTeller attack method was introduced, which recovers private links on a graph through influence analysis~\cite{wu2022linkteller}. Specifically, by querying the GNN API with adversarial input node features and analyzing its influence on the target node's output, an attacker can infer whether a link exists between the input node and the target node. Moreover, some recent work has considered the disparity in individual privacy risks~\cite{zhang2023demystifying,zhang2024unraveling}.

This work focuses on protecting sensitive links from inference attacks in the context of whole graph data sharing. We also note that some existing works have proposed defense strategies regarding attacks against open GNN APIs~\cite{meng2023devil,sajadmanesh2023gap}. Since these strategies typically aim to build privacy-preserving GNN models, they are unsuitable for the graph data sharing problem we address.

\subsection{Graph Structure Learning}
Graph structure learning (GSL) seeks to simultaneously derive an optimized graph structure and corresponding graph representations~\cite{zhu2021survey}. Most current GSL studies focus on learning robust graph representations or improving the performance on specific downstream tasks through the concurrent optimization of the graph structure and the GNN encoder~\cite{chen2020iterative,luo2021learning,jin2020graph,liu2022compact,fatemi2021slaps,sun2022graph,wang2023prose}. Recently, self-supervised GSL has emerged to address scenarios where the task label is scarce or downstream tasks are unknown~\cite{liu2022towards,li2022reliable,zhao2023self}.

Although both the PPGSL and GSL procure the targeted graph via a learning-based approach, the PPGSL significantly deviates from conventional GSL methods in its training objectives and training protocol.
Specifically, while most GSL methods aim to learn a robust GNN encoder, our goal is to learn privacy-preserving graph data for publishing. The components of the PPGSL and GSL frameworks differ.
Notably, the PPGSL incorporates surrogate attack models to generate privacy protection signals---a component that is absent in standard GSL approaches. The introduction of this new component also necessitates an effective and distinct training protocol.

\section{Conclusion and Discussion}
This work makes theoretical contributions to the field of trustworthy graph systems by expanding the research scope and supplying novel privacy-preserving technologies~\cite{wu2022trustworthy}. Specifically, we formulate a novel learning-based graph data publishing problem against sensitive link inference attacks and propose a privacy-preserving graph structure learning framework, dubbed PPGSL. Two core modules are designed to parameterize the graph structure and optimize for privacy and utility objectives; a secure iterative training protocol is introduced to ensure privacy preservation and stable convergence. Theoretical analyses validate the convergence and optimality of the PPGSL. Extensive experiments demonstrate the state-of-the-art performance of the PPGSL in achieving an optimal privacy--utility trade-off.

We also discuss the limitations and future directions of this work. First, the PPGSL currently focuses only on perturbing the graph structure to protect sensitive links. Looking forward, we aim to design a unified framework that can learn both node features and topological structure for privacy-preserving data publishing.
Second, we adopt a GNN-based surrogate attack model in our method, which may not represent all adversaries with different knowledge. Devising a more powerful surrogate attack model to capture broader privacy risks is a promising future research direction.
Finally, while the PPGSL seeks to reduce the average privacy risk across the entire graph, it is crucial to consider the disparity in individual risks to ensure fairness~\cite{zhang2023demystifying,zhang2024unraveling}. Incorporating both privacy and fairness considerations can lead the PPGSL toward more comprehensive and trustworthy data sharing~\cite{zhang2024trustworthy}.


\begin{thebibliography}{10}
\providecommand{\url}[1]{#1}
\csname url@samestyle\endcsname
\providecommand{\newblock}{\relax}
\providecommand{\bibinfo}[2]{#2}
\providecommand{\BIBentrySTDinterwordspacing}{\spaceskip=0pt\relax}
\providecommand{\BIBentryALTinterwordstretchfactor}{4}
\providecommand{\BIBentryALTinterwordspacing}{\spaceskip=\fontdimen2\font plus
\BIBentryALTinterwordstretchfactor\fontdimen3\font minus
  \fontdimen4\font\relax}
\providecommand{\BIBforeignlanguage}[2]{{%
\expandafter\ifx\csname l@#1\endcsname\relax
\typeout{** WARNING: IEEEtran.bst: No hyphenation pattern has been}%
\typeout{** loaded for the language `#1'. Using the pattern for}%
\typeout{** the default language instead.}%
\else
\language=\csname l@#1\endcsname
\fi
#2}}
\providecommand{\BIBdecl}{\relax}
\BIBdecl

\bibitem{han2015alike}
X.~Han, L.~Wang, N.~Crespi, S.~Park, and {\'A}.~Cuevas, ``Alike people, alike
  interests? inferring interest similarity in online social networks,''
  \emph{Decis. Support Syst.}, vol.~69, pp. 92--106, 2015.

\bibitem{fan2019graph}
W.~Fan, Y.~Ma, Q.~Li, Y.~He, E.~Zhao, J.~Tang, and D.~Yin, ``Graph neural
  networks for social recommendation,'' in \emph{The World Wide Web
  Conference}, 2019, pp. 417--426.

\bibitem{suarez2022graph}
J.~Su{\'a}rez-Varela, P.~Almasan, M.~Ferriol-Galm{\'e}s, K.~Rusek, F.~Geyer,
  X.~Cheng, X.~Shi, S.~Xiao, F.~Scarselli, A.~Cabellos-Aparicio \emph{et~al.},
  ``Graph neural networks for communication networks: Context, use cases and
  opportunities,'' \emph{IEEE Network}, vol.~37, no.~3, pp. 146--153, 2021.

\bibitem{wang2021exploring}
L.~Wang, D.~Chai, X.~Liu, L.~Chen, and K.~Chen, ``Exploring the
  generalizability of spatio-temporal traffic prediction: Meta-modeling and an
  analytic framework,'' \emph{IEEE Trans. Knowl. Data
  Eng.}, vol.~35, no.~4, pp. 3870--3884, 2021.

\bibitem{snapnets}
J.~Leskovec and A.~Krevl, ``{SNAP Datasets}: {Stanford} large network dataset
  collection,'' \url{http://snap.stanford.edu/data}, Jun. 2014.

\bibitem{aminer}
\BIBentryALTinterwordspacing
``{AMiner Dataset}.'' [Online]. Available:
  \url{https://www.aminer.cn/aminer_data}
\BIBentrySTDinterwordspacing

\bibitem{facebook}
\BIBentryALTinterwordspacing
``{Facebook Graph API}.'' [Online]. Available:
  \url{https://developers.facebook.com/docs/graph-api}
\BIBentrySTDinterwordspacing

\bibitem{jia2017attriinfer}
J.~Jia, B.~Wang, L.~Zhang, and N.~Z. Gong, ``Attriinfer: Inferring user
  attributes in online social networks using markov random fields,'' in
  \emph{Proceedings of the 26th International Conference on World Wide Web},
  2017, pp. 1561--1569.

\bibitem{zhang2020towards}
Y.~Zhang, M.~Humbert, B.~Surma, P.~Manoharan, J.~Vreeken, and M.~Backes,
  ``Towards plausible graph anonymization,'' in \emph{27th Annual Network and
  Distributed System Security Symposium}, 2020.

\bibitem{duddu2020quantifying}
V.~Duddu, A.~Boutet, and V.~Shejwalkar, ``Quantifying privacy leakage in graph
  embedding,'' in \emph{MobiQuitous 2020-17th EAI International Conference on
  Mobile and Ubiquitous Systems: Computing, Networking and Services}, 2020, pp.
  76--85.

\bibitem{gdpr}
\BIBentryALTinterwordspacing
``{General Data Protection Regulation (GDPR)}.'' [Online]. Available:
  \url{https://gdpr-info.eu/}
\BIBentrySTDinterwordspacing

\bibitem{yao2019sensitive}
L.~Yao, Z.~Chen, X.~Wang, D.~Liu, and G.~Wu, ``Sensitive label privacy
  preservation with anatomization for data publishing,'' \emph{IEEE
  Trans. Dependable Secure Comput.}, vol.~18, no.~2, pp.
  904--917, 2019.

\bibitem{han2023privacy}
X.~Han, Y.~Yang, L.~Wang, and J.~Wu, ``Privacy-preserving network embedding
  against private link inference attacks,'' \emph{IEEE Trans.
  Dependable Secure Comput.}, vol.~21, no.~2, pp. 847--859, 2024.

\bibitem{he2021stealing}
X.~He, J.~Jia, M.~Backes, N.~Z. Gong, and Y.~Zhang, ``Stealing links from graph
  neural networks.'' in \emph{USENIX Secur. Symp.}, 2021, pp. 2669--2686.

\bibitem{yu2019target}
S.~Yu, M.~Zhao, C.~Fu, J.~Zheng, H.~Huang, X.~Shu, Q.~Xuan, and G.~Chen,
  ``Target defense against link-prediction-based attacks via evolutionary
  perturbations,'' \emph{IEEE Trans. Knowl. Data Eng.},
  vol.~33, no.~2, pp. 754--767, 2019.

\bibitem{hamilton2017inductive}
W.~Hamilton, Z.~Ying, and J.~Leskovec, ``Inductive representation learning on
  large graphs,'' \emph{Adv. Neural Inf. Process. Syst.},
  pp. 1024--1034, 2017.

\bibitem{zhu2021survey}
Y.~Zhu, W.~Xu, J.~Zhang, Y.~Du, J.~Zhang, Q.~Liu, C.~Yang, and S.~Wu, ``A
  survey on graph structure learning: Progress and opportunities,'' \emph{arXiv
  e-prints}, pp. arXiv--2103, 2021.

\bibitem{jin2020graph}
W.~Jin, Y.~Ma, X.~Liu, X.~Tang, S.~Wang, and J.~Tang, ``Graph structure
  learning for robust graph neural networks,'' in \emph{Proceedings of the 26th
  ACM SIGKDD International Conference on Knowledge Discovery \& Data Mining},
  2020, pp. 66--74.

\bibitem{liu2022compact}
N.~Liu, X.~Wang, L.~Wu, Y.~Chen, X.~Guo, and C.~Shi, ``Compact graph structure
  learning via mutual information compression,'' in \emph{Proceedings of the
  ACM Web Conference 2022}, 2022, pp. 1601--1610.

\bibitem{alkhalil2021phishing}
Z.~Alkhalil, C.~Hewage, L.~Nawaf, and I.~Khan, ``Phishing attacks: A recent
  comprehensive study and a new anatomy,'' \emph{Front. Comput.
  Sci.}, vol.~3, p. 563060, 2021.

\bibitem{fong2011preventing}
P.~W. Fong, ``Preventing sybil attacks by privilege attenuation: A design
  principle for social network systems,'' in \emph{2011 IEEE Symposium on
  Security and Privacy}, 2011, pp. 263--278.

\bibitem{grover2016node2vec}
A.~Grover and J.~Leskovec, ``node2vec: Scalable feature learning for
  networks,'' in \emph{Proceedings of the 22nd ACM SIGKDD International
  Conference on Knowledge Discovery and Data Mining}, 2016, pp. 855--864.

\bibitem{xu2018how}
K.~Xu, W.~Hu, J.~Leskovec, and S.~Jegelka, ``How powerful are graph neural
  networks?'' in \emph{International Conference on Learning Representations},
  2019.

\bibitem{kipf2016semi}
T.~N. Kipf and M.~Welling, ``Semi-supervised classification with graph
  convolutional networks,'' {arXiv preprint arXiv:1609.02907}, 2016.

\bibitem{fatemi2021slaps}
B.~Fatemi, L.~El~Asri, and S.~M. Kazemi, ``Slaps: Self-supervision improves
  structure learning for graph neural networks,'' \emph{Adv. Neural
  Inf. Process. Syst.}, vol.~34, pp. 22\,667--22\,681, 2021.

\bibitem{liu2022towards}
Y.~Liu, Y.~Zheng, D.~Zhang, H.~Chen, H.~Peng, and S.~Pan, ``Towards
  unsupervised deep graph structure learning,'' in \emph{Proceedings of the ACM
  Web Conference 2022}, 2022, pp. 1392--1403.

\bibitem{yang2016privcheck}
D.~Yang, D.~Zhang, B.~Qu, and P.~Cudr{\'e}-Mauroux, ``Privcheck:
  Privacy-preserving check-in data publishing for personalized location based
  services,'' in \emph{Proceedings of the 2016 ACM International Joint
  Conference on Pervasive and Ubiquitous Computing}, 2016, pp. 545--556.

\bibitem{jia2018attriguard}
J.~Jia and N.~Z. Gong, ``Attriguard: A practical defense against attribute
  inference attacks via adversarial machine learning,'' in \emph{27th {USENIX}
  Security Symposium}, 2018, pp. 513--529.

\bibitem{hsieh2021netfense}
I.-C. Hsieh and C.-T. Li, ``Netfense: Adversarial defenses against privacy
  attacks on neural networks for graph data,'' \emph{IEEE Transactions on
  Knowledge and Data Engineering}, vol.~35, no.~1, pp. 796--809, 2023.

\bibitem{nasr2018machine}
M.~Nasr, R.~Shokri, and A.~Houmansadr, ``Machine learning with membership
  privacy using adversarial regularization,'' in \emph{Proceedings of the 2018
  ACM SIGSAC Conference on Computer and Communications Security}, 2018, pp.
  634--646.

\bibitem{du2012privacy}
F.~du~Pin~Calmon and N.~Fawaz, ``Privacy against statistical inference,'' in
  \emph{50th Annual Allerton Conference on Communication, Control, and
  Computing}.\hskip 1em plus 0.5em minus 0.4em\relax IEEE, 2012, pp.
  1401--1408.

\bibitem{han2023hyobscure}
X.~Han, Y.~Yang, J.~Wu, and H.~Xiong, ``Hyobscure: Hybrid obscuring for
  privacy-preserving data publishing,'' \emph{IEEE Transactions on Knowledge
  and Data Engineering}, vol.~36, no.~8, pp. 3893--3905, 2023.

\bibitem{adamic2005political}
L.~A. Adamic and N.~Glance, ``The political blogosphere and the 2004 us
  election: Divided they blog,'' in \emph{Proceedings of the 3rd International
  Workshop on Link Discovery}, 2005, pp. 36--43.

\bibitem{rozemberczki2020characteristic}
B.~Rozemberczki and R.~Sarkar, ``Characteristic functions on graphs: Birds of a
  feather, from statistical descriptors to parametric models,'' in
  \emph{Proceedings of the 29th ACM International Conference on Information \&
  Knowledge Management}, 2020, pp. 1325--1334.

\bibitem{yang2016revisiting}
Z.~Yang, W.~Cohen, and R.~Salakhudinov, ``Revisiting semi-supervised learning
  with graph embeddings,'' in \emph{International Conference on Machine
  Learning}.\hskip 1em plus 0.5em minus 0.4em\relax PMLR, 2016, pp. 40--48.

\bibitem{zhang2018link}
M.~Zhang and Y.~Chen, ``Link prediction based on graph neural networks,''
  \emph{Advances in Neural Information Processing Systems}, vol.~31, 2018.

\bibitem{kipf2016variational}
T.~N. Kipf and M.~Welling, ``Variational graph auto-encoders,'' {arXiv
  preprint arXiv:1611.07308}, 2016.

\bibitem{sun2019infograph}
F.-Y. Sun, J.~Hoffman, V.~Verma, and J.~Tang, ``Infograph: Unsupervised and
  semi-supervised graph-level representation learning via mutual information
  maximization,'' in \emph{International Conference on Learning
  Representations}, 2019.

\bibitem{you2020graph}
Y.~You, T.~Chen, Y.~Sui, T.~Chen, Z.~Wang, and Y.~Shen, ``Graph contrastive
  learning with augmentations,'' \emph{Adv. Neural Inf.
  Process. Syst.}, vol.~33, pp. 5812--5823, 2020.

\bibitem{wu2022linkteller}
F.~Wu, Y.~Long, C.~Zhang, and B.~Li, ``Linkteller: Recovering private edges
  from graph neural networks via influence analysis,'' in \emph{2022 IEEE
  Symposium on Security and Privacy (SP)}.\hskip 1em plus 0.5em minus
  0.4em\relax IEEE, 2022, pp. 2005--2024.

\bibitem{lu2011link}
L.~L{\"u} and T.~Zhou, ``Link prediction in complex networks: A survey,''
  \emph{Physica A: Stat. Mech. Appl.}, vol. 390,
  no.~6, pp. 1150--1170, 2011.

\bibitem{zugner_adversarial_2019}
D.~Z{\"u}gner and S.~G{\"u}nnemann, ``Adversarial attacks on graph neural
  networks via meta learning,'' in \emph{International Conference on Learning
  Representations}, 2019.

\bibitem{YZDCCS23}
Q.~Yuan, Z.~Zhang, L.~Du, M.~Chen, P.~Cheng, and M.~Sun, ``Privgraph:
  Differentially private graph data publication by exploiting community
  information,'' in \emph{USENIX Security}, 2023.

\bibitem{mittalPS13}
P.~Mittal, C.~Papamanthou, and D.~X. Song, ``Preserving link privacy in social
  network based systems,'' in \emph{20th Annual Network and Distributed System
  Security Symposium}, 2013.

\bibitem{zheleva2008preserving}
E.~Zheleva and L.~Getoor, ``Preserving the privacy of sensitive relationships
  in graph data,'' in \emph{Privacy, Security, and Trust in KDD: First ACM
  SIGKDD International Workshop}.\hskip 1em plus 0.5em minus 0.4em\relax
  Springer, 2008, pp. 153--171.

\bibitem{ying2011link}
X.~Ying and X.~Wu, ``On link privacy in randomizing social networks,''
  \emph{Knowl. Inf. Syst.}, vol.~28, pp. 645--663, 2011.

\bibitem{ying2008randomizing}
------, ``Randomizing social networks: a spectrum preserving approach,'' in
  \emph{proceedings of the 2008 SIAM International Conference on Data
  Mining}.\hskip 1em plus 0.5em minus 0.4em\relax SIAM, 2008, pp. 739--750.

\bibitem{fard2012limiting}
A.~M. Fard, K.~Wang, and P.~S. Yu, ``Limiting link disclosure in social network
  analysis through subgraph-wise perturbation,'' in \emph{Proceedings of the
  15th International Conference on Extending Database Technology}, 2012, pp.
  109--119.

\bibitem{milani2015neighborhood}
A.~Milani~Fard and K.~Wang, ``Neighborhood randomization for link privacy in
  social network analysis,'' \emph{World Wide Web}, vol.~18, pp. 9--32, 2015.

\bibitem{liu2016smartwalk}
Y.~Liu, S.~Ji, and P.~Mittal, ``Smartwalk: Enhancing social network security
  via adaptive random walks,'' in \emph{Proceedings of the 2016 ACM SIGSAC
  Conference on Computer and Communications Security}, 2016, pp. 492--503.

\bibitem{nguyen2015differentially}
H.~H. Nguyen, A.~Imine, and M.~Rusinowitch, ``Differentially private
  publication of social graphs at linear cost,'' in \emph{Proceedings of the
  2015 IEEE/ACM International Conference on Advances in Social Networks
  Analysis and Mining 2015}, 2015, pp. 596--599.

\bibitem{chen2014correlated}
R.~Chen, B.~C. Fung, P.~S. Yu, and B.~C. Desai, ``Correlated network data
  publication via differential privacy,'' \emph{VLDB J.}, vol.~23, pp.
  653--676, 2014.

\bibitem{xiao2014differentially}
Q.~Xiao, R.~Chen, and K.-L. Tan, ``Differentially private network data release
  via structural inference,'' in \emph{Proceedings of the 20th ACM SIGKDD
  International Conference on Knowledge Discovery and Data Mining}, 2014, pp.
  911--920.

\bibitem{qin2017generating}
Z.~Qin, T.~Yu, Y.~Yang, I.~Khalil, X.~Xiao, and K.~Ren, ``Generating synthetic
  decentralized social graphs with local differential privacy,'' in
  \emph{Proceedings of the 2017 ACM SIGSAC Conference on Computer and
  Communications Security}, 2017, pp. 425--438.

\bibitem{jayaraman2019evaluating}
B.~Jayaraman and D.~Evans, ``Evaluating differentially private machine learning
  in practice,'' in \emph{USENIX Security Symposium}, pp. 1895--1912, 2019.

\bibitem{wang2023link}
X.~Wang and W.~H. Wang, ``Link membership inference attacks against
  unsupervised graph representation learning,'' in \emph{Proceedings of the
  39th Annual Computer Security Applications Conference}, 2023, pp. 477--491.

\bibitem{xian2021towards}
X.~Xian, T.~Wu, Y.~Liu, W.~Wang, C.~Wang, G.~Xu, and Y.~Xiao, ``Towards link
  inference attack against network structure perturbation,''
  \emph{Knowl.-Based Syst.}, vol. 218, p. 106674, 2021.

\bibitem{meng2023devil}
L.~Meng, Y.~Bai, Y.~Chen, Y.~Hu, W.~Xu, and H.~Weng, ``Devil in disguise:
  Breaching graph neural networks privacy through infiltration,'' in
  \emph{Proceedings of the 2023 ACM SIGSAC Conference on Computer and
  Communications Security}, 2023, pp. 1153--1167.

\bibitem{zhang2023demystifying}
H.~Zhang, B.~Wu, S.~Wang, X.~Yang, M.~Xue, S.~Pan, and X.~Yuan, ``Demystifying
  uneven vulnerability of link stealing attacks against graph neural
  networks,'' in \emph{International Conference on Machine Learning}.\hskip 1em
  plus 0.5em minus 0.4em\relax PMLR, 2023, pp. 41\,737--41\,752.

\bibitem{zhang2024unraveling}
H.~Zhang, X.~Yuan, and S.~Pan, ``Unraveling privacy risks of individual
  fairness in graph neural networks,'' in \emph{2024 IEEE 40th International
  Conference on Data Engineering (ICDE)}.\hskip 1em plus 0.5em minus
  0.4em\relax IEEE, 2024, pp. 1712--1725.

\bibitem{sajadmanesh2023gap}
S.~Sajadmanesh, A.~S. Shamsabadi, A.~Bellet, and D.~Gatica-Perez,
  ``$\{$GAP$\}$: Differentially private graph neural networks with aggregation
  perturbation,'' in \emph{32nd USENIX Security Symposium (USENIX Security
  23)}, 2023, pp. 3223--3240.

\bibitem{chen2020iterative}
Y.~Chen, L.~Wu, and M.~Zaki, ``Iterative deep graph learning for graph neural
  networks: Better and robust node embeddings,'' \emph{Adv. Neural
  Inf. Process. Syst.}, vol.~33, pp. 19\,314--19\,326, 2020.

\bibitem{luo2021learning}
D.~Luo, W.~Cheng, W.~Yu, B.~Zong, J.~Ni, H.~Chen, and X.~Zhang, ``Learning to
  drop: Robust graph neural network via topological denoising,'' in
  \emph{Proceedings of the 14th ACM International Conference on Web Search and
  Data Mining}, 2021, pp. 779--787.

\bibitem{sun2022graph}
Q.~Sun, J.~Li, H.~Peng, J.~Wu, X.~Fu, C.~Ji, and S.~Y. Philip, ``Graph
  structure learning with variational information bottleneck,'' in
  \emph{Proc. AAAI Conf. Artif. Intell.},
  vol.~36, 2022, pp. 4165--4174.

\bibitem{wang2023prose}
H.~Wang, Y.~Fu, T.~Yu, L.~Hu, W.~Jiang, and S.~Pu, ``Prose: Graph structure
  learning via progressive strategy,'' in \emph{Proceedings of the 29th ACM
  SIGKDD Conference on Knowledge Discovery and Data Mining}, 2023, pp.
  2337--2348.

\bibitem{li2022reliable}
K.~Li, Y.~Liu, X.~Ao, J.~Chi, J.~Feng, H.~Yang, and Q.~He, ``Reliable
  representations make a stronger defender: Unsupervised structure refinement
  for robust gnn,'' in \emph{Proceedings of the 28th ACM SIGKDD Conference on
  Knowledge Discovery and Data Mining}, 2022, pp. 925--935.

\bibitem{zhao2023self}
J.~Zhao, Q.~Wen, M.~Ju, C.~Zhang, and Y.~Ye, ``Self-supervised graph structure
  refinement for graph neural networks,'' in \emph{Proceedings of the Sixteenth
  ACM International Conference on Web Search and Data Mining}, 2023, pp.
  159--167.

\bibitem{wu2022trustworthy}
B.~Wu, Y.~Bian, H.~Zhang, J.~Li, J.~Yu, L.~Chen, C.~Chen, and J.~Huang,
  ``Trustworthy graph learning: Reliability, explainability, and privacy
  protection,'' in \emph{Proceedings of the 28th ACM SIGKDD Conference on
  Knowledge Discovery and Data Mining}, 2022, pp. 4838--4839.

\bibitem{zhang2024trustworthy}
H.~Zhang, B.~Wu, X.~Yuan, S.~Pan, H.~Tong, and J.~Pei, ``Trustworthy graph
  neural networks: Aspects, methods, and trends,'' \emph{Proceedings of the
  IEEE}, vol.~112, no.~2, pp. 97--139, 2024.

\bibitem{alemi2017variational}
A.~A. Alemi, I.~Fischer, J.~V. Dillon, and K.~Murphy, ``Deep variational
  information bottleneck,'' in \emph{International Conference on Learning
  Representations}, 2017.

\bibitem{belghazi2018mutual}
M.~I. Belghazi, A.~Baratin, S.~Rajeshwar, S.~Ozair, Y.~Bengio, A.~Courville,
  and D.~Hjelm, ``Mutual information neural estimation,'' in
  \emph{International Conference on Machine Learning}.\hskip 1em plus 0.5em
  minus 0.4em\relax PMLR, 2018, pp. 531--540.

\bibitem{cheng2020club}
P.~Cheng, W.~Hao, S.~Dai, J.~Liu, Z.~Gan, and L.~Carin, ``Club: A contrastive
  log-ratio upper bound of mutual information,'' in \emph{International
  Conference on Machine Learning}.\hskip 1em plus 0.5em minus 0.4em\relax PMLR,
  2020, pp. 1779--1788.

\bibitem{hjelm2018learning}
R.~D. Hjelm, A.~Fedorov, S.~Lavoie-Marchildon, K.~Grewal, P.~Bachman,
  A.~Trischler, and Y.~Bengio, ``Learning deep representations by mutual
  information estimation and maximization,'' in \emph{International Conference
  on Learning Representations}, 2019.

\bibitem{poole2019variational}
B.~Poole, S.~Ozair, A.~Van Den~Oord, A.~Alemi, and G.~Tucker, ``On variational
  bounds of mutual information,'' in \emph{International Conference on Machine
  Learning}.\hskip 1em plus 0.5em minus 0.4em\relax PMLR, 2019, pp. 5171--5180.

\bibitem{wang2021privacy}
B.~Wang, J.~Guo, A.~Li, Y.~Chen, and H.~Li, ``Privacy-preserving representation
  learning on graphs: A mutual information perspective,'' in \emph{Proceedings
  of the 27th ACM SIGKDD Conference on Knowledge Discovery \& Data Mining},
  2021, pp. 1667--1676.

\end{thebibliography}






\begin{IEEEbiography}[{\includegraphics[width=1in,height=1.25in,clip,keepaspectratio]{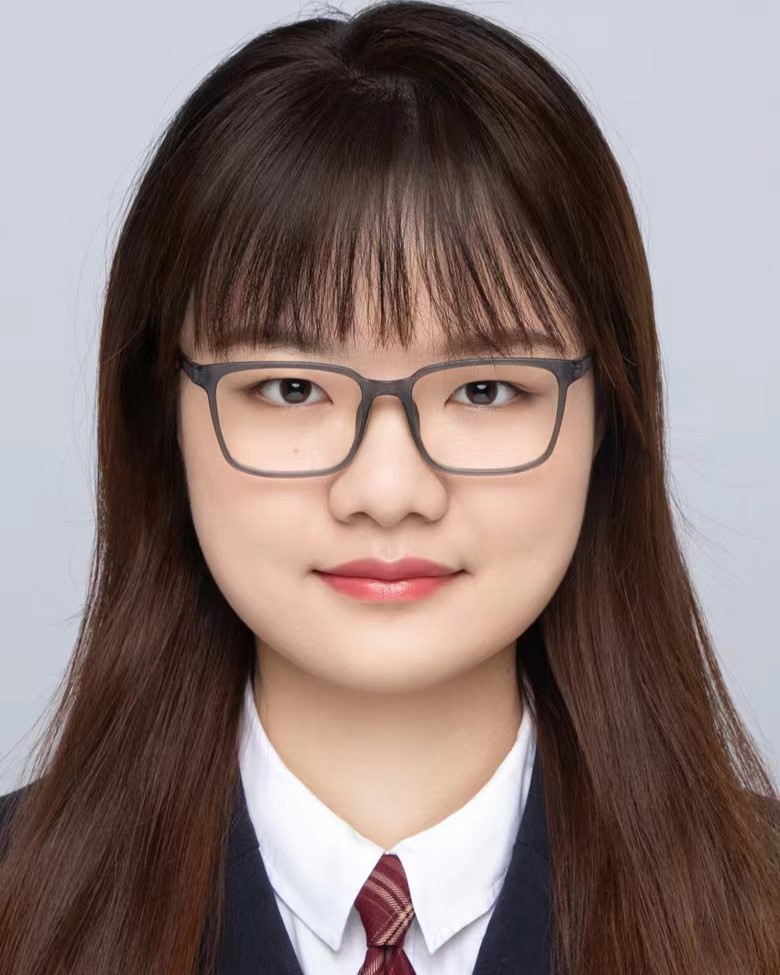}}]{Yucheng Wu}
received the bachelor's degree in data science and big data technology from Shanghai University of Finance and Economics, China, in 2023. She is currently working toward the PhD degree in computer software and theory with the School of Computer Science, Peking University, China. Her research interests include large-scale data mining and graph neural networks.
\end{IEEEbiography}

\begin{IEEEbiography}[{\includegraphics[width=1in,height=1.5in,clip,keepaspectratio]{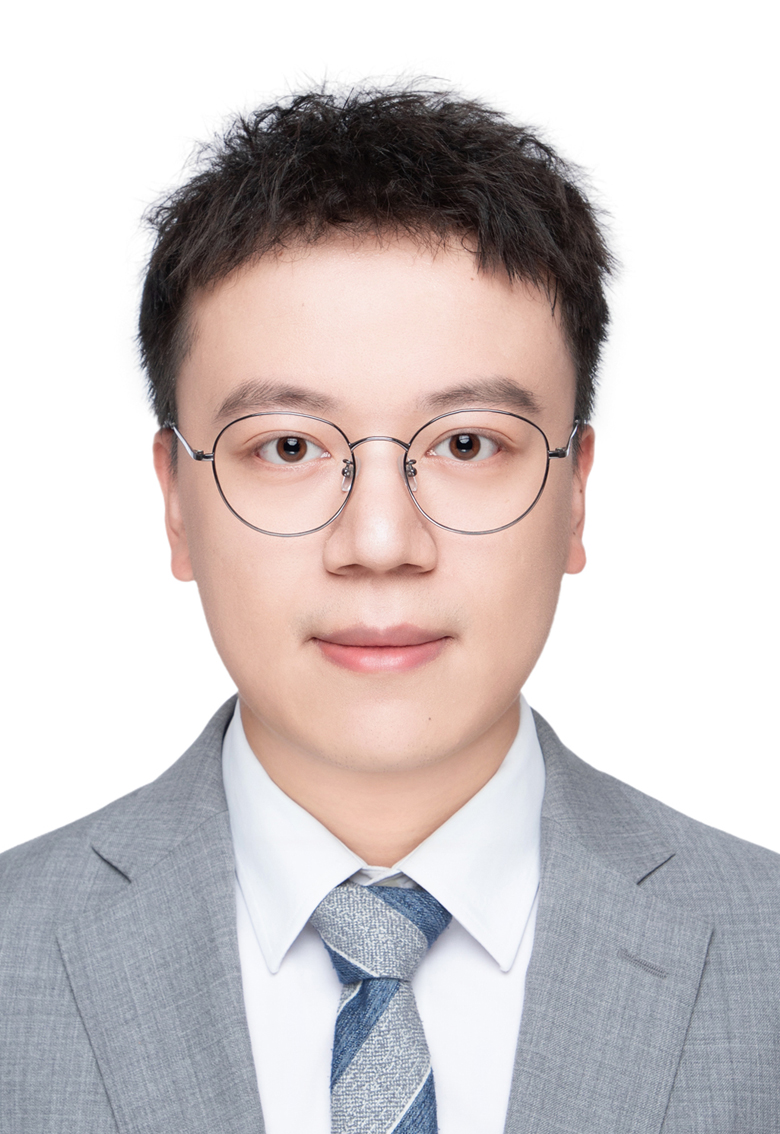}}]{Yuncong Yang}
received a BS, MS and Ph.D. degree in management science and engineering from the Shanghai University of Finance and Economics, in 2018, 2020 and 2025, respectively. He is a member of the Key Laboratory of Interdisciplinary Research of Computation and Economics (Shanghai University Finance and Economics), Ministry of Education. His work has been published in leading journals in computer science, including TDSC and TKDE. His research interests include graph neural networks and privacy protection in machine learning.
\end{IEEEbiography}

\begin{IEEEbiography}[{\includegraphics[width=1in,height=1.25in,clip,keepaspectratio]{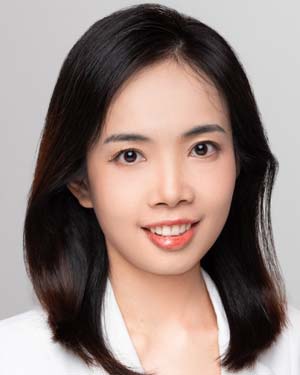}}]{Xiao Han} is a full professor at the School of Economics and Management, Beihang University. She received a Ph.D. in informatics from the Pierre and Marie Curie University and Institut Mines-TELECOM/TELECOM SudParis in 2015. Her research focuses on data-driven intelligent systems in business and societal contexts, with particular focuses on data security and privacy. Her work has been published in leading journals and top-tier conference proceedings in information systems and computer science, including MISQ, JOC, TDSC, TIFS, TKDE, TSE, WWW, AAAI, etc.
\end{IEEEbiography}

\begin{IEEEbiography}[{\includegraphics[width=1in,height=1.25in,clip,keepaspectratio]{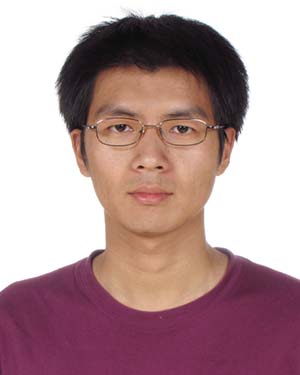}}]{Leye Wang} is a tenured associate professor at Key Lab of High Confidence Software Technologies (Peking University), Ministry of Education, China, and School of Computer Science, Peking University. His research interests include ubiquitous computing and data privacy protection. Wang received a Ph.D. in computer science from the Pierre and Marie Curie University and Institut Mines-TELECOM/TELECOM SudParis, France, in 2016. His research has appeared in journals and conference proceedings such as MISQ, JOC, TDSC, TIFS, TKDE, AI Journal, IEEE Computer, IEEE Comm. Mag., WWW, AAAI, ASE, etc.computing, mobile crowdsensing, and urban computing.
\end{IEEEbiography}

\begin{IEEEbiography}[{\includegraphics[width=1in,height=1.25in,clip,keepaspectratio]{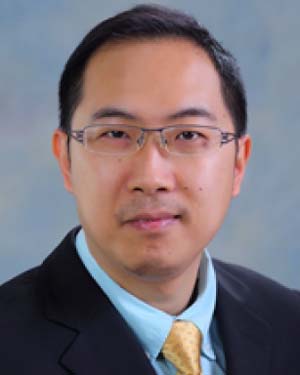}}]{Junjie Wu} is currently the full professor of the School of Economics and Management, Beihang University. He is also the director of the MIIT Key Laboratory of Data Intelligence and Management. He holds a B.E. degree from the School of Civil Engineering and a Ph.D. degree from the School of Economics and Management, Tsinghua University. His research interests include data and decision intelligence, with intense applications to business, finance, cities and industries. He has published prolifically in journals and proceedings including MISQ, ISR, JOC, TKDE, KDD, NeurIPS, AAAI, etc.
\end{IEEEbiography}



\vfill

\clearpage
\appendix

\subsection{Proof of Theoretical Analysis}
\label{appendix:proof}

\subsubsection{Convergence of the PPGSL}
\label{appendix:proof2}

\begin{lemma}
\label{lemma:stable_update}
Suppose that \(\mathcal{L}(\theta_1, \theta_2)\) is strongly convex with respect to \(\theta_1\) for any fixed \(\theta_2\), and that \(\mathcal{L}\) is continuously differentiable with respect to both \(\theta_1\) and \(\theta_2\). Then the optimal solution \(\theta_1^*\) changes slowly as \(\theta_2\) varies.
\end{lemma}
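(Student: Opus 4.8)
The plan is to regard the minimizer of $\mathcal{L}(\cdot,\theta_2)$ as an implicitly defined function of $\theta_2$ and to bound its rate of change. First I would define $\theta_1^*(\theta_2) := \arg\min_{\theta_1}\mathcal{L}(\theta_1,\theta_2)$, which is well-defined and unique for each $\theta_2$ because strong convexity in $\theta_1$ guarantees a unique minimizer. The first-order stationarity condition $\nabla_{\theta_1}\mathcal{L}(\theta_1^*(\theta_2),\theta_2)=0$ then holds for every $\theta_2$, and this identity is the anchor for all subsequent estimates.

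Next I would establish Lipschitz continuity of $\theta_1^*(\cdot)$ directly from strong convexity, avoiding any appeal to second derivatives (which the $C^1$ hypothesis does not by itself guarantee). Fix two values $\theta_2,\theta_2'$ with corresponding minimizers $u=\theta_1^*(\theta_2)$ and $u'=\theta_1^*(\theta_2')$. The first-order characterization of $m$-strong convexity, $\langle \nabla_{\theta_1}\mathcal{L}(x,\theta_2)-\nabla_{\theta_1}\mathcal{L}(y,\theta_2),\,x-y\rangle \ge m\|x-y\|^2$, applied at $x=u'$, $y=u$ together with $\nabla_{\theta_1}\mathcal{L}(u,\theta_2)=0$, gives $\langle \nabla_{\theta_1}\mathcal{L}(u',\theta_2),\,u'-u\rangle \ge m\|u'-u\|^2$. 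Since $\nabla_{\theta_1}\mathcal{L}(u',\theta_2')=0$ as well, the left-hand side equals $\langle \nabla_{\theta_1}\mathcal{L}(u',\theta_2)-\nabla_{\theta_1}\mathcal{L}(u',\theta_2'),\,u'-u\rangle$.

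I would then bound this inner product by Cauchy--Schwarz together with a Lipschitz-in-$\theta_2$ control of the gradient, $\|\nabla_{\theta_1}\mathcal{L}(u',\theta_2)-\nabla_{\theta_1}\mathcal{L}(u',\theta_2')\|\le L\|\theta_2-\theta_2'\|$, which yields $m\|u'-u\|^2 \le L\|\theta_2-\theta_2'\|\,\|u'-u\|$ and hence $\|\theta_1^*(\theta_2)-\theta_1^*(\theta_2')\|\le (L/m)\|\theta_2-\theta_2'\|$. This is exactly the assertion that the optimal solution changes slowly: it is Lipschitz in $\theta_2$ with modulus equal to the ratio of the cross-Lipschitz constant to the strong-convexity constant.

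The main obstacle is making ``changes slowly'' precise and supplying the regularity needed to control the perturbation term $\nabla_{\theta_1}\mathcal{L}(u',\theta_2)-\nabla_{\theta_1}\mathcal{L}(u',\theta_2')$: the stated hypotheses of $C^1$ smoothness and strong convexity alone do not quantify how this term grows with $\|\theta_2-\theta_2'\|$, so one must either state explicitly an auxiliary Lipschitz (or boundedness) assumption on the mixed behavior of the gradient, or, when $\mathcal{L}$ is in fact twice continuously differentiable, invoke the implicit function theorem. In the latter case, differentiating the stationarity identity gives $\partial\theta_1^*/\partial\theta_2 = -(\nabla^2_{\theta_1\theta_1}\mathcal{L})^{-1}\nabla^2_{\theta_1\theta_2}\mathcal{L}$, whose norm is bounded by $\|\nabla^2_{\theta_1\theta_2}\mathcal{L}\|/m$ because strong convexity forces the smallest eigenvalue of the Hessian to exceed $m$. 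Either way, the strong-convexity constant $m$ appears as the crucial denominator, so the interpretation is that sharper curvature in $\theta_1$ stabilizes the minimizer against perturbations of $\theta_2$.
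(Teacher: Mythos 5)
Your proof is correct, and it takes a genuinely different main route from the paper's. The paper proceeds via second-order information: it invokes the implicit function theorem on the stationarity identity $\nabla_{\theta_1}\mathcal{L}(\theta_1^*(\theta_2),\theta_2)=0$, differentiates to get $\frac{d\theta_1^*}{d\theta_2}=-\bigl(\nabla^2_{\theta_1}\mathcal{L}\bigr)^{-1}\nabla_{\theta_1}\nabla_{\theta_2}\mathcal{L}$, and bounds the Jacobian norm by $\frac{1}{m}\|\nabla_{\theta_1}\nabla_{\theta_2}\mathcal{L}\|$ --- exactly the computation you relegate to your fallback case. Your primary argument instead stays at first order: you use the strong-monotonicity characterization $\langle\nabla_{\theta_1}\mathcal{L}(x,\theta_2)-\nabla_{\theta_1}\mathcal{L}(y,\theta_2),x-y\rangle\ge m\|x-y\|^2$, anchor it at the two stationary points, and close with Cauchy--Schwarz plus a cross-Lipschitz bound to obtain $\|\theta_1^*(\theta_2)-\theta_1^*(\theta_2')\|\le (L/m)\|\theta_2-\theta_2'\|$. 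This buys two things the paper's version lacks. First, it is honest about regularity: the lemma as stated assumes only continuous differentiability, yet the paper's proof manipulates Hessians and requires $\theta_1^*(\cdot)$ to be differentiable, which needs $\mathcal{L}\in C^2$; your monotonicity argument needs no second derivatives and no differentiability of $\theta_1^*$ at all. Second, it delivers a global quantitative Lipschitz estimate rather than a pointwise derivative bound. Both arguments share the same unavoidable caveat, which you correctly identify: the stated hypotheses do not control how $\nabla_{\theta_1}\mathcal{L}(u,\cdot)$ varies with $\theta_2$, so an auxiliary cross-Lipschitz (or bounded mixed-derivative) assumption must be added --- the paper acknowledges this only implicitly by requiring $\nabla_{\theta_1}\nabla_{\theta_2}\mathcal{L}$ to be ``well behaved,'' whereas you state the needed hypothesis explicitly. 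In both treatments the strong-convexity modulus $m$ appears as the stabilizing denominator, so the quantitative conclusions agree.
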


\begin{proof}
Define \(\theta_1^*(\theta_2) = \arg \min_{\theta_1} \mathcal{L}(\theta_1, \theta_2)\) as the optimal value of \(\theta_1\) given a fixed \(\theta_2\). Since \(\mathcal{L}(\theta_1, \theta_2)\) is strongly convex in \(\theta_1\) for each fixed \(\theta_2\), \(\theta_1^*(\theta_2)\) exists uniquely.

By the first-order optimality condition for \(\theta_1^*(\theta_2)\), we have:
\begin{equation}
   \nabla_{\theta_1} \mathcal{L}(\theta_1^*(\theta_2), \theta_2) = 0
\end{equation}

Since \(\mathcal{L}\) is strongly convex in \(\theta_1\), the Hessian \(\nabla_{\theta_1}^2 \mathcal{L}(\theta_1^*(\theta_2), \theta_2)\) is positive definite for each \(\theta_2\). By the implicit function theorem, \(\theta_1^*(\theta_2)\) is a continuously differentiable function of \(\theta_2\).

Taking the total derivative of \(\nabla_{\theta_1} \mathcal{L}(\theta_1^*(\theta_2), \theta_2) = 0\) with respect to \(\theta_2\), we obtain the following:
\begin{equation}
\begin{aligned}
   \frac{d}{d \theta_2} \nabla_{\theta_1} \mathcal{L}(\theta_1^*(\theta_2), \theta_2)
   &= \nabla_{\theta_1}^2 \mathcal{L}(\theta_1^*(\theta_2), \theta_2) \cdot \frac{d \theta_1^*(\theta_2)}{d \theta_2} \\
   &+ \nabla_{\theta_1} \nabla_{\theta_2} \mathcal{L}(\theta_1^*(\theta_2), \theta_2) = 0
\end{aligned}
\end{equation}

Solving for \(\frac{d \theta_1^*(\theta_2)}{d \theta_2}\), we obtain the following:
\begin{equation}
   \frac{d \theta_1^*(\theta_2)}{d \theta_2} = -\left(\nabla_{\theta_1}^2 \mathcal{L}(\theta_1^*(\theta_2), \theta_2)\right)^{-1} \nabla_{\theta_1} \nabla_{\theta_2} \mathcal{L}(\theta_1^*(\theta_2), \theta_2)
\end{equation}

Since \(\mathcal{L}\) is strongly convex with respect to \(\theta_1\), \(\nabla_{\theta_1}^2 \mathcal{L}(\theta_1^*(\theta_2), \theta_2)\) is bounded below by a positive constant \(m\), giving the following:
\begin{equation}
   \left\| \frac{d \theta_1^*(\theta_2)}{d \theta_2} \right\| \leq \frac{1}{m} \left\| \nabla_{\theta_1} \nabla_{\theta_2} \mathcal{L}(\theta_1^*(\theta_2), \theta_2) \right\|
\end{equation}

Thus, as \(\theta_2\) varies, the change in \(\theta_1^*(\theta_2)\) is bounded, implying that \(\theta_1^*(\theta_2)\) varies slowly with respect to \(\theta_2\) if \(\nabla_{\theta_1} \nabla_{\theta_2} \mathcal{L}\) is well behaved.

This bound on \(\frac{d \theta_1^*(\theta_2)}{d \theta_2}\) implies that \(\theta_1^*(\theta_2)\) changes gradually as \(\theta_2\) changes, ensuring that the updates to \(\theta_2\) do not cause large, abrupt changes in optimal \(\theta_1\). Hence, the update scheme where \(\theta_1\) is reoptimized for each update of \(\theta_2\) remains stable.
\end{proof}

This lemma provides the theoretical basis for the claim that, under strong convexity of \(\mathcal{L}\) with respect to \(\theta_1\), the optimal \(\theta_1\) will vary smoothly as \(\theta_2\) is updated, making this training approach converge more stably.

\begin{proposition}
\label{co:convergence1}
Under the training of the PPGSL with SITP, the following inequality holds from the $t$-th iteration to the $(t+1)$-th iteration:
\begin{equation}
\mathbb{E}[\mathcal{L}_{learner}(\phi^{(t+1)}, \theta^{(t+1)})] \leq\mathbb{E}[\mathcal{L}_{learner}(\phi^{(t)}, \theta^{(t)})]
\end{equation}
\end{proposition}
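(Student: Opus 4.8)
The plan is to telescope the one-iteration change in $\mathcal{L}_{learner}$ into the two sub-steps that SITP performs in sequence—first the reinitialize-and-retrain of the attacker parameters $\phi$ at the frozen graph $\theta^{(t)}$, then the single gradient step on $\theta$ at the frozen attacker $\phi^{(t+1)}$—and then to show that the guaranteed decrease of the $\theta$-step controls the change induced by the $\phi$-step. Concretely, I would write
\begin{equation}
\begin{aligned}
&\mathcal{L}_{learner}(\phi^{(t+1)},\theta^{(t+1)}) - \mathcal{L}_{learner}(\phi^{(t)},\theta^{(t)}) \\
&= \underbrace{\big[\mathcal{L}_{learner}(\phi^{(t+1)},\theta^{(t+1)}) - \mathcal{L}_{learner}(\phi^{(t+1)},\theta^{(t)})\big]}_{(\mathrm{A})\ \theta\text{-step}} \\
&\quad + \underbrace{\big[\mathcal{L}_{learner}(\phi^{(t+1)},\theta^{(t)}) - \mathcal{L}_{learner}(\phi^{(t)},\theta^{(t)})\big]}_{(\mathrm{B})\ \phi\text{-step}}
\end{aligned}
\end{equation}
and bound each bracket separately, finally taking the expectation over the random reinitialization of $\phi$ and the edge sampling.

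For term $(\mathrm{A})$ I would invoke the standard descent lemma. Since $\theta^{(t+1)} = \theta^{(t)} - \eta\,\nabla_{\theta}\mathcal{L}_{learner}(\phi^{(t+1)},\theta^{(t)})$, assuming $\mathcal{L}_{learner}$ is $L$-smooth in $\theta$ and choosing a learning rate $\eta \leq 1/L$, a one-step gradient update yields $(\mathrm{A}) \leq -\eta\big(1-\tfrac{L\eta}{2}\big)\,\|\nabla_{\theta}\mathcal{L}_{learner}(\phi^{(t+1)},\theta^{(t)})\|^{2} \leq 0$, i.e.\ a genuine first-order decrease whose magnitude I can track.

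For term $(\mathrm{B})$ I would first observe that the utility component $\mathcal{L}_{util}$ in Eq.~\ref{eq:utility_F} does not depend on $\phi$, so $(\mathrm{B})$ equals the change in $\mathcal{L}_{priv}$ alone. The subtlety here is that the freshly retrained attacker $\phi^{(t+1)}$ is typically \emph{stronger} than $\phi^{(t)}$ (which had converged on the previous graph $\theta^{(t-1)}$), so $(\mathrm{B})$ need not itself be nonpositive. To keep it under control I would apply Lemma~\ref{lemma:stable_update}: under strong convexity of $\mathcal{L}_{attack}$ in $\phi$, the converged attacker $\phi^{*}(\theta)$ is Lipschitz in $\theta$, hence $\|\phi^{(t+1)}-\phi^{(t)}\| = \|\phi^{*}(\theta^{(t)})-\phi^{*}(\theta^{(t-1)})\|$ is bounded by a constant times $\|\theta^{(t)}-\theta^{(t-1)}\| = O(\eta)$. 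Combined with smoothness of $\mathcal{L}_{priv}$ in $\phi$, this forces $(\mathrm{B})$ to be at most of order $\eta$ (and higher-order near stationarity), so the attacker retraining cannot move the objective abruptly.

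Combining the two bounds and taking expectations gives the claimed inequality, with the Monotone Convergence Theorem then closing out the parent Theorem~\ref{thm:convergence} using the assumed lower bound on $\mathcal{L}_{learner}$. The main obstacle, which is exactly what Lemma~\ref{lemma:stable_update} is there to resolve, is quantitatively reconciling $(\mathrm{A})$ and $(\mathrm{B})$: I must guarantee that the first-order descent secured by the $\theta$-step is not overwhelmed by the increase from retraining a stronger attacker. The stability lemma keeps $(\mathrm{B})$ small and controlled relative to the $\theta$-descent, while the expectation absorbs the reinitialization and sampling randomness; making this domination precise (rather than merely asymptotic in $\eta$) is the delicate part of the argument.
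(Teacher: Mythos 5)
Your proposal follows essentially the same route as the paper's proof: the identical decomposition of one SITP iteration into the $\phi$-retraining step and the $\theta$-descent step, the same descent-lemma bound under $\eta \leq 1/L$ for the $\theta$-step, and the same appeal to Lemma~\ref{lemma:stable_update} to argue that retraining $\phi$ perturbs $\mathcal{L}_{learner}$ only slightly. The one difference is in handling your term $(\mathrm{B})$: the paper simply declares it negligible, writing $\mathbb{E}[\mathcal{L}_{learner}(\phi^{(t+1)}, \theta^{(t)})] \;\dot{=}\; \mathbb{E}[\mathcal{L}_{learner}(\phi^{(t)}, \theta^{(t)})]$ and combining, whereas you attempt a quantitative $O(\eta)$ bound and correctly flag that a first-order-in-$\eta$ increase from $(\mathrm{B})$ cannot in general be dominated by the $(\mathrm{A})$-descent of order $\eta\|\nabla_{\theta}\mathcal{L}_{learner}\|^{2}$ near stationarity---an unresolved delicacy that is present, but unacknowledged, in the paper's own argument as well.
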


\begin{proof}

(1) Update Step for \(\phi\).
At each iteration \(t\), we first reinitialize \(\phi\) and optimize it with \(\theta^{(t)}\) fixed to minimize \(\mathcal{L}_{attack}(\phi, \theta^{(t)})\). Since the reinitialization is random, we consider the expected value of \(\mathcal{L}_{attack}\) after this step. Let \(\mathbb{E}[\mathcal{L}_{attack}(\phi^{(t+1)}, \theta^{(t)})]\) represent the expected minimum value of \(\mathcal{L}_{attack}\) after reinitialization and optimization over \(\phi\). By definition of \(\phi^{(t+1)}\), we have the following:
\begin{equation}
\mathbb{E}[\mathcal{L}_{attack}(\phi^{(t+1)}, \theta^{(t)})] \leq \mathbb{E}[\mathcal{L}_{attack}(\phi^{(t)}, \theta^{(t)})]
\end{equation}

As $\theta$ only updates one step in each iteration, $\mathcal{L}_{learner} (\phi,\theta)$ changes more drastically when $\theta$ changes, and the change in $\phi$ is relatively small when $\theta$ changes according to Lemma~\ref{lemma:stable_update}. Thus, the change in $\phi$ results in little change in $\mathcal{L}_{learner} (\phi,\theta)$, and we see it as not changing:
\begin{equation}
\label{eq:learner1}
\mathbb{E}[\mathcal{L}_{learner}(\phi^{(t+1)}, \theta^{(t)})] \;\dot{=}\; \mathbb{E}[\mathcal{L}_{learner}(\phi^{(t)}, \theta^{(t)})]
\end{equation}

(2) Update Step for \(\theta\). Suppose that \(\mathcal{L}_{learner}\) is convex in \(\theta\) with a Lipschitz continuous gradient.
We fix \(\phi^{(t+1)}\) and update \(\theta\) via a gradient descent step to minimize \(\mathcal{L}_{learner}(\phi^{(t+1)}, \theta)\). Assuming that \(\mathcal{L}_{learner}\) is convex in \(\theta\) and has a Lipschitz continuous gradient, we can ensure that for a sufficiently small learning rate \(\eta\), the update satisfies the following:
\begin{equation}
\begin{aligned}
\mathcal{L}_{learner}(\phi^{(t+1)}, \theta^{(t+1)}) &\leq \mathcal{L}_{learner}(\phi^{(t+1)}, \theta^{(t)}) \\
&- \frac{\eta}{2} \|\nabla_{\theta} \mathcal{L}_{learner}(\phi^{(t+1)}, \theta^{(t)})\|^2
\end{aligned}
\end{equation}

To derive this, we apply the Lipschitz continuity property to expand \(\mathcal{L}_{learner}(\phi^{(t+1)}, \theta^{(t+1)})\) around \(\theta^{(t)}\) as follows:
\begin{equation}
\begin{aligned}
\mathcal{L}_{learner}(\phi^{(t+1)}, \theta^{(t+1)}) &\leq \mathcal{L}_{learner}(\phi^{(t+1)}, \theta^{(t)}) \\
&+ \nabla_{\theta} \mathcal{L}_{learner}(\phi^{(t+1)}, \theta^{(t)})^\top (\theta^{(t+1)} - \theta^{(t)}) \\
&+ \frac{L}{2} \|\theta^{(t+1)} - \theta^{(t)}\|^2
\end{aligned}
\end{equation}
where \( L \) represents the Lipschitz constant of the gradient of the function \( \mathcal{L}_{learner}(\phi, \theta) \) with respect to \(\theta\). This means that the gradient of \(\mathcal{L}_{learner}\) with respect to \(\theta\) does not change too quickly, which we can mathematically state as
$\|\nabla_{\theta} \mathcal{L}_{learner}(\phi, \theta') - \nabla_{\theta} \mathcal{L}_{learner}(\phi, \theta)\| \leq L \|\theta' - \theta\|$
for any \(\theta\) and \(\theta'\).

Substitute the gradient descent update \(\theta^{(t+1)} = \theta^{(t)} - \eta \nabla_{\theta} \mathcal{L}_{learner}(\phi^{(t+1)}, \theta^{(t)})\) into the following inequality:
\begin{equation}
\begin{aligned}
\mathcal{L}_{learner}(\phi^{(t+1)}, \theta^{(t+1)}) &\leq \mathcal{L}_{learner}(\phi^{(t+1)}, \theta^{(t)}) \\
&- \eta \|\nabla_{\theta} \mathcal{L}_{learner}(\phi^{(t+1)}, \theta^{(t)})\|^2 \\
&+ \frac{L \eta^2}{2} \|\nabla_{\theta} \mathcal{L}_{learner}(\phi^{(t+1)}, \theta^{(t)})\|^2
\end{aligned}
\end{equation}

The terms can be rearranged to obtain the following:
\begin{equation}
\begin{aligned}
\mathcal{L}_{learner}(\phi^{(t+1)}, \theta^{(t+1)}) &\leq \mathcal{L}_{learner}(\phi^{(t+1)}, \theta^{(t)}) \\
&- \left(\eta - \frac{L \eta^2}{2}\right) \|\nabla_{\theta} \mathcal{L}_{learner}(\phi^{(t+1)}, \theta^{(t)})\|^2
\end{aligned}
\end{equation}

Choosing \(\eta \leq \frac{1}{L}\) ensures that \(\eta - \frac{L \eta^2}{2} \geq \frac{\eta}{2}\), yielding
\begin{equation}
\begin{aligned}
\mathcal{L}_{learner}(\phi^{(t+1)}, \theta^{(t+1)}) &\leq \mathcal{L}_{learner}(\phi^{(t+1)}, \theta^{(t)})\\
&- \frac{\eta}{2} \|\nabla_{\theta} \mathcal{L}_{learner}(\phi^{(t+1)}, \theta^{(t)})\|^2
\end{aligned}
\end{equation}

Thus, as expected, updating \(\theta\) decreases \(\mathcal{L}_{learner}\) as follows:
\begin{equation}
\begin{aligned}
\label{eq:learner2}
\mathbb{E}[\mathcal{L}_{learner}(\phi^{(t+1)}, \theta^{(t+1)})]
\leq\mathbb{E}[\mathcal{L}_{learner}(\phi^{(t+1)}, \theta^{(t)})]
\end{aligned}
\end{equation}

(3) Combining Updates.
Combining Eq.~\ref{eq:learner1} and Eq.~\ref{eq:learner2}, the following inequality holds from the $t$-th iteration to the $(t+1)$-th iteration:
\begin{equation}
\mathbb{E}[\mathcal{L}_{learner}(\phi^{(t+1)}, \theta^{(t+1)})] \leq\mathbb{E}[\mathcal{L}_{learner}(\phi^{(t)}, \theta^{(t)})]
\end{equation}

\end{proof}

\subsubsection{Optimal Privacy--utility Trade-off of the PPGSL}

\begin{proposition}
\label{theorem:lagrange_dual}
Consider the following two optimization problems:
\begin{itemize}[leftmargin=5.5mm]
    \item Constrained Optimization Problem:
    \begin{equation}
        \min_\theta \, \mathcal L_{priv}(\theta) \quad \text{s.t.} \quad  \mathcal L_{util}(\theta) \leq \epsilon
    \end{equation}
    \item Unconstrained Regularization Problem:
    \begin{equation}
         \min_\theta \, \mathcal L_{learner}(\theta) = \mathcal L_{priv}(\theta) + \alpha \, \mathcal L_{util}(\theta)
    \end{equation}
\end{itemize}

Assume that both $\mathcal L_{priv}(\theta)$ and $\mathcal L_{util}(\theta)$ are convex functions of $\theta$ and that the feasible region satisfies Slater's condition. Then, there must exist an \(\alpha^* \geq 0\) such that if \(\alpha = \alpha^*\), the optimal solutions of both the constrained and unconstrained problems are equivalent.

\end{proposition}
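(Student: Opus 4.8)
The plan is to invoke Lagrangian duality, which directly links the constrained problem to its unconstrained regularized counterpart. First I would form the Lagrangian of the constrained problem by introducing a nonnegative multiplier $\lambda$ for the inequality constraint $\mathcal{L}_{util}(\theta) \leq \epsilon$:
\begin{equation}
L(\theta, \lambda) = \mathcal{L}_{priv}(\theta) + \lambda\left(\mathcal{L}_{util}(\theta) - \epsilon\right).
\end{equation}
Because $\mathcal{L}_{priv}$ and $\mathcal{L}_{util}$ are convex in $\theta$ and the feasible region satisfies Slater's condition (there is a strictly feasible $\theta$ with $\mathcal{L}_{util}(\theta) < \epsilon$), strong duality holds: the optimal primal and dual values coincide, and an optimal dual multiplier $\lambda^* \geq 0$ is attained.

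Next I would set $\alpha^* = \lambda^*$ and examine the Lagrangian at this value. Since the term $-\lambda^*\epsilon$ is a constant independent of $\theta$, minimizing $L(\theta, \lambda^*)$ over $\theta$ is equivalent to minimizing $\mathcal{L}_{priv}(\theta) + \alpha^*\mathcal{L}_{util}(\theta) = \mathcal{L}_{learner}(\theta)$. By strong duality, the minimizer of $L(\cdot, \lambda^*)$ coincides with the optimal solution $\theta^*$ of the constrained problem. The KKT conditions furnish the precise bridge: stationarity gives $\nabla_\theta \mathcal{L}_{priv}(\theta^*) + \alpha^* \nabla_\theta \mathcal{L}_{util}(\theta^*) = 0$, which is exactly the first-order optimality condition for the unconstrained problem at $\alpha = \alpha^*$, while complementary slackness $\lambda^*(\mathcal{L}_{util}(\theta^*) - \epsilon) = 0$ together with primal and dual feasibility certifies that $\theta^*$ is feasible for the constraint. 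I would conclude that $\theta^*$ solves both problems simultaneously, establishing the claimed equivalence.

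The main obstacle I anticipate is rigorously arguing that the two \emph{solution sets} coincide rather than merely their optimal objective values. Strong duality guarantees a shared optimal value and the existence of a common optimizer, but when the objectives are convex yet not strictly convex, multiple minimizers may exist. I would handle this by proving both inclusions: any constrained optimum $\theta^*$ minimizes $\mathcal{L}_{learner}$ at $\alpha^*$ via the KKT stationarity above, and conversely any minimizer of $\mathcal{L}_{learner}$ at $\alpha^* > 0$ must satisfy $\mathcal{L}_{util}(\theta) = \epsilon$ by complementary slackness, hence is constrained-feasible and constrained-optimal. A secondary subtlety is the boundary case $\alpha^* = 0$, corresponding to an inactive constraint where the unconstrained privacy minimizer already satisfies $\mathcal{L}_{util}(\theta) \leq \epsilon$; this is degenerate but remains consistent with the stated conclusion $\alpha^* \geq 0$.
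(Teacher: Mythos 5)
Your proof follows essentially the same route as the paper's: form the Lagrangian, invoke Slater's condition for strong duality and the KKT conditions, and identify $\alpha^*$ with the optimal multiplier $\lambda^*$ (the paper splits into an active-constraint case, where it writes $\alpha^*$ as a ratio of gradients, and the inactive case $\lambda^*=0$; your version is the same argument stated more cleanly). One caution on your ``both inclusions'' refinement, which goes beyond what the paper attempts: the converse inclusion cannot be obtained from complementary slackness, because that condition constrains the primal--dual \emph{optimal pair}, not arbitrary minimizers of $\mathcal{L}_{learner}$. Without strict convexity the regularized problem can have minimizers that are strictly feasible yet suboptimal for the constrained problem---for instance $\mathcal{L}_{priv}(\theta)=\theta$, $\mathcal{L}_{util}(\theta)=-\theta$, $\epsilon=0$ gives $\lambda^*=1$, so every $\theta\in\mathbb{R}$ minimizes $\mathcal{L}_{learner}$ while only $\theta=0$ solves the constrained problem---hence only the forward inclusion (which is all the paper itself establishes) holds in general.
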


\begin{proof}

We begin by constructing the Lagrangian for the constrained problem with the Lagrange multiplier $\lambda \geq 0$:
\begin{equation}
    \mathcal{L}_{lagr}(\theta, \lambda) = \mathcal{L}_{priv}(\theta) + \lambda \left( \mathcal{L}_{util}(\theta) - \epsilon \right)
\end{equation}
Our goal is to solve the following:
\begin{equation}
   \max_{\lambda \geq 0} \min_\theta \, \mathcal{L}_{lagr}(\theta, \lambda)
\end{equation}

Karush--Kuhn--Tucker (KKT) Conditions: Given that \(\mathcal{L}_{priv}(\theta)\) and \(\mathcal{L}_{util}(\theta)\) are convex functions and that the feasible region satisfies Slater's condition, the KKT conditions are both necessary and sufficient for optimality. Let \(\theta^*\) be the solution to the constrained problem. The KKT conditions for \((\theta^*, \lambda^*)\) are as follows:
\begin{itemize}[leftmargin=5.5mm]
\item Primal feasibility: \(\mathcal{L}_{util}(\theta^*) \leq \epsilon\).
\item Dual feasibility: \(\lambda^* \geq 0\).
\item Stationarity: \(\nabla_\theta \mathcal{L}_{priv}(\theta^*) + \lambda^* \nabla_\theta \mathcal{L}_{util}(\theta^*) = 0\).
\item Complementary slackness: \(\lambda^* \left( \mathcal{L}_{util}(\theta^*) - \epsilon \right) = 0\).
\end{itemize}

By the complementary slackness condition, we consider two possible cases:
\begin{itemize}[leftmargin=5.5mm]
\item Case 1: \(\mathcal{L}_{util}(\theta^*) = \epsilon\). Here, the constraint is active, meaning that \(\mathcal{L}_{util}(\theta^*)\) reaches the upper limit \(\epsilon\). The KKT conditions guarantee the existence of \(\lambda^* > 0\). On the basis of the stationarity condition, $\lambda^* = -\frac{\nabla_\theta \mathcal{L}_{priv}(\theta^*)}{\nabla_\theta \mathcal{L}_{util}(\theta^*)}$.
The solution \(\theta^*\) for the constrained problem also minimizes the objective $\mathcal{L}_{learner}(\theta) = \mathcal{L}_{priv}(\theta) + \alpha \mathcal{L}_{util}(\theta)$ when \(\alpha^* = -\frac{\nabla_\theta \mathcal{L}_{priv}(\theta^*)}{\nabla_\theta \mathcal{L}_{util}(\theta^*)}\), as $\nabla_\theta \mathcal{L}_{priv}(\theta^*) + \alpha^* \nabla_\theta \mathcal{L}_{util}(\theta^*) = 0$ for the extreme point \(\theta^*\).

\item Case 2: \(\lambda^* = 0\). Here, the utility constraint is not binding, \textit{i.e.}, \(\mathcal{L}_{util}(\theta^*) < \epsilon\). Since \(\lambda^* = 0\), minimizing the unconstrained problem is equivalent to minimizing \(\mathcal{L}_{priv}(\theta)\) alone. This solution also satisfies the original constraint \(\mathcal{L}_{util}(\theta^*) \leq \epsilon\) without requiring any additional penalty. Furthermore, for optimizing \(\mathcal{L}_{learner}(\theta)\), minimizing \(\mathcal{L}_{priv}(\theta)\) alone is equivalent to having \(\alpha^* = 0\).

\end{itemize}

Therefore, under these conditions, there exists an \(\alpha = \alpha^*\) such that the optimal solution \(\theta^*\) of the constrained problem also minimizes the unconstrained regularization problem. The two problems are equivalent when:
\begin{equation}
   \alpha = \alpha^*, \quad \text{where } \alpha^* \text{ satisfies the KKT conditions.}
\end{equation}

\end{proof}

\subsubsection{Generalized Privacy Protection Performance of the PPGSL}
The GNN encoder $f_\phi$, parameterized by $\phi$, maps $x_u$ in the observational space to an embedding vector $z_u$ in a latent space, \textit{i.e.}, $z_i=f_\phi (x_i)$, and $Z = [z_1; z_2;\ldots; z_N] \in\mathbb R^{N\times d}$ is the node embedding matrix. $Z'$ is the node embedding matrix of $G'$, where $Z'=f_{\phi}(G')$. Here, $f_\phi$ is the GNN encoder of the surrogate attack model. Taking node embedding as a bridge, we can reformulate the privacy goal $\min_{\theta} I(G'; E_s)$ as follows\footnote{Note that we slightly misuse the notations $G$ and $E_s$ for the sake of simplicity. They are initially employed to represent the original graph and the sensitive links, and we also use them to denote random variables here.}:
\begin{subequations}
\label{eq:mutual_info2}
\begin{align}
\label{eq:mutual_info2_i}
 \text{Embedding Goal: }  \max_{\phi} I(G';Z')\\
\label{eq:mutual_info2_p}
  \text{New Privacy Goal: } \min_\theta I(Z';E_s)
\end{align}
\end{subequations}

Since the random variables in Eq.~\ref{eq:mutual_info2} are possibly high-dimensional and their posterior distributions are unknown, the mutual information terms are difficult to calculate. Motivated by existing mutual information neural estimation methods~\cite{alemi2017variational,belghazi2018mutual,cheng2020club,hjelm2018learning,poole2019variational,wang2021privacy}, we solve this challenge by translating the intractable mutual information terms into tractable terms by designing variational bounds.

\begin{lemma}
\label{le:intermediate_goal}
Training neural networks $f_{\phi}$ with the objective function $\mathcal{L}_{attack}$ of the PPGSL (Eq.~\ref{eq:gnn_encoder}) is equivalent to achieving the embedding goal (Eq.~\ref{eq:mutual_info2_i}).
\end{lemma}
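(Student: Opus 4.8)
The plan is to recognize the surrogate attack objective $\mathcal{L}_{attack}$ in Eq.~\ref{eq:gnn_encoder} as a binary discrimination (noise-contrastive) loss and then invoke the Jensen--Shannon variational lower bound on mutual information in the style of Deep InfoMax~\cite{hjelm2018learning}. The key observation is that the positive samples drawn from $E_p'$ are node pairs that are genuinely linked in $G'$ (hence sampled from a distribution that reflects the joint structure of the graph and its embeddings), whereas the negative samples from $E_n'$ are unconnected pairs that behave like draws from a product-of-marginals distribution in which the presence of a link and the embedding similarity are decoupled. The prediction head $\delta_\varphi$, followed by the sigmoid $\sigma$, plays exactly the role of the variational discriminator/critic that separates these two distributions.

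First I would make this correspondence explicit by rewriting $-\mathcal{L}_{attack}$ (up to additive constants) as a Jensen--Shannon mutual-information estimator
\begin{equation}
\hat I^{(JSD)}(G';Z') = \mathbb{E}_{\langle v_i,v_j\rangle \in E_p'}\big[\log \delta_\varphi(z_i',z_j')\big] + \mathbb{E}_{\langle v_i,v_j\rangle \in E_n'}\big[\log(1-\delta_\varphi(z_i',z_j'))\big],
\end{equation}
so that minimizing $\mathcal{L}_{attack}$ over $(\phi,\varphi)$ is the same as maximizing $\hat I^{(JSD)}$. Next I would fix $\phi$ and optimize the discriminator parameters $\varphi$: by the standard pointwise argument (differentiating the integrand with respect to $\delta_\varphi$), the optimal discriminator recovers the density ratio between the positive- and negative-pair distributions, at which point $\hat I^{(JSD)}$ becomes the Jensen--Shannon divergence between the joint and product-of-marginals distributions. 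The final step is to invoke the known inequality that this JS-based estimator lower-bounds the true mutual information $I(G';Z')$ up to a monotone transformation and additive constant; hence maximizing the tightened estimator over $\phi$ drives $I(G';Z')$ upward, which is precisely the embedding goal in Eq.~\ref{eq:mutual_info2_i}.

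The hard part will be making the probabilistic setup rigorous rather than heuristic. In particular I must justify that the edge/non-edge sampling scheme over $E_p'$ and $E_n'$ genuinely instantiates sampling from the joint distribution versus the product of marginals when $G'$ and $Z'$ are promoted to random variables (as noted in the footnote), given that $Z'=f_\phi(G')$ is a deterministic map. I would handle this by working with the empirical distribution induced by the sampled link sets and by treating a node pair's link indicator as the coupling variable whose correlation with embedding similarity the discriminator measures. A secondary subtlety is the weighted target $w_{ij}'$ in the positive term: I would argue that, in the unweighted case, $-\mathcal{L}_{attack}$ equals $\hat I^{(JSD)}$ exactly, and that replacing the hard label $1$ with the edge weight only rescales the positive-class likelihood, preserving the equivalence up to an additive constant so that the identity of the maximizing $\phi$ is unchanged.
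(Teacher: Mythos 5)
Your proposal takes a genuinely different route from the paper, and the route has two concrete gaps. The paper's own proof (Eq.~\ref{eq:mutual_info15}) is a Barber--Agakov-style argument: it writes $I(G';Z')=I(w'_{ij};z'_i,z'_j)=H(w'_{ij})+\mathbb{E}_{p(z'_i,z'_j,w'_{ij})}\log p(w'_{ij}\mid z'_i,z'_j)$ and lower-bounds it by introducing an auxiliary \emph{posterior} $q_\varphi(w'_{ij}\mid z'_i,z'_j)$, parameterized by the prediction head $\delta_\varphi$; the gap is exactly an expected KL term, so the bound is tight when $q_\varphi$ matches the true posterior, and maximizing it over $(\phi,\varphi)$ is literally minimizing the cross-entropy $\mathcal{L}_{attack}$. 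Crucially, in that reading the negative set $E_n'$ consists of ordinary \emph{joint} samples whose label happens to be $w'=0$ --- no product-of-marginals construction is needed, and the soft target $w'_{ij}$ on positives is handled automatically because $q_\varphi$ models the conditional law of the edge weight.

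The first gap in your version is precisely the point you flagged: identifying $E_n'$ with the product of marginals is not justified for the quantity you need. A JSD/Deep-InfoMax estimator of $I(w'_{ij};z'_i,z'_j)$ would require negatives that pair a label with an \emph{independently} drawn embedding pair; unlinked node pairs are instead joint samples conditioned on $w'=0$. What your discriminator actually separates is the distribution of embedding pairs for linked versus unlinked nodes, and no link from that divergence to $I(G';Z')$ is established --- your suggested fix (treating the link indicator as the coupling variable the critic measures) effectively turns $\delta_\varphi$ back into a posterior estimator, i.e., it collapses into the paper's conditional-likelihood argument rather than salvaging the contrastive one. The second gap is the final inequality you invoke: at the optimal critic the binary cross-entropy objective equals $2\,\mathrm{JSD}-\log 4\le 0$ (since $\mathrm{JSD}\le\log 2$), so the bound $I(G';Z')\ge 2\,\mathrm{JSD}-\log 4$ is vacuous, and the claimed monotone relation between JSD and KL (i.e., MI) across different distribution pairs is an empirical heuristic in Deep InfoMax, not a theorem --- two pairs of distributions can be ordered oppositely under the two divergences. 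Hence ``maximizing the JSD estimator drives $I(G';Z')$ upward'' does not follow. A minor further issue: with soft target $w'_{ij}$ the positive term is $-w'_{ij}\log\delta_\varphi-(1-w'_{ij})\log(1-\delta_\varphi)$, which is not an additive-constant rescaling of $\log\delta_\varphi$, so your weighted-case reduction also needs repair; the paper's posterior-estimation framing avoids this entirely.
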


\begin{proof}
For the mutual information term in Eq.~\ref{eq:mutual_info2_i}, we derive the following variational lower bound:
\begin{equation}
\begin{aligned}
    &I(G';Z')\\
    =& I(w'_{ij};z'_i,z'_j) \\
    =& H(w'_{ij})-H(w'_{ij}|z'_i,z'_j) \\
    =& H(w'_{ij})+\mathbb E_{p(z'_i,z'_j,w'_{ij})}\log p(w'_{ij}|z'_i,z'_j) \\
    =& H(w'_{ij})+\mathbb E_{p(z'_i,z'_j,w'_{ij})}\,\textit{KL}\left (p(\cdot|z'_i,z'_j)||q_\varphi(\cdot|z'_i,z'_j) \right ) \\
    &\quad\quad\quad\quad + \mathbb E_{p(z'_i,z'_j,w'_{ij})}\log q_\varphi (w'_{ij}|z'_i,z'_j) \\
    \ge& H(w'_{ij})+ \mathbb E_{p(z'_i,z'_j,w'_{ij})}\log q_\varphi (w'_{ij}|z'_i,z'_j) \\
    :=&I_{\textit{vLB}}(w'_{ij};z'_i,z'_j)
\end{aligned}
\label{eq:mutual_info15}
\end{equation}
where $\textit{KL}\left (p(\cdot)||q(\cdot) \right )$ is the Kullback--Leibler divergence between two distributions $p(\cdot)$ and $q(\cdot)$ and is nonnegative. $q_\varphi$ is an (arbitrary) auxiliary posterior distribution. $I_{\textit{vLB}}(w'_{ij};z'_i,z'_j)$ is the variational lower bound of the mutual information term, and $H(w'_{ij})$ is a constant. Note that the lower bound is tight when the auxiliary distribution $q_\varphi$ becomes the true posterior distribution $p$.

Our goal is to maximize the variational lower bound by estimating the auxiliary posterior distribution $q_\varphi$ via a parameterized neural network.
We parameterize $q_\varphi$ via a link predictor $\delta_\varphi$ defined on the node representations.
Specifically, we have
\begin{equation}
\begin{aligned}
    &\max_{\phi}\max_\varphi I_{\textit{vLB}}(w'_{ij};z'_i,z'_j) \\
    \Leftrightarrow &\max_{\phi}\max_\varphi \mathbb E_{p(z'_i,z'_j,w'_{ij})}\log q_\varphi (w'_{ij}|z'_i,z'_j) \\
    = &\max_{\phi}\max_\varphi \mathbb E_{p(z'_i,z'_j,w'_{ij})}\log q_\varphi (w'_{ij}|f_{\phi}(x'_i),f_{\phi}(x'_j)) \\
    \approx &\max_{\phi}\max_\varphi \sum_{\langle i,j\rangle\in E_{p}'\cup E_{n}'} -\textit{CE}\, (\delta_\varphi (f_{\phi}(x'_i),f_{\phi}(x'_j)),w'_{ij}) \\
    \Leftrightarrow &\min_{\phi}\min_\varphi \sum_{\langle i,j\rangle\in E_{p}'\cup E_{n}'} \textit{CE}\, (\delta_\varphi (f_{\phi}(x'_i),f_{\phi}(x'_j)),w'_{ij})
\end{aligned}
\label{eq:mutual_info16}
\end{equation}
where $\textit{CE}\,(\cdot, \cdot)$ denotes the cross-entropy function. In the learned graph $G'$, $E_{p}'$ is a set of sampled existing edges, and $E_{n}'$ is a set of sampled nonexistent edges. The final objective function in Eq.~\ref{eq:mutual_info16} is just $\mathcal{L}_{attack}$ in Eq.~\ref{eq:gnn_encoder}.
Therefore, by taking $\mathcal{L}_{attack}$ as an objective function and training the parameterized neural networks, we can achieve the embedding goal (Eq.~\ref{eq:mutual_info2_i}).
\end{proof}

\begin{lemma}
\label{le:privacy_goal}
Training neural networks $\mathcal H_{\theta}$ with the objective function $\mathcal{L}_{priv}$ of the PPGSL (Eq.~\ref{eq:embedding_dis}) is equivalent to achieving the new privacy goal (Eq.~\ref{eq:mutual_info2_p}).
\end{lemma}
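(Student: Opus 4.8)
The plan is to mirror the variational argument of Lemma~\ref{le:intermediate_goal}, but with the inequality direction reversed: since the new privacy goal (Eq.~\ref{eq:mutual_info2_p}) \emph{minimizes} $I(Z';E_s)$, I need a variational \emph{upper} bound rather than the lower bound used there for the embedding goal. First I would reduce the global quantity $I(Z';E_s)$ to a per-pair quantity $I(z'_i,z'_j;s_{ij})$, where $s_{ij}\in\{0,1\}$ is the indicator that $\langle v_i,v_j\rangle\in E_s$, exactly as Eq.~\ref{eq:mutual_info15} reduces $I(G';Z')$ to $I(w'_{ij};z'_i,z'_j)$. I would then adopt the contrastive log-ratio upper bound (CLUB)~\cite{cheng2020club}, instantiating its auxiliary posterior by the surrogate attack model's prediction head, i.e.\ $q_\varphi(s_{ij}=1\mid z'_i,z'_j)=\delta_\varphi(z'_i,z'_j)$, to obtain
\begin{equation}
\begin{aligned}
I(z'_i,z'_j;s_{ij}) \le{}& \mathbb{E}_{p(z'_i,z'_j,s_{ij})}\bigl[\log q_\varphi(s_{ij}\mid z'_i,z'_j)\bigr]\\
&- \mathbb{E}_{p(z'_i,z'_j)p(s_{ij})}\bigl[\log q_\varphi(s_{ij}\mid z'_i,z'_j)\bigr]=:I_{\textit{vUB}}.
\end{aligned}
\end{equation}

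Next I would connect $I_{\textit{vUB}}$ to $\mathcal{L}_{priv}$. Restricting the first (conditional) term to sensitive pairs, where $s_{ij}=1$, gives $\mathbb{E}_{\langle v_i,v_j\rangle\in E_s}[\log\delta_\varphi(z'_i,z'_j)]$. Because $\mathcal{L}_{priv}$ in Eq.~\ref{eq:embedding_dis} equals $-\mathbb{E}_{\langle v_i,v_j\rangle\in E_s}\log\bigl(1-\delta_\varphi(z'_i,z'_j)\bigr)$, minimizing $\mathcal{L}_{priv}$ over $\theta$ drives $\delta_\varphi(z'_i,z'_j)$ toward $0$ on sensitive pairs, thereby decreasing $\log\delta_\varphi(z'_i,z'_j)$ and hence the conditional term of $I_{\textit{vUB}}$. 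Intuitively, this forces the embeddings of sensitive pairs to coincide in distribution with those of genuine nonexistent pairs (for which the converged attack model already outputs $\delta_\varphi\approx 0$), so that $s_{ij}$ becomes statistically independent of $(z'_i,z'_j)$ and the mutual information vanishes. I would then argue that the second (marginal) term is, to first order, invariant under the $\theta$-update, so that the reduction of the conditional term carries over to the bound itself, concluding $\min_\theta\mathcal{L}_{priv}\Leftrightarrow\min_\theta I_{\textit{vUB}}$ and therefore the new privacy goal $\min_\theta I(Z';E_s)$.

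The main obstacle is establishing that $I_{\textit{vUB}}$ is genuinely an upper bound and that its minimization is faithful. CLUB is an upper bound only when the auxiliary posterior matches the true posterior $p(s_{ij}\mid z'_i,z'_j)$; I would discharge this by invoking the optimality of the surrogate attack model under SITP, where $\varphi$ is retrained to minimize $\mathcal{L}_{attack}$ and thus---by the same tightness argument used after Eq.~\ref{eq:mutual_info15}---makes $q_\varphi$ a tight variational approximation of the true posterior. The second delicate point is ensuring the marginal term does not grow so as to cancel the reduction of the conditional term; I would control this by noting that the negative-sampled expectation depends on $\theta$ only through the marginal embedding distribution and arguing (or assuming, in the spirit of the regularity assumptions of Proposition~\ref{co:convergence1}) that it stays bounded while the sensitive-pair conditional term is driven to its minimum. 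Handling the sign subtlety---that $\mathcal{L}_{priv}$ trains sensitive links, which are genuinely links, against the label $0$---is precisely what realizes the indistinguishability mechanism above, and making that step rigorous is the crux of the proof.
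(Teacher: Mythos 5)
Your proposal follows the same skeleton as the paper's own proof: reduce $I(Z';E_s)$ to the per-pair quantity $I(w^s_{ij};z'_i,z'_j)$, bound it from above by the CLUB-type variational bound of~\cite{cheng2020club} with the attack model's prediction head as the auxiliary posterior (the paper's $I_{\textit{vCLUB}}$), and then identify minimizing the conditional term on sensitive pairs with $\mathcal{L}_{priv}$ via the directional argument that both drive $\delta_\varphi(z'_i,z'_j)$ toward $0$. The two points you flag as delicate---dropping the marginal term and the sign mismatch between $\log\delta_\varphi$ and $-\log(1-\delta_\varphi)$---are treated just as loosely in the paper, so they are not gaps relative to it.

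There is, however, one concrete flaw in how you discharge the bound's validity condition. CLUB upper-bounds $I(Z';E_s)$ only when $q_\varphi$ approximates (in the KL sense the paper states) the true posterior $p(w^s_{ij}\mid z'_i,z'_j)$ of the \emph{sensitive-link indicator}. You claim this follows from the optimality of $\varphi$ under $\mathcal{L}_{attack}$, ``by the same tightness argument used after Eq.~\ref{eq:mutual_info15}.'' But that argument makes $q_\varphi$ tight for $p(w'_{ij}\mid z'_i,z'_j)$, the posterior of the \emph{published-graph edge weight} over the sampled pairs $E_{p}'\cup E_{n}'$; it says nothing about the conditional law of the sensitive-link indicator, which is a different random variable (indeed $E_s\cap E'=\varnothing$, so sensitive pairs never appear as positives in $\mathcal{L}_{attack}$). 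The paper avoids this conflation by building the tightness requirement into the derived objective itself: Eq.~\ref{eq:mutual_info27} shows that the required KL minimization amounts to $\max_\varphi \mathbb{E}_{p(z'_i,z'_j,w^s_{ij})}\log q_\varphi(w^s_{ij}\mid z'_i,z'_j)$, i.e., the adversary is (in the theory) trained on the sensitive pairs themselves, which produces the min--max objective of Eq.~\ref{eq:mutual_info28} that is then identified with $\mathcal{L}_{priv}$. To repair your proof, either follow that route, or state explicitly the distributional assumption underlying the whole framework---that sensitive pairs are statistically indistinguishable from existing-link pairs (cf.\ Table~\ref{tbl:structure_proximity})---so that the $\mathcal{L}_{attack}$-optimal posterior is simultaneously a valid auxiliary posterior for the sensitive-link variable.
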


\begin{proof}
We define $w^s_{ij} \in \{0,1\}$ as the adjacency matrix of sensitive links, where $w^s_{ij} = 1$ if $\langle v_i,v_j\rangle\in E_s$ and $w^s_{ij} = 0$ otherwise.
The new privacy goal (Eq.~\ref{eq:mutual_info2_p}) can be specified as $\min_\theta I(w_{ij}^{s};z'_i,z'_j)$, and we derive the vCLUB inspired by \cite{cheng2020club} as follows:
\begin{equation}
\begin{aligned}
    &I(E_s;Z')\\
    =& I(w_{ij}^{s};z'_i,z'_j) \\
    \le &I_{\textit{vCLUB}}(w_{ij}^{s};z'_i,z'_j) \\
    = &\mathbb E_{p(z'_i,z'_j,w_{ij}^{s})}\log q_\varphi (w_{ij}^{s}|z'_i,z'_j) \\
    &\quad - \mathbb E_{p(z'_i,z'_j)p(w_{ij}^{s})}\log q_\varphi (w_{ij}^{s}|z'_i,z'_j)\\
\end{aligned}
\end{equation}
where $q_\varphi (w_{ij}^{s}|z'_i,z'_j)$ is an auxiliary distribution of $p(w_{ij}^{s}|z'_i,z'_j)$ that needs to satisfy the following condition:
\begin{equation}
\begin{aligned}
    &\textit{KL} \left(p(z'_i,z'_j,w_{ij}^{s}) || q_\varphi(z'_i,z'_j,w_{ij}^{s}) \right ) \le \\
    &\quad\quad\quad\quad \textit{KL} \left(p(z'_i,z'_j)p(w_{ij}^{s}) || q_\varphi(z'_i,z'_j,w_{ij}^{s}) \right )
\end{aligned}
\end{equation}
That is, $I_{\textit{vCLUB}}$ is a mutual information upper bound if the variational joint distribution $q_\varphi(z'_i,z'_j,w_{ij}^{s})$ is closer to the joint distribution $p(z'_i,z'_j,w_{ij}^{s})$ than to $p(z'_i,z'_j)p(w_{ij}^{s})$.

To achieve the above inequation, we need to minimize the KL-divergence as follows:
\begin{equation}
\begin{aligned}
    &\min_\varphi \textit{KL} \left(p(z'_i,z'_j,w_{ij}^{s}) || q_\varphi(z'_i,z'_j,w_{ij}^{s}) \right ) \\
    =&\min_\varphi \textit{KL} \left(p(w_{ij}^{s}|z'_i,z'_j) || q_\varphi(w_{ij}^{s}|z'_i,z'_j) \right ) \\
    =&\min_\varphi \mathbb E_{p(z'_i,z'_j,w_{ij}^{s})}\log p (w_{ij}^{s}|z'_i,z'_j) \\
    &\quad\quad\quad\quad - \mathbb E_{p(z'_i,z'_j,w_{ij}^{s})}\log q_\varphi (w_{ij}^{s}|z'_i,z'_j)\\
    \Leftrightarrow & \max_\varphi \mathbb E_{p(z'_i,z'_j,w_{ij}^{s})}\log q_\varphi (w_{ij}^{s}|z'_i,z'_j)
\end{aligned}
\label{eq:mutual_info27}
\end{equation}

Finally, our target to achieve Eq.~\ref{eq:mutual_info2_p} becomes the following adversarial training objective:
\begin{equation}
\begin{aligned}
    &\min_\theta\min_\varphi I_{\textit{vCLUB}}(w_{ij}^{s};z'_i,z'_j)\\
    \Leftrightarrow &\min_\theta\max_\varphi \mathbb E_{p(z'_i,z'_j,w_{ij}^{s})}\log q_\varphi (w_{ij}^{s}|z'_i,z'_j) \\
    \approx &  \min_{\theta} \max_{\varphi} \sum_{\langle i,j\rangle\in E_{s}} -\textit{CE}\, (\delta_\varphi (z'_i,z'_j),w^{s}_{ij})\\
    \approx &  \min_{\theta} \max_{\varphi} \sum_{\langle i,j\rangle\in E_{s}} -\textit{CE}\, (\delta_\varphi (z'_i,z'_j),1)
\end{aligned}
\label{eq:mutual_info28}
\end{equation}
where $\textit{CE}\,(\cdot, \cdot)$ denotes the cross-entropy function. Eq.~\ref{eq:mutual_info28} draws away the sensitive node pair embeddings, which is equivalent to our training objective $\mathcal L_{priv}$.
Therefore, by taking $\mathcal L_{priv}$ as an objective function and training the parameterized neural networks, we can achieve the new privacy goal Eq.~\ref{eq:mutual_info2_p}.
\end{proof}

\begin{lemma}
\label{the:mutual_info1}
In every update step of \(\mathcal H_\theta\), if \(f_\phi\) is sufficiently retrained to convergence, then optimizing $\theta$ to minimize \(I(Z';E_s)\) results in a decrease in \(I(G';E_s)\).
\end{lemma}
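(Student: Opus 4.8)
The plan is to exploit the fact that $Z' = f_\phi(G')$ is a deterministic function of $G'$, which yields the Markov chain $E_s \to G' \to Z'$. First I would invoke the chain rule of mutual information: expanding the joint term $I(G', Z'; E_s)$ in two ways and using $I(Z'; E_s \mid G') = 0$ (since $Z'$ is fully determined by $G'$) gives the exact decomposition
\begin{equation}
I(G'; E_s) = I(Z'; E_s) + I(G'; E_s \mid Z'),
\end{equation}
where the nonnegative residual $I(G'; E_s \mid Z')$ is precisely the information about the sensitive links retained in $G'$ but discarded by the embedding.

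Next I would bound this residual by the conditional entropy, $I(G'; E_s \mid Z') \le H(G' \mid Z') = H(G') - I(G'; Z')$. The crucial observation is that, by Lemma~\ref{le:intermediate_goal}, retraining $f_\phi$ to convergence maximizes $I(G'; Z')$, which minimizes $H(G') - I(G'; Z')$ and therefore squeezes the residual toward zero; in the limiting case where the converged embedding is a sufficient (lossless) representation of $G'$, the residual vanishes and $I(Z'; E_s) = I(G'; E_s)$. Under this condition the two mutual-information quantities move together, so any decrease in $I(Z'; E_s)$ obtained by optimizing $\theta$---the new privacy goal of Eq.~\ref{eq:mutual_info2_p}, which $\mathcal{L}_{priv}$ realizes via Lemma~\ref{le:privacy_goal}---forces a corresponding decrease in $I(G'; E_s)$.

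The hard part will be making the second step rigorous: decreasing $I(Z'; E_s)$ lowers $I(G'; E_s)$ only if the residual gap $I(G'; E_s \mid Z')$ does not itself grow as $\theta$ changes. This is exactly why the hypothesis that $f_\phi$ is retrained to convergence at \emph{every} $\theta$-update is indispensable---it keeps the gap pinned near its minimum uniformly along the optimization trajectory, so the approximate identity $I(Z'; E_s) \approx I(G'; E_s)$ holds throughout rather than merely at an isolated point.
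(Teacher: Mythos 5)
Your proposal is sound and, at its core, follows the same strategy as the paper's proof---relate $I(G';E_s)$ to $I(Z';E_s)$ plus a residual term, then use the retrain-to-convergence hypothesis (via Lemma~\ref{le:intermediate_goal}) to argue the residual stays controlled while $\mathcal{L}_{priv}$ drives $I(Z';E_s)$ down---but your execution is cleaner in one important respect. By observing the Markov chain $E_s \to G' \to Z'$, you get the exact identity $I(G';E_s) = I(Z';E_s) + I(G';E_s\mid Z')$ in one line. The paper does not exploit this structure: it decomposes $I(Z';E_s) = I(Z';E_s\mid G') + I(Z';E_s;G')$ and spends several steps on a trajectory-style contradiction argument about whether $I(Z';E_s\mid G')$ can keep decreasing, when in fact your Markov observation shows that term is identically zero (so the interaction information $I(Z';E_s;G')$ simply equals $I(Z';E_s)$, and the paper's final decomposition $I(E_s;G') = I(G';E_s\mid Z') + I(Z';E_s;G')$ collapses to exactly your identity). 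Where the two arguments genuinely differ is in how the residual $I(G';E_s\mid Z')$ is tamed. The paper assumes only that the \emph{maximum} of $I(G';Z')$ is the same constant at every iteration (justified by GNN expressive limits) and then implicitly takes $I(G';E_s\mid Z')$ to be non-increasing. You instead bound the residual by $H(G'\mid Z') = H(G') - I(G';Z')$ and argue it is squeezed toward zero; but this conclusion needs the converged embedding to be a (near-)lossless representation of $G'$, a strictly stronger idealization than the paper's, and one that is dubious for a GNN encoder compressing an $N\times N$ structure into $N\times d$ embeddings. Without losslessness, your argument only yields $I(G';E_s) \le I(Z';E_s) + C$ for a fixed constant $C = H(G') - \max_\phi I(G';Z') > 0$, i.e., a decreasing \emph{upper bound} rather than a decrease of $I(G';E_s)$ itself. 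To be fair, this is the same soft spot the paper's proof has (its non-increase of $I(G';E_s\mid Z')$ is never justified either); your write-up is simply more explicit about where the gap lies, which is a virtue, but if you want to match the paper's level of claimed rigor you should replace the losslessness idealization with the weaker assumption the paper uses and state the non-growth of the residual as an explicit hypothesis.
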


\begin{proof}
In each update step of the graph learner, we fully retrain \(f_\phi\) until convergence. According to Lemma~\ref{le:intermediate_goal}, training \(f_\phi\) with the objective function \(\mathcal{L}_{attack}\) will achieve the intermediate goal (Eq.~\ref{eq:mutual_info2_p}), \textit{i.e.}, \(I(G';Z')\) reaches its maximum value. Owing to the expressive limitations of GNN~\cite{xu2018how}, we assume that the maximum value of \(I(G';Z')\) remains unchanged.
We prove that if \(I(G';Z')\) does not change and \(I(Z';E_s)\) decreases, then \(I(G';E_s)\) will also decrease.

Let \(H(\cdot)\) denote the information entropy of a random variable. Given that the variability space of the random variables \(G'\), \(E_s\), and \(Z'\) is fixed, we assume that \(H(G')\), \(H(E_s)\), and \(H(Z')\) remain constant.
From the relationship of mutual information, we have the following:
\begin{equation}
I(Z';E_s) = I(Z';E_s|G') + I(Z';E_s;G')
\end{equation}

Assuming that \(I(Z';E_s;G')\) either increases or remains unchanged, we can deduce that \(I(Z';E_s|G')\) must decrease. Since \(I(Z';E_s|G')\) is bounded by \(0 \leq I(Z';E_s|G') \leq H(Z'|G')\), it cannot decrease indefinitely. After sufficient iterations of optimization, \(I(Z';E_s|G')\) approaches zero and cannot decrease further. At this point, the assumption that \(I(Z';E_s|G')\) decreases becomes invalid, leading to the conclusion that \(I(Z';E_s;G')\) must decrease.

During the overall optimization process, since \(I(G';Z')\) remains unchanged and \(I(Z';E_s)\) decreases, we conclude that \(I(Z';E_s;G')\) decreases. Consequently, we find that $I(E_s;G') = I(G';E_s|Z') + I(Z';E_s;G')$ decreases, thus confirming that the derivation holds.
\end{proof}

\begin{proposition}
\label{co:fano}
In PPGSL training process, the mutual information $I(G';E_s)$ decreases.
\end{proposition}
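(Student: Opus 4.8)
The plan is to assemble Proposition~\ref{co:fano} as a direct synthesis of the three preceding lemmas, using the structure of the SITP training loop to verify that their hypotheses hold at every iteration. First I would recall that the privacy goal $\min_\theta I(G';E_s)$ has been decomposed, through the embedding bridge $Z'=f_\phi(G')$, into the embedding goal $\max_\phi I(G';Z')$ (Eq.~\ref{eq:mutual_info2_i}) and the new privacy goal $\min_\theta I(Z';E_s)$ (Eq.~\ref{eq:mutual_info2_p}). The proof then tracks what happens to these two mutual-information quantities across a single SITP iteration, one half-step at a time.

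For the $\phi$-update half-step I would appeal to Lemma~\ref{le:intermediate_goal}: since SITP reinitializes $\phi$ and optimizes it to minimize $\mathcal{L}_{attack}$ to convergence with $\theta$ held fixed, this is equivalent to maximizing the variational lower bound on $I(G';Z')$, so $I(G';Z')$ attains its maximal value. Invoking the expressivity ceiling of the GNN encoder, as already used in Lemma~\ref{the:mutual_info1}, I would treat this maximum as a fixed constant across iterations; this is the key structural assumption that lets me regard $I(G';Z')$ as unchanging during the learner updates. For the $\theta$-update half-step, Lemma~\ref{le:privacy_goal} establishes that performing the gradient step on $\mathcal{L}_{priv}$ is equivalent to decreasing $I(Z';E_s)$.

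With both half-steps characterized, the conclusion follows immediately from Lemma~\ref{the:mutual_info1}: because $f_\phi$ has been retrained to convergence so that $I(G';Z')$ is pinned at its maximum, and because the $\theta$-step decreases $I(Z';E_s)$, the chain-rule decomposition $I(Z';E_s)=I(Z';E_s\mid G')+I(Z';E_s;G')$ forces the interaction term $I(Z';E_s;G')$ to decrease, which in turn drives $I(G';E_s)=I(G';E_s\mid Z')+I(Z';E_s;G')$ downward. Hence $I(G';E_s)$ decreases over the PPGSL training process, which is exactly the claim.

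The main obstacle I anticipate is rigorously justifying the hypothesis of Lemma~\ref{the:mutual_info1} that $f_\phi$ is \emph{sufficiently retrained to convergence} at each step and that its maximal $I(G';Z')$ is genuinely constant across iterations. Strictly speaking, retraining $f_\phi$ on a \emph{different} learned graph $G'$ at each iteration could in principle yield a different attainable maximum of $I(G';Z')$, and the argument leans entirely on the expressivity-saturation assumption to neutralize this drift. I would flag this assumption explicitly and observe that SITP's design---reinitializing and fully optimizing $\phi$ before every $\theta$-update---is precisely the mechanism that operationally supplies the convergence premise the lemma requires, so that the premises of Lemma~\ref{the:mutual_info1} are met at each iteration of the training loop.
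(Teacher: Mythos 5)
Your proposal is correct and follows essentially the same route as the paper's own proof: it assembles Lemmas~\ref{le:intermediate_goal}, \ref{le:privacy_goal}, and~\ref{the:mutual_info1} in exactly the way the paper does, with the $\phi$-step maximizing $I(G';Z')$, the $\theta$-step decreasing $I(Z';E_s)$, and Lemma~\ref{the:mutual_info1} converting these into a decrease of $I(G';E_s)$. Your explicit flagging of the expressivity-saturation assumption (that the attainable maximum of $I(G';Z')$ stays constant across iterations) is a fair observation, and it is indeed the same assumption the paper makes inside the proof of Lemma~\ref{the:mutual_info1} rather than an extra hypothesis you introduced.
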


\begin{proof}
From Lemmas~\ref{le:intermediate_goal} and~\ref{le:privacy_goal}, training the objective functions in the PPGSL increases $I(G';Z')$ and decreases $I(Z';E_s)$.
According to Lemma~\ref{the:mutual_info1}, increasing \( I(G'; Z') \) and decreasing \( I(Z'; E_s) \) together lead to a decrease in \(I(G'; E_s) \) in the PPGSL training process.
\end{proof}

\subsection{Guidance on Trade-off Parameter Selection}
\label{sec:alpha_Selection}
Empirically, the trade-off parameter $\alpha$ is typically chosen within the range of [0.001, 0.01]. The appropriate value of $\alpha$ tends to be smaller for larger graphs (\textit{i.e.,} graphs with more nodes) to achieve a similar level of privacy protection. This is because in larger graphs, the adjacency matrix has more entries. Consequently, for the same proportional perturbation, the Frobenius norm of the difference in adjacency matrices, and thus the utility loss $\mathcal{L}_{util}$, will be larger. A smaller $\alpha$ is therefore needed to balance $\mathcal{L}_{util}$ with the privacy loss term $\mathcal{L}_{priv}$. Moreover, to adjust the balance between utility and privacy, if a higher level of utility is desired, $\alpha$ can be appropriately increased; if a stronger privacy protection effect is needed, $\alpha$ can be appropriately decreased.

\subsection{Comparison of Different Training Protocols}
\label{sec:comparison_sitp_appendix}

\begin{figure}[t]
\centering

\subfloat[\footnotesize{Graph learner training loss under SITP.}\label{fig:compare_sitp_learner}]
{
\includegraphics[width=0.45\linewidth]{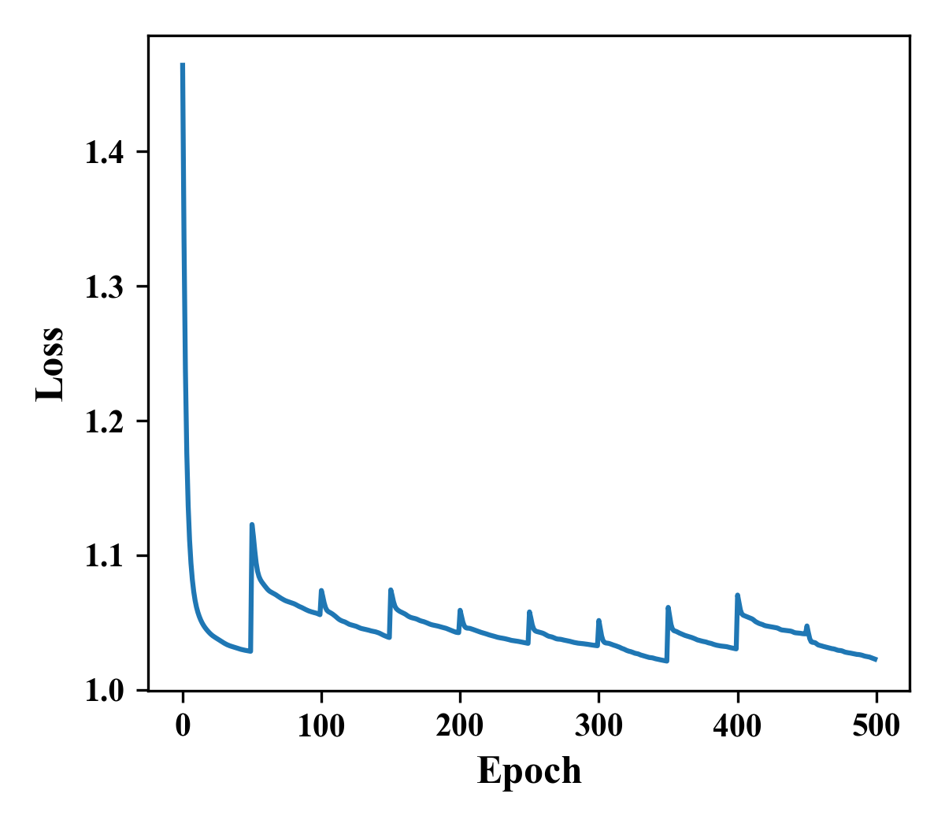}
}
\
\subfloat[\footnotesize{Graph learner training loss under ADV.}\label{fig:compare_gan_learner}]
{
\includegraphics[width=0.45\linewidth]{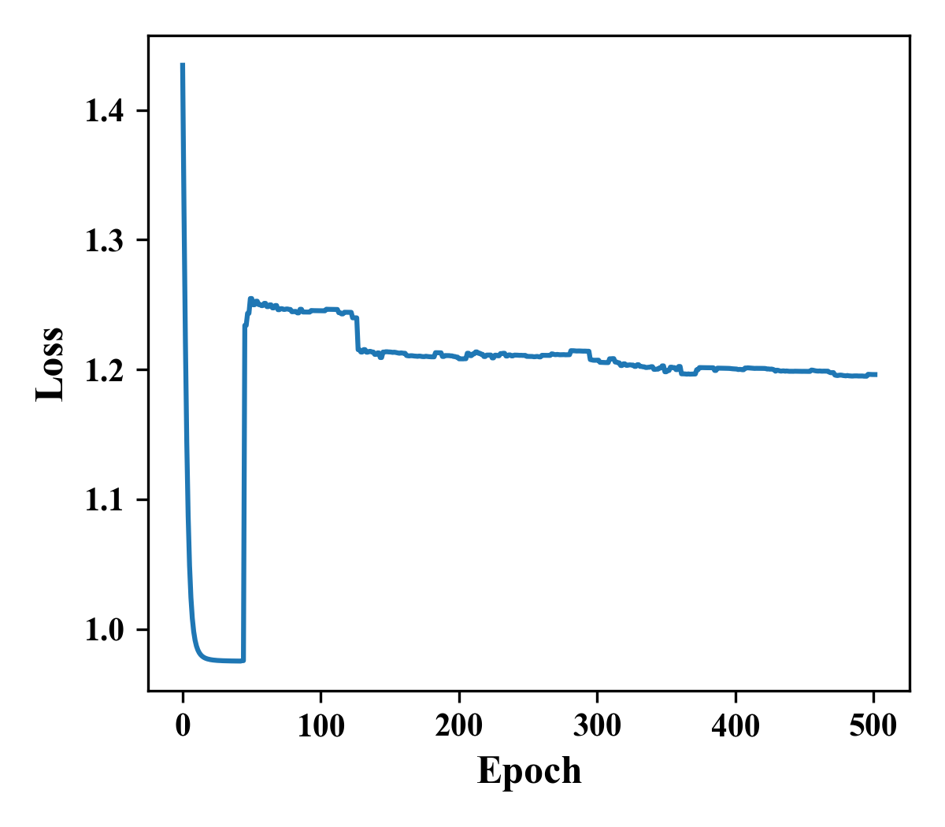}
}
\vspace{-0.8em}

\subfloat[\footnotesize{Surrogate attack model training loss under SITP.}\label{fig:compare_sitp_attack}]
{
\includegraphics[width=0.45\linewidth]{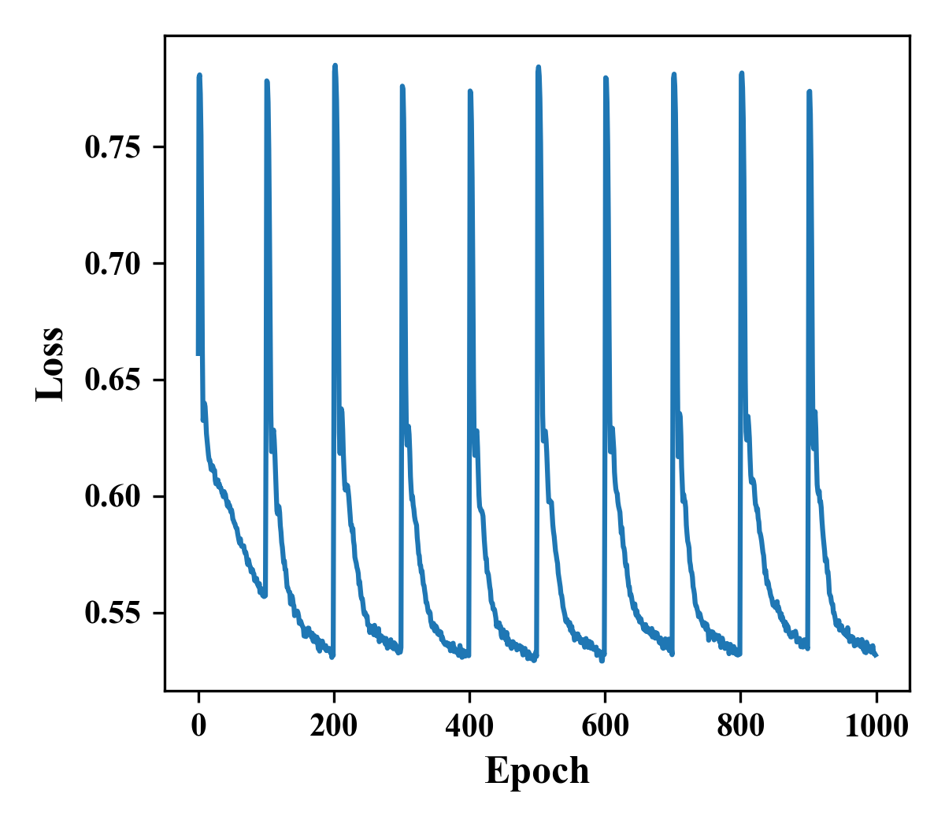}
}
\
\subfloat[\footnotesize{Surrogate attack model training loss under ADV.}\label{fig:compare_gan_attack}]
{
\includegraphics[width=0.45\linewidth]{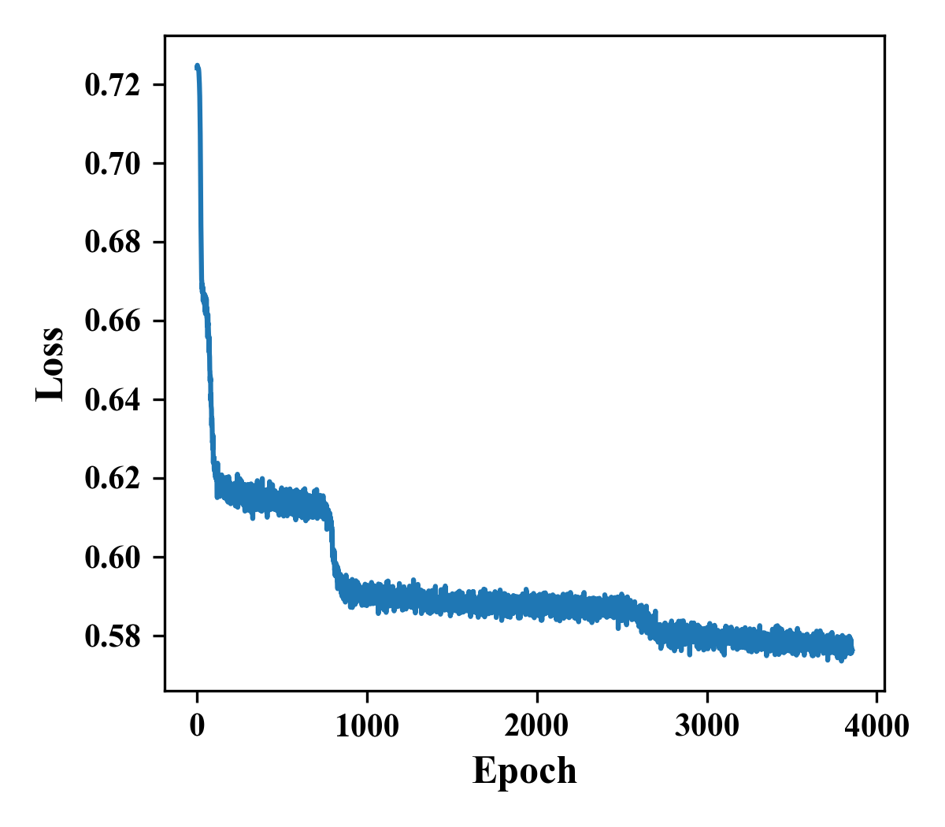}
}
\caption{Comparison of training robustness across different training protocols.
SITP: our proposed secure iterative training protocol, with a surrogate attack model update interval of \(\mu=50\); ADV: typical adversarial training protocols, such as AdvReg~\cite{nasr2018machine}.}
\label{fig:compare_gan_appendix}
\vspace{-1em}
\end{figure}

We compare the training robustness between our proposed SITP with an update interval $\mu=50$ and typical adversarial training protocols similar to AdvReg~\cite{nasr2018machine} (denoted ADV).
From Fig.~\ref{fig:compare_gan_appendix}, we can conclude that SITP outperforms ADV in terms of convergence and training efficiency.
\begin{itemize}[leftmargin=5.5mm]
    \item Convergence: Under SITP, the graph learner's training loss generally decreases over time, with minor jumps at multiples of 50 epochs due to updates in the surrogate attack model. In contrast, under ADV, the graph learner's training loss does not consistently decrease, which may converge to a local rather than a global optimum (compare Fig.~\ref{fig:compare_sitp_learner} and~\ref{fig:compare_gan_learner}).
    \item Training efficiency: With SITP, when the graph learner updates for 500 epochs, the surrogate attack model only requires 1,000 total updates (due to its faster convergence, where retraining does not significantly increase the computation time). In contrast, ADV needs approximately 4,000 updates of the surrogate attack model for the same number of graph learner updates, resulting in higher training overhead (compare Fig.~\ref{fig:compare_sitp_attack} and~\ref{fig:compare_gan_attack}).
\end{itemize}

\subsection{Robustness across Diverse Surrogate Models}
\label{sec:surrogate_robust}

Most link prediction attacks operate on a common principle: they all exploit feature similarity and structural similarity (embedding similarity also results from feature and structural similarity). This shared foundation suggests that a defense trained against one type of surrogate model should be robust against others.
To verify this hypothesis, we evaluate our defense against surrogate models built with three distinct prediction heads: a cosine similarity-based predictor (COS), an inner product-based predictor (IP), and a multilayer perceptron-based predictor (MLP). In our experiment, the defender trains the defense graph via a surrogate model with one specific predictor. The attacker, however, is free to use a model with any of the three predictors, creating both matched (defender and attacker use the same model) and mismatched scenarios.
As shown in Fig.~\ref{fig:results_parameter_sim}, our defense remains effective even in mismatched scenarios, with only slight performance degradation compared with when the models are matched.

\begin{figure}[t]
    \centering
    \includegraphics[width=\linewidth]{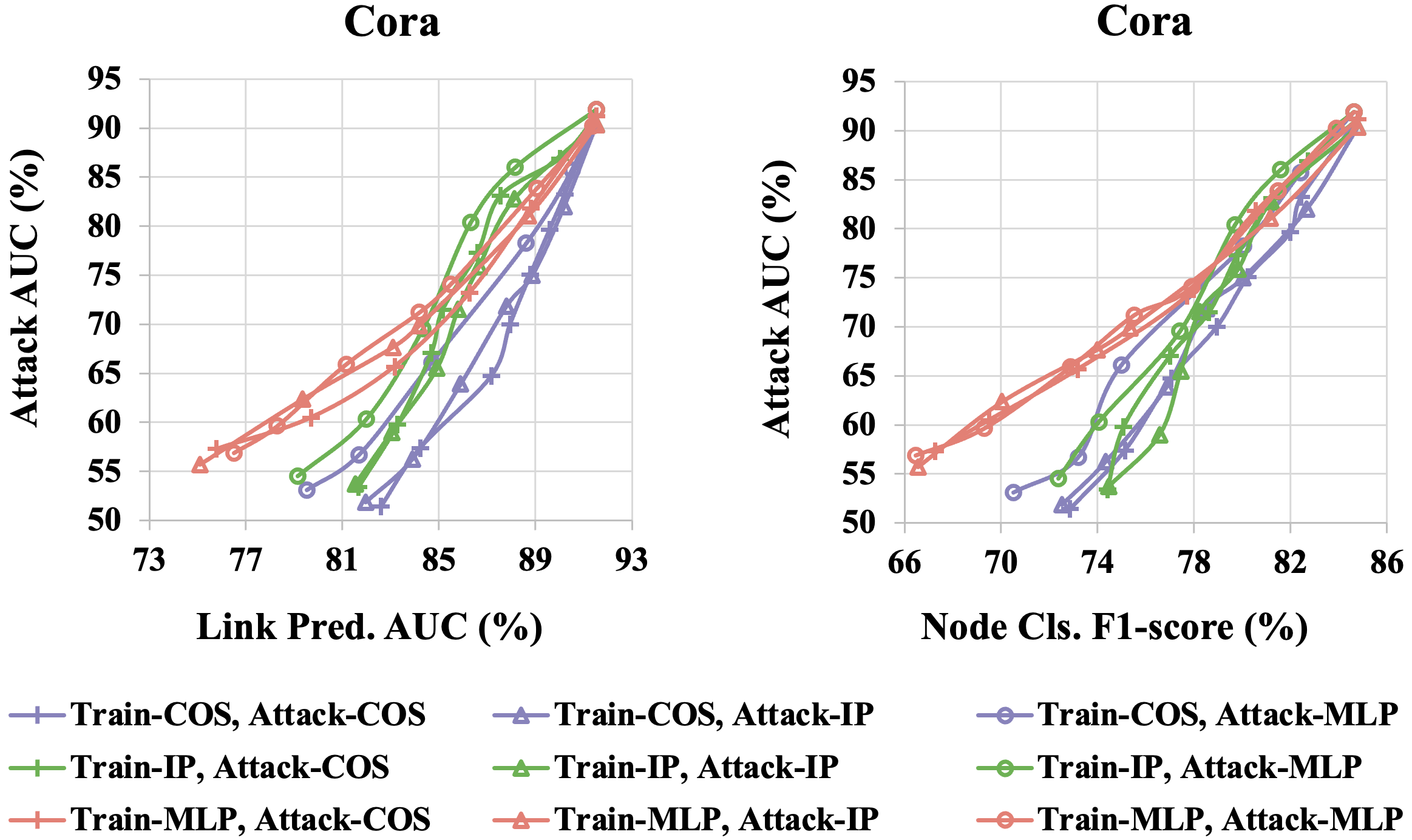}
    \vspace{-1em}
    \caption{Robustness across diverse surrogate models on \textit{Cora}. COS: cosine similarity-based predictor, IP: inner production-based predictor, MLP: multilayer perceptron-based predictor.}
    \label{fig:results_parameter_sim}
    \vspace{-1em}
\end{figure}

\end{document}